\newif\ifdraft\draftfalse
\newcommand{\anthony}[1]{\color{red} {AL: #1 :LA} \color{black}}
\newcommand{\zhilin}[1]{\color{brown} {ZL: #1 :LZ} \color{black}}
\newcommand{\tl}[1]{\color{blue} {TL: #1 :LT} \color{black}}
\newcommand{\mat}[1]{\color{cyan} {MH: #1 :HM} \color{black}}
\newcommand{\philipp}[1]{\color{magenta} {PR: #1 :PR} \color{black}}
\newcommand{\zhilei}[1]{\color{violet} {ZLH: #1 :HZL} \color{black}}
\newcommand{\anthony}[1]{}
\newcommand{\zhilin}[1]{}
\newcommand{\tl}[1]{}
\newcommand{\mat}[1]{}
\newcommand{\zhilei}[1]{}
\newif\ifproceeding\proceedingtrue
\newcommand{\set}[1]{\{ #1 \}}
\newcommand{\Nat}{\ensuremath{\mathbb{N}}}
\newcommand{\Int}{\ensuremath{\mathbb{Z}}}
\newcommand{\eqdef}{\stackrel{\mbox{\begin{tiny}def\end{tiny}}}{=}} 
\newcommand {\pspace} {\textsc{pspace}}
\newcommand{\cut}[1]{}
\mathchardef\mhyphen="2D 
\newcommand{\hide}[1]{}
\newcommand\dom{\mathsf{dom}}
\newcommand\nat{\mathbb{N}}
\newcommand\cA{\mathcal{A}}
\newcommand\cB{\mathcal{B}}
\newcommand\cC{\mathcal{C}}
\newcommand\cE{\mathcal{E}}
\newcommand\cP{\mathcal{P}}
\newcommand\cR{\mathcal{R}}
\newcommand\cT{\mathcal{T}}
\newcommand\replace{\mathsf{replace}}
\newcommand\replaceall{\mathsf{replaceAll}}
\newcommand\extract{\mathsf{extract}}
\newcommand\strline{\mathsf{STR}}
\newcommand\strlinesl{\mathsf{STR}_{\mathsf{SL}}}
\newcommand{\ASSERT}[1]{\textsf{assert}\left(#1\right)}
\newcommand{\concat}{\cdot}
\newcommand{\FA}{FA}
\newcommand{\ialphabet}{\Sigma}
\newcommand{\Lang}{\mathscr{L}}
\newcommand{\controls}{\ensuremath{Q}}
\newcommand{\finals}{\ensuremath{F}}
\newcommand{\transrel}{\ensuremath{\delta}}
\newcommand{\defn}[1]{\emph{#1}}
\newcommand\bigO{\mathcal{O}}
\newcommand\Aut{\mathcal{A}}
\newcommand{\tup}[1]{\left( #1 \right)}
\newcommand\ap[2]{{#1}\mathord{\brac{#2}}}
\newcommand{\opset}{\mathscr{O}}
\newcommand\brac[1]{\left(#1\right)}
\newcommand\lang[1]{\mathcal{L}\mathord{\brac{#1}}}
\newtheorem{remark}{Remark}
\newcommand{\regexp} {{RegEx}}
\newcommand{\regexps} {{RegExes}}
\newcommand{\pcre} {RegEx}
\newcommand{\tmtextit}[1]{{\itshape{#1}}}
\newcommand\PSST{{\sf PSST}}
\newcommand\refexp{{\sf REP}}
\newcommand\pnfa{\mathcal{A}}
\newcommand\psst{\mathcal{T}}
\newcommand\pat{\mathsf{pat}}
\newcommand\rep{\mathsf{rep}}
\newcommand\refbefore{\$^{\leftarrow}}
\newcommand\refafter{\$^{\rightarrow}}
\newcommand\nullchar{\mathsf{null}}
\newcommand\ostrich{OSTRICH}
\newcommand\expose{ExpoSE}
\newcommand{\OMIT}[1]{}
\newcommand{\seq}[1]{\ensuremath{#1}}
\newcommand{\seqq}[1]{\seq{\Gamma\ifx#1\relax\else,#1\fi}}
\newcommand{\ruleName}[1]{\textsc{#1}}
\newcommand{\infer}[3][]{%
  \AxiomC{$#3$}
  \ifx#1\relax\else\LeftLabel{\ruleName{#1}}\fi
  \UnaryInfC{$#2$}
  \DisplayProof
}
\newcommand{\inferC}[4][]{%
  \AxiomC{$#4$}
  \RightLabel{$~~#2$}
  \ifx#1\relax\else\LeftLabel{\ruleName{#1}}\fi
  \UnaryInfC{$#3$}
  \DisplayProof
}
\newcommand{\inferii}[4][]{%
  \AxiomC{$#3$}
  \AxiomC{$#4$}
  \ifx#1\relax\else\LeftLabel{\ruleName{#1}}\fi
  \BinaryInfC{$#2$}
  \DisplayProof
}
\begin{document}

\title[Solving String Constraints With Regex-Dependent Functions]{Solving String Constraints With Regex-Dependent Functions Through Transducers With Priorities And Variables (Technical Report)}
%



\author[T. Chen]{Taolue Chen}
\orcid{0000-0002-5993-1665}
\affiliation{
	\department{Department of Computer Science}              
	\institution{Birkbeck, University of London}            
	\streetaddress{Malet Street}
	\city{London}
	\country{United Kingdom}
}
\email{t.chen@bbk.ac.uk}          


\author[A. Flores-Lamas]{Alejandro Flores-Lamas}
\affiliation{
	\department{Department of Computer Science} 
	\institution{Royal Holloway, University of London} 
	\streetaddress{Egham Hill}
	\city{Egham}
	\state{Surrey}
	\postcode{TW20 0EX}
	\country{United Kingdom}
}
\email{ Alejandro.Flores-Lamas@rhul.ac.uk}


\author[M. Hague]{Matthew Hague}
\orcid{0000-0003-4913-3800}
\affiliation{
	\department{Department of Computer Science} 
	\institution{Royal Holloway, University of London} 
	\streetaddress{Egham Hill}
	\city{Egham}
	\state{Surrey}
	\postcode{TW20 0EX}
	\country{United Kingdom}
}
\email{matthew.hague@rhul.ac.uk}          

\author[Z. Han]{Zhilei Han}
\orcid{0000-0001-9171-4997}
\affiliation{
	\department{School of Software}
	\institution{Tsinghua University}
	\country{China}
}
\email{hzl21@mails.tsinghua.edu.cn}


\author[D. Hu]{Denghang Hu}
\affiliation{
	\department{State Key Laboratory of Computer Science} 
	\institution{Institute of Software, Chinese Academy of Sciences  \& University of Chinese Academy of Sciences} 
	\country{China}
}
\email{hudenghang15@mails.ucas.ac.cn}


\author[S. Kan]{Shuanglong Kan}
\affiliation{
\institution{University of Kaiserslautern}
\city{Kaiserslautern}
\country{Germany}
}
\email{shuanglong@cs.uni-kl.de}


\author[A. W. Lin]{Anthony W. Lin}
\orcid{0000-0003-4715-5096}
\affiliation{
    \institution{University of Kaiserslautern \& Max-Planck Institute}            
	\city{Kaiserslautern}
	\country{Germany}
}
\email{lin@cs.uni-kl.de}          


\author[P. R\"ummer]{Philipp R\"ummer}
\orcid{0000-0002-2733-7098}
\affiliation{
	\department{Department of Information Technology}              
	\institution{Uppsala University}            
	\streetaddress{Box 337}
	\city{Uppsala}
	\postcode{SE-751 05}
	\country{Sweden}
}
\email{philipp.ruemmer@it.uu.se}     


\author[Z. Wu]{Zhilin Wu}
\orcid{0000-0003-0899-628X}
\affiliation{
	\department{State Key Laboratory of Computer Science} 
	\institution{Institute of Software, Chinese Academy of Sciences \& University of Chinese Academy of Sciences} 
	\country{China}
}
\email{wuzl@ios.ac.cn }  

\begin{abstract}
Regular expressions are a classical concept in formal language theory.
Regular expressions in programming languages ({\regexp}) such as JavaScript, feature non-standard semantics of operators 
(e.g. greedy/lazy Kleene star), as well as additional features such as capturing
groups and references.
While symbolic execution of programs containing \regexps{} appeals to string solvers natively supporting important features of \regexp{}, such a string solver is hitherto missing.  
In this paper, we propose the first string theory and string solver that natively provides such support.
%
The key idea of our string solver is to introduce a new automata model, called \emph{prioritized streaming string transducers} (PSST), to formalize the semantics of \regexp{}-dependent string functions.
PSSTs combine \emph{priorities}, which have previously been introduced
in prioritized finite-state automata to capture greedy/lazy semantics,
with \emph{string variables} as in streaming string transducers to
model capturing groups. We validate the consistency of the formal semantics with the actual JavaScript
semantics by extensive experiments.
%
%
Furthermore, to solve the string constraints, we show that {\PSST}s enjoy nice closure and algorithmic properties, 
in particular, the regularity-preserving property (i.e., pre-images of regular constraints under {\PSST}s are regular), 
and introduce a sound sequent calculus that 
exploits these properties and 
performs  propagation of regular constraints by means of 
taking post-images or pre-images. 
Although the satisfiability of the string constraint language is generally 
undecidable, we show that our approach is complete for the so-called straight-line fragment.
We evaluate the performance of our string solver on over 195\,000 string constraints generated from an open-source {\regexp} library. The experimental results show the efficacy of our approach, drastically improving the existing methods (via symbolic execution) in  both precision and efficiency.

\end{abstract}

 
\begin{CCSXML}
	<ccs2012>
	<concept>
	<concept_id>10003752.10003790.10003794</concept_id>
	<concept_desc>Theory of computation~Automated reasoning</concept_desc>
	<concept_significance>500</concept_significance>
	</concept>
	<concept>
	<concept_id>10003752.10010124.10010138.10010142</concept_id>
	<concept_desc>Theory of computation~Program verification</concept_desc>
	<concept_significance>500</concept_significance>
	</concept>
	<concept>
	<concept_id>10003752.10003766.10003776</concept_id>
	<concept_desc>Theory of computation~Regular languages</concept_desc>
	<concept_significance>500</concept_significance>
	</concept>
	<concept>
	<concept_id>10003752.10003790.10002990</concept_id>
	<concept_desc>Theory of computation~Logic and verification</concept_desc>
	<concept_significance>300</concept_significance>
	</concept>
	<concept>
	<concept_id>10003752.10003777.10003778</concept_id>
	<concept_desc>Theory of computation~Complexity classes</concept_desc>
	<concept_significance>100</concept_significance>
	</concept>
	</ccs2012>
\end{CCSXML}

\ccsdesc[500]{Theory of computation~Automated reasoning}
\ccsdesc[500]{Theory of computation~Program verification}
\ccsdesc[500]{Theory of computation~Regular languages}
\ccsdesc[300]{Theory of computation~Logic and verification}
\ccsdesc[100]{Theory of computation~Complexity classes}


\keywords{String Constraint Solving, Regular Experssions, Transducers, Symbolic Execution}  
\maketitle


%
%
%
%
%


\section{Introduction}\label{sec-intro}


%
In modern programming languages---such as JavaScript, Python, Java, and PHP---the string data type plays a crucial role. 
A quick look at the string libraries for these languages is enough to convince
oneself how well supported string manipulations are in these languages, in that
a wealth of string operations and functions are readily available for the
programmers.
Such operations include usual operators like concatenation, length, substring, 
but also complex functions such as 
match, replace, split, and parseInt.
Unfortunately, it is well-known that string manipulations are error-prone and
could even give rise to
security vulnerabilities (e.g.\ cross-site scripting, a.k.a. XSS).
One powerful method for identifying such bugs in programs is \emph{symbolic 
execution} (possibly in combination with dynamic analysis), which
analyses symbolic paths in a program by viewing them as constraints 
whose feasibility is checked by constraint solvers. 
Together with the challenging problem of string analysis,
this interplay between program analysis and constraint solvers has motivated 
the highly active research area of \emph{string solving}, resulting in the
development of numerous string solvers in the last decade or so including
Z3~\cite{Z3}, CVC4~\cite{cvc4}, Z3-str/2/3/4~\cite{Z3-str,Z3-str2,Z3-str3,BerzishMurphy2021},
 ABC~\cite{ABC}, Norn~\cite{Abdulla14},
Trau~\cite{Z3-trau,AbdullaACDHRR18-trau,Abdulla17}, OSTRICH~\cite{CHL+19}, S2S~\cite{DBLP:conf/aplas/LeH18}, Qzy~\cite{cox2017model}, Stranger~\cite{Stranger}, Sloth~\cite{HJLRV18,AbdullaA+19},
Slog~\cite{fang-yu-circuits}, Slent~\cite{WC+18}, Gecode+S~\cite{DBLP:conf/cpaior/ScottFPS17}, G-Strings~\cite{DBLP:conf/cp/AmadiniGST17}, HAMPI~\cite{HAMPI}, among many others. 

One challenging problem in the development of string solvers is the need
to support an increasing number of real-world string functions, especially because the
initial stage of the development of string solvers typically assumed only simple
functions (in particular, concatenation, regular constraints, and sometimes also
length constraints). For example, the importance of supporting functions like
the replaceAll function (i.e.\ replace with global flag) in a string solver was 
elaborated in~\cite{CCH+18};
ever since, quite a number of string solvers support this operator.
Unfortunately, the gap between the string functions that are supported by 
current string solvers and those supported by modern programming languages 
is still too big. As convincingly argued in~\cite{LMK19} in the context of 
constraint solving, 
the widely used \emph{Regular Expressions} in modern programming
languages
(among others, JavaScript, Python, etc.)---which we call \emph{RegEx} 
in the sequel---are one important and frequently occurring feature in
programs that are difficult for existing SMT theories over 
strings to model and solve, especially because their syntaxes and semantics 
substantially differ from the notion of regular expressions in formal 
language theory~\cite{HU79}. Indeed,
many important string functions in programming languages---such as exec, test,
search, match, replace, and split in JavaScript, as well as match, findall,
search, sub, and split in Python---can and often do exploit RegEx, giving 
rise to path constraints that are difficult (if not impossible) to precisely 
capture in existing string solving frameworks. We illustrate these difficulties
in the following two examples.
\begin{example}\label{exmp-name-swap}
    We briefly mention the challenges posed by the replace
    function in JavaScript; a slightly different but more detailed example can be found
    in Section~\ref{sec:mot}. Consider the Javascript code snippet
    \begin{minted}{javascript}
        var namesReg = /([A-Za-z]+) ([A-Za-z]+)/g;
        var newAuthorList = authorList.replace(nameReg, "$2, $1");
    \end{minted}
    Assuming \texttt{authorList} is given as a 
    list of \texttt{;}-separated author names --- first name, followed by a last name ---
    the above program would convert this to last name, followed by first name
    format. For instance, \texttt{"Don Knuth; Alan Turing"} would
    be converted to \texttt{"Knuth, Don; Turing, Alan"}.
    A natural post condition for this code snippet one would like to check is the existence of at least one ``,'' between two occurrences of ``;''.
\OMIT
{   
    Suppose that the number of people in \texttt{authorList} and 
   \texttt{newAuthorList} is capped to 4. In this case, we would want to
   check if the string \texttt{newAuthorList} could contain
   more than 4 \texttt{';'}, assuming that \texttt{authorList} contains at most
   than 4 \texttt{';'}.
}
\end{example}

\begin{example}\label{ex:normalize}
    We consider the match function in JavaScript,
    in combination with replace. 
    Consider the code snippet in Figure~\ref{fig-run-exmp-normalize}.
 The function {\tt normalize}   removes leading and trailing zeros from a decimal string with the input 
{\tt decimal}. For instance, 
 \texttt{normalize("0.250") == "0.25"},
 \texttt{normalize("02.50") == "2.5"},
 \texttt{normalize("025.0") == "25"},
and  finally \texttt{normalize("0250") == "250"}. As the reader might have
    guessed, the function match actually returns an array of strings,
    corresponding to those that are matched in the \emph{capturing groups} (two 
    in our example) in the RegEx using the \emph{greedy} semantics of the Kleene star/plus operator. One might be
    interested in checking, for instance, that there is a way to generate a
    the string \texttt{"0.0007"}, but not the string \texttt{"00.007"}.
\end{example}

\begin{figure}[htbp]
\begin{center}
\begin{minted}[linenos]{javascript}
function normalize(decimal) {
  const decimalReg = /^(\d+)\.?(\d*)$/;
  var   decomp     = decimal.match(decimalReg);
  var   result     = "";
  if (decomp) {
    var integer    = decomp[1].replace(/^0+/, "");
    var fractional = decomp[2].replace(/0+$/, "");
    if (integer    !== "") result = integer; else result = "0"; 
    if (fractional !== "") result = result + "." + fractional;
  }
  return result;
}
\end{minted}
\end{center}
\caption{Normalize a decimal by removing the leading and trailing zeros
    \label{fig-run-exmp-normalize}}
\end{figure}

The above examples epitomize the difficulties that have arisen from the
interaction between RegEx and string functions in programs. Firstly,
RegEx uses deterministic semantics for pattern matching (like greedy
semantics in the above example, but the so-called \emph{lazy} matching is
also possible), and allows features that do not exist in regular expressions in
formal language theory, e.g., capturing groups (those in brackets) in the above
example. Secondly, string functions in programs can exploit RegEx in an
intricate manner, e.g., by means of references \$1 and \$2 in Example~\ref{exmp-name-swap}. Hitherto, no existing string solvers
can support any of these features. This is despite the fact that \emph{idealized
versions} of regular 
constraints and the replace functions are allowed in modern string solvers
(e.g.\ see~\cite{AbdullaACDHRR18-trau,HJLRV18,cvc4,TCJ16,YABI14,CHL+19}), i.e., 
features that can be found in the above examples like capturing groups, 
greedy/lazy matching, and references are not supported.
This limitation of existing
string solvers was already mentioned in the recent paper~\cite{LMK19}.

In view of the aforementioned limitation of string solvers, what solutions are
possible? One recently proposed solution is to map the path constraints
generated by string-manipulating programs that exploit RegEx into constraints
in the SMT theories
supported by existing string solvers. In fact, this was done in recent papers~\cite{LMK19}, where the path constraints are mapped to constraints
in the theory of strings with concatenation and regular constraints in Z3~\cite{Z3}. Unfortunately, this mapping is an \emph{approximation}, since
such complex string manipulations are generally \emph{inexpressible} in any 
string theories supported by existing string solvers.
%
To leverage this, CEGAR (counter-example guided 
abstraction and refinement) is used in~\cite{LMK19}, while ensuring that an
\emph{under-approximation} is preserved. 
This results in a rather severe price in both precision and performance: the
refinement process may not terminate even for extremely simple programs
(e.g.\ the above examples).

Therefore, the current state-of-affairs is unsatisfactory because even 
the introduction of very simple RegEx expressions in programs (e.g.\ the above 
examples) results in path constraints that can \emph{not} be solved by existing 
symbolic executions in combination with string solvers. In this paper, we would
like to firstly advocate that string solvers should \emph{natively} support
important features of \regexp{}
in their SMT theories. 
Existing work
(e.g.\ the reduction to Z3 provided by~\cite{LMK19}) shows that this is a 
monumental theoretical and programming task, 
not to mention the loss in precision and the performance penalty. Secondly, we 
present \emph{the first} string
theory and string solver that natively provide such a support.



%
%

\OMIT{
Classical regular expressions (abbreviated as RE) are built from letters by the operators of
concatenation, union, and Kleene star, and have nice compositional semantics. On
the other hand, regular expressions in programming languages (abbreviated as PLRE) differ from classical ones mainly in the following two 
aspects: 1) non-standard semantics of 
operators, e.g., the non-commutative union, the greedy/lazy Kleene star/plus, and 2) new 
features, e.g., capturing groups and backreferences.
PLREs are in general more expressive than classical ones, e.g., it is known that
with backreferences one can easily generate languages that are not even 
context-free (e.g.\ see~\cite{FS19,Aho90,BM17b}). 

\begin{example}\label{exm-plre}
    Consider the PLRE \mintinline{javascript}{(\d+)(\d*)} in Javascript. It has two capturing
    groups, each within a pair of opening/closing brackets and  matching
    a string of digits (signified by \mintinline{javascript}{\d}). The second 
    capturing group
    could be matched with an empty sequence of digits. Given a string of digits
    (e.g.\ \texttt{"2050"}), the entire string will always be matched by the
    first subexpression \mintinline{javascript}{(\d+)}, owing to the greedy semantics of
    Kleene plus. 

    Consider now the RWRE \mintinline{javascript}{(\d+)\1\1}. It contains two
    backreferences \mintinline{javascript}{\1}, each of which
    matches exactly on the contents of the first capturing group. It
    accepts precisely the set $L$ of all the words $www$, where $w$ is a 
    nonempty sequence of digits, which is not a context-free language.
    \qed
\end{example}
}

%
\OMIT{
Matching regular expressions to strings (abbreviated as regex-string matching) is the core of string functions dependent on PLREs. For instance, in Javascript, the string functions exec, test, match, and search are all variants of regex-string matching, moreover, the functions replace and split recursively call regex-string matching as a subprocedure. 
Nevertheless, regex-string matching in programming languages are dramatically different from that in formal language theory, in the following two aspects.
\begin{enumerate}
\item Typically, PLREs are not required to be matched to the whole input string, but to a substring, which intuitively corresponds to the first match of the regular expression in the input, moreover, the non-standard semantics of operators guarantee that this matching is \emph{deterministic} in the sense that for a given regular expression and a string, the matching returns a \emph{unique} substring (if there is any). For instance, the matching of $e=$\mintinline{javascript}{(\d+)(\d*)} to ``ab123c'' returns ``123'', instead of ``1'' or ``12''.
\item PLREs may contain capturing groups, and the matchings of these capturing groups in strings should also be returned, moreover, these matchings are deterministic as well. For instance, the matching of $e=$\mintinline{javascript}{(\d+)(\d*)} to ``ab123c'' returns  [``123'', ``123'', ``''], an array of strings, instead of just ``123'', where the first entry is the matching of  $e$, and the rest are the matchings of the two capturing groups.
\end{enumerate}
}


\OMIT{
To make matters worse,
RWREs in real-world programs are also commonly used in combination with
other string operations (e.g.\ match and replace(all) functions~\cite{LMK19}),
which pose additional challenges to symbolic execution tools.
On a given string $s$ and a RWRE $e$, the match function allows one to extract 
the last match of a capturing group $(e')$ with respect to the first match of $e$ in $s$. 
For the replace function, on a given string $s$, a matching pattern RWRE $e$, and a replacement string $t$, it replaces the first match (or all 
matches, if the global flag is enabled) of $e$ in $s$ by $t$. Here $t$
could contain references to the matches of various capturing groups
in $e$.

\begin{example}\label{exmp-name-swap}
    Consider the snippet
    \begin{minted}{javascript}
        var namesReg = /([A-Za-z]+) ([A-Za-z]+)/g;
        var newAuthorList = authorList.replace(nameReg, "$2, $1");
    \end{minted}
    Assuming \texttt{authorList} is given as a 
    list of \texttt{;}-separated author names --- first name, followed by a last name ---
    the above program would convert this to last name, followed by first name
    format. For instance, \texttt{"Don Knuth; Alan Turing"} would
    be converted to \texttt{"Knuth, Don; Turing, Alan"}.
    \qed
\end{example}

}

\OMIT{
The semantics of RWREs drastically affect the behaviors of these functions. In particular, one must take a special care of the
greedy/lazy semantics of Kleene star, which cannot 
be captured in a complete way as constraints over word equations and classical 
REs. 
\anthony{More to come}
}

\OMIT{
Since string solvers support only the regex-string matching for classical REs, instead of PLREs, 
existing symbolic execution approaches that handle
string-manipulating programs apply workarounds.
We mention Aratha~\cite{aratha} and \expose~\cite{LMK19}, both of which are
symbolic execution engines for JavaScript programs.
Aratha and {\expose} attempt to exploit string 
equations and classical regular expressions (as implemented in Z3~\cite{Z3}) supported by string
solvers to capture the semantics of regular expressions in programming languages. 
Unfortunately, the semantics of regular expressions in programming languages cannot in general be fully captured by string constraints with classical regular expressions. 
For this reason, 
\expose{} attempts to approximate the semantics of regular expressions in programming languages in the style of 
CEGAR (counter-example guided abstraction and refinement). This
results in a rather severe price in both precision and performance: the
refinement process may not terminate and the symbolic execution of even a simple
program with regular expressions may need to be refined many times.

One of the main obstacles to the support of PLREs in string constraint solvers is find a proper formal semantics of regex-string matching for PLREs. There have been some attempt in this direction. In~\cite{BDM14,BM17}, prioritized finite transducers (abbreviated as PFT) were introduced to capture the semantics of regex-string matching for Java.  PFTs extend finite-state automata with transition priorities and outputs. A translation from Java regular expressions to PFTs, adapted from the classical Thompson construction from regular expressions to finite-state automata, was provided in~\cite{BDM14,BM17}, where priorities are used to model the greedy/lazy Kleene star/plus as well as the non-commutative union, moreover, indexed square brackets are inserted into the input string to pinpoint the matchings of the capturing groups.  The formal semantics of regex-string matching in~\cite{BDM14,BM17} turns out to be \emph{insufficient} for string constraint solving in the following sense.
\begin{itemize}
\item The formal semantics is \emph{incomplete}, as it intends for a subclass of regular expressions $e$ where all subexpressions $e^*$ and $e^{*?}$ satisfy that the language defined by $e$ does \emph{not} contain the empty string. 
\item The formal semantics is \emph{not validated} to be consistent with the \emph{actual} semantics of Java regular expressions.
\item The formal semantics models the matchings of capturing groups only \emph{implicitly}, which hinders the cooperated reasoning of regex-string matching with the other string functions: In PFTs, the matchings of capturing groups are identified in the input by adding the indexed square brackets, but \emph{not explicitly returned as outputs}.  On the other hand, in regex-string matching for PLREs, e.g. in Javascript match function, the matchings of capturing groups are explictly returned as an array of strings, and these strings can be manipulated further. 
\end{itemize}

We conclude with two open questions:
\begin{description}
    \item[(Q1)] Define the formal semantics of regex-string matching for PLREs and validate it against their actual semantics in programming languages.
    \item[(Q2)] Based on the progress in Q1, design algorithms for solving string constraints that contain PLRE-dependent functions (e.g.\ match and replace functions) as primitives, and develop a 
        fast string solver.
\end{description}

}


\paragraph*{Contributions.}
In this paper, we provide \emph{the first} string theory and string
solver that natively support \regexp{}. Not only can our theory/solver
easily express and solve Example~\ref{exmp-name-swap} and
Example~\ref{ex:normalize} --- which hitherto no existing string
solvers and string analysis can handle --- our experiments using a
library of 98,117~real-world regular expressions indicate that our
solver substantially outperforms the existing method~\cite{LMK19} in
terms of the number of solved problems and runtime.
We provide more details of our contributions below.

Our string theory provides for the first time a native support of the match and the 
replace functions, which use JavaScript\footnote{JavaScript was chosen 
because it is relevant to string solving~\cite{BEK,Berkeley-JavaScript}, due to vulnerabilities in JavaScripts 
caused by string manipulations. Our method can be easily adapted to \regexp{}
semantics in other languages.} RegEx in the input arguments. Here is a quick
summary of our string constraint language (see Section~\ref{sec:logic} for
more details):
\[
\begin{array}{l c l}
\smallskip
\varphi & \eqdef  & x = y \mid z = x \concat y \mid y  = \extract_{i, e}(x) \mid
y  = \replace_{\pat, \rep}(x) \mid 
\\
& & y = \replaceall_{\pat, \rep}(x)   \mid
 x \in e \mid \varphi \wedge \varphi \mid \varphi \vee \varphi \mid \neg \varphi \
\label{eq:SL-intro}
\end{array}
\]
where $e, \pat$ are \regexps, $i \in \mathbb{N}$, $x,y,z$ are variables, and $\rep$ 
is called the
replacement string and might refer to strings matched in capturing groups,
as in Example~\ref{exmp-name-swap}. Apart from the standard concatenation
operator $\concat$, we support $\extract$, which extracts the string matched by
the $i$th capturing group in the \regexp{} $e$ (note that match can be simulated
by several calls to $\extract$). We also support $\replace$ 
(resp.~$\replaceall$), which replaces the first occurrence (resp.~all
occurrences) of substrings in $x$ matched by $\pat$ by $\rep$. Our solver/theory
also covers the most important features of \regexp{} (including greedy/lazy
matching, capturing groups, among others) that make up 74.97\% of the \regexp{}
expressions of~\cite{LMK19} across 415,487 NPM packages. 

A crucial step in the development of our string solver is a formalization of
the semantics of the $\extract$, $\replace$, and $\replaceall$ functions in
an automata-theoretic model that is amenable to analysis (among others, closure
properties; see below).
To this end, we introduce a new 
transducer model called \emph{Prioritized Streaming String Transducers (PSSTs)},
which is inspired by 
two automata/transducer models: prioritized finite-state automata~\cite{BM17} 
and streaming string transducers~\cite{AC10,AD11}. PSSTs allow us to precisely
capture the non-standard semantics of \regexp{} operators (e.g. greedy/lazy Kleene star) by priorities and 
deal with capturing groups by string variables. 
We show that $\extract$, $\replace$, and $\replaceall$ can all be expressed as 
PSSTs. More importantly, we have performed an extensive experiment
validating our formalization against JavaScript semantics. 

Next, by means of a sound sequent calculus, our string solver (implemented in 
the standard DPLL(T) setting of SMT solvers~\cite{NieuwenhuisetalJACM2006}) 
will exploit crucial closure and algorithmic properties satisfied by PSSTs.
In particular, the solver attempts to
(1) \emph{propagate} regular constraints (i.e.\ the constraints $x \in e$) in the formula around by means of
the string functions $\cdot$, $\replace$, $\replaceall$, and $\extract$, and 
(2) either detect conflicting regular constraints, or find a satisfiable assignment.
\OMIT{
Here, a \regexp{} constraint on a set $S$ of variables simply refers to a 
boolean
combination of constraints of the form $x \in \Lang(e)$, for a variable 
$x \in S$ and \regexp{} $e$.
}
A single step of the regular-constraint propagation computes either the 
$post$-image or the $pre$-image of the above functions. In particular, 
it is crucial that each step of our constraint propagation preserves
regularity of the constraints.
Since the $post$-image does not always preserve regularity,
we only propagate by taking $post$-image when regularity is preserved.
On the other hand, one of our crucial results is that taking $pre$-image 
always preserves regularity:
\OMIT{
To this end, we introduce a new 
transducer model called \emph{Prioritized Streaming String Transducer (PSST)},
which is inspired by 
two automata/transducer models: prioritized finite-state automata~\cite{BM17} 
and streaming string transducers~\cite{AC10,AD11}. PSSTs allow us to precisely
capture the non-standard semantics of \regexp{} operators by priorities and 
deal with capturing groups by string variables. 
We show that extract and replace functions can be expressed in terms of PSSTs.
}
regular constraints are \emph{effectively closed under
taking $pre$-image of functions captured in PSSTs}.
Finally, despite the fact that our above string theory is undecidable
(which follows from~\cite{LB16}), we show that our string solving algorithm is
guaranteed to terminate (and therefore is also complete) under the assumption
that the input formula syntactically satisfies the so-called 
\emph{straight-line restriction}.

We implement our decision procedure 
on top of the open-source solver~OSTRICH~\cite{CHL+19}, and carry out
extensive experiments to evaluate the performance. For the benchmarks,
we generate two collections of JavaScript programs (with 98,117
programs in each collection), from a library of real-world regular
expressions~\cite{DMC+19}, by using two simple JavaScript program
templates containing match and replace functions, respectively.  Then
we generate all the four (resp.\ three) path constraints for each
match (resp.\ replace) JavaScript program and put them into one
SMT-LIB script. {\ostrich} is able to answer all four (resp.\ three)
queries in 97\% (resp.\ 91.5\%) of the match (resp.\ replace) scripts,
with the average time 1.57s (resp.\ 6.62s) per file.  Running
\expose{}~\cite{LMK19} with the same time budget on the same
benchmarks, we show that \ostrich{} offers a 8x--18x speedup in
comparison to \expose{}, while being able to cover substantially more
paths (9.6\% more for match, 49.9\% more for replace), making \ostrich{} the first string solver that is able to
handle \regexps{} precisely and efficiently.

\paragraph{Organization.} In Section~\ref{sec:mot}, more details of Example~\ref{exmp-name-swap} are worked out to illustrate our approach. The string constraint language supporting {\regexps} is presented in Section~\ref{sec:logic}. The semantics of the {\regexp}-dependent string functions are formally defined via PSSTs in Section~\ref{sec:semantics}. The sequent calculus for solving the string constraints is introduced in Section~\ref{sect:calculus}. The implementation of the string solver and experiments are described in Section~\ref{sect:impl}. The related work is given in Section~\ref{sec-related}. Finally, Section~\ref{sec-conc} concludes this paper.

\OMIT{
The main contributions of the paper are to answer both (Q1) and (Q2) in the
positive for a reasonable fragment of RWREs. In particular, we consider the 
problem of path feasibility of a simple symbolic path constraint language
that uses only string variables:
\[
\begin{array}{l c l}
\smallskip
S & \eqdef  & z:= x \concat y \mid y := \extract_{i, e}(x) \mid  
y := \replace_{\pat, \rep}(x) \mid \\
& & y := \replaceall_{\pat, \rep}(x)   \mid 
 \ASSERT{x \in e} \mid S; S\
\end{array}
\]
That is, assignments are allowed whose right hand side could use concatenation,
the match function ($\extract$), and the replace function (with/without the 
global flag). RWREs ($e,\pat,\rep$ in the syntax above) are allowed in the 
assertions, as well as in the match and the replace functions. A given path is
feasible if there is an initialization of the string variables under which the above
path can run from start to finish without violating any of the assertions.
Our main result is the decidability of this problem for a reasonable class of
RWREs. In particular, $\rep$ is a concatenation of string 
constants and backreferences, while $e, \pat$ are RWREs that allow non-commutative
unions, greedy/lazy Kleene stars, and capturing groups, but \emph{not} 
backreferences.  An example of a symbolic path in this fragment is in
Example~\ref{exmp-name-swap}.
We complement this by proving undecidability when we permit
backreferences in $e$ or $\pat$.
Our decidable fragment of RWREs supports a significant portion of the 
frequently used features in RWREs (as indicated by the data analysis in~\cite{LMK19} across 415,487 NPM packages) including capturing groups ($\sim$39\%), 
global flag ($\sim$30\%), and greedy/lazy Kleene stars ($\sim$23\%). Features
such as backreferences turned out to be not so frequently used ($\sim$0.8\%).
These statistics are also consistent with the RWRE usage statistics for Python across
3,898 projects~\cite{CS16}, e.g., capturing groups are the most frequently used
features of RWREs ($\sim$53\% out of the found RWREs), while backreferences are
not frequently used ($\sim$0.1\%). Moreover, in a recent library of over 500,000 RWREs collected from open-source programs~\cite{DMC+19},  backreferences occur in less than 0.2\% of them, and our decidable fragment is able to cover $\sim$80\% of them.

Our decision procedure requires that we introduce a new automata model, called 
prioritized streaming string transducers (PSSTs), which extends and combines 
prioritized finite-state automata~\cite{BM17} and streaming string transducers~\cite{AC10,AD11}. With PSSTs, we encode the non-standard semantics of regular 
expression operators by priorities and deal with capturing groups by string variables. 
The widely used string functions involving regular expressions, e.g. match and replace(all), can be easily transformed into PSSTs. 

We then design a decision procedure for a class of string constraints with RWREs. The decision procedure extends the backward reasoning approach proposed in~\cite{CHL+19} to PSSTs. Specifically, we show that the pre-images of regular languages under PSSTs are regular and can be computed effectively. 

We implement the decision procedure in our new solver \ostrich\  
on top of the existing open-source solver~OSTRICH~\cite{CHL+19},
 and carry out extensive experiments to evaluate the performance. For the benchmarks, we generate two collections of JavaScript programs (with 98,117 programs in each collection), from a library of real-world regular expressions~\cite{DMC+19}, by using two simple JavaScript program templates containing match and replace functions respectively.  
 Then we generate all the four (resp. three) path constraints for each match (resp.\ replace) JavaScript program and put them into one SMT file. We run {\ostrich} on these SMT files. {\ostrich} is able to answer all four (resp.\ three) queries in 97.9\% (resp.\ 97.6\%) of the match (resp.\ replace) SMT files, with the average time 1.19 (resp.\ 1.48) seconds per file. For comparison, we also run \expose{} on the JavaScript programs. \expose{} covers 91.5\% (resp.\ 63.2\%) of feasible paths in the match (resp.\ replace) programs reported by {\ostrich}, with  the average time 28.0 (resp.\ 55.0) seconds per program. The huge difference of the running time as well as the path coverage shows that our approach can reason about RWREs in a much more efficient and precise way than the CEGAR-based approach. 
 }


\hide{
Strings are a fundamental data type in virtually all programming languages.

One effective automatic testing method for identifying subtle programming errors  is based on \emph{symbolic execution}~\cite{king76} and combinations with dynamic analysis
called \emph{dynamic symbolic execution}~\cite{jalangi,DART,EXE,CUTE,KLEE}.
See~\cite{symbex-survey} for an excellent survey. 

Unlike purely random testing,
which runs only \emph{concrete} program executions on different
inputs, the techniques of symbolic execution analyse \emph{static} paths
(also called symbolic executions) through the software system under test.
Such a path can be viewed as a constraint $\varphi$ (over
appropriate data domains) and the hope is that a fast
solver is available for checking the satisfiability of $\varphi$ (i.e.\ to check
the \emph{feasibility} of the static path), which can be used for generating
inputs that lead to certain parts of the program or an erroneous behaviour.

%
In this paper, we focus on two string operations with emphasis on practical usage of  regular expressions. Rather than textbook style regular expressions, regular expressions used in programming languages are considerably more involved. On particular feature we consider is the capturing group. This is particularly useful for string pattern matching 
where it can be returned what subexpression matched which substring. 



%
%

The \emph{string-replace function}, 
which may be used to replace all occurrences of a string matching a pattern by 
another string. 

The replace function (especially 
the replace-all functionality) is omnipresent in HTML5 applications~\cite{LB16,TCJ16,YABI14}. 

A regular expression (shortened as regex) is a sequence of characters that define a search pattern. Usually such patterns are used by string-searching algorithms for ``find'' or ``find and replace'' operations on strings, or for input validation.  

The semantics of regular expressions with capturing groups and backreferences is rather involved. One of the reasons is that different languages may choose different semantics for a regex to match the string when the regex is served as a pattern. 

To capture the semantics, priority is introduced, giving rise to an extension of the standard finite-state automata. However, this is not sufficient for capturing string operations. For that purpose, we introduce  a new transducer model, prioritized streaming string transducer (PSST) which is a combination of priority which is essential for modelling capturing groups and streaming transducers which are a highly expressive formalism for modelling string operations. 
}


\section{A Detailed Example}\label{sec:mot}

In this section, we provide a detailed example to illustrate our string
solving method. 
Consider the JavaScript program in Figure~\ref{fig-run-exmp}; this example is
similar to Example \ref{exmp-name-swap} from the Introduction. The function ``authorNameDBLPtoACM'' in 
Figure~\ref{fig-run-exmp} transforms 
an author list in the DBLP BibTeX style to the one in the ACM BibTeX style. For instance,  if a paper is authored by Alice M. Brown and John Smith, then the author list in the DBLP BibTeX style is ``Alice M. Brown and John Smith'', while it is ``Brown, Alice M. and Smith, John'' in the ACM BibTeX style. 

\begin{figure*}[tb]
\begin{center}
\small
\begin{minted}{javascript}
function authorNameDBLPtoACM(authorList)
{
  var autListReg 
    = /^[A-Z](\w*|.)(\s[A-Z](\w*|.))*(\sand\s[A-Z](\w*|.)(\s[A-Z](\w*|.))*)*$/;
  if (autListReg.test(authorList)) {
    var nameReg = /([A-Z](?:\w*|.)(?:\s[A-Z](?:\w*|.))*)(\s[A-Z](?:\w*|.))/g;
    return authorList.replace(nameReg, "$2, $1");
  }
  else return authorList;
}
\end{minted}
\end{center}
\caption{Change the author list from the DBLP format to the ACM format\label{fig-run-exmp}}
\end{figure*}

 
The input of the function ``authorNameDBLPtoACM'' is {\sf authorList}, which is expected to follow the pattern specified by the regular expression {\sf autListReg}. Intuitively, {\sf autListReg} stipulates that {\sf authorList} 
joins the strings of full names 
as a concatenation of a given name, middle names, and a family name, separated by the blank symbol (denoted by $\backslash$s). Each of the given, middle, family names is a concatenation of a capital alphabetic letter (denoted by [A-Z]) followed by a sequence of letters (denoted by $\backslash$w) or a dot symbol (denoted by $.$). Between names, the word ``and'' is used as the separator.
The symbols \^{} and $\$$ denote the beginning and the end of a string input respectively.

The DBLP name format of each author is specified by the regular expression {\sf nameReg}  in Figure~\ref{fig-run-exmp}, which describes the format of a full name.
\begin{itemize}
\item There are two capturing groups in {\sf nameReg}, one for recording the concatenation of the given name and middle names, and the other for recording the family name. 
Note that the symbols ?: in (?:$\backslash$s[A-Z](?:$\backslash$w*|.)) denote the non-capturing groups, i.e. matching the subexpression, but not remembering the match.

\item The \emph{greedy} semantics of the Kleene star * is utilized here to guarantee that the subexpression (?:$\backslash$s[A-Z](?:$\backslash$w*|$\backslash$.))* matches all the middle names (since there may exist multiple middle names) and thus ${\sf nameReg}$ matches the full name. For instance, the first match of ${\sf nameReg}$ in  ``Alice M. Brown and John Smith'' is ``Alice M. Brown'', instead of ``Alice M.''. In comparison, if the semantics of * is assumed to be non-greedy, then (?:$\backslash$s[A-Z](?:$\backslash$w*|$\backslash$.))* can be matched to the empty string, thus ${\sf nameReg}$ is matched to ``Alice M.'', which is \emph{not} what we want. Therefore, the greedy semantics of * is essential for the correctness of ``authorNameDBLPtoACM''.
%
\item The global flag ``g'' is used in {\sf nameReg} so that the name format of each author is transformed. 
\end{itemize}
The name format transformation is via the {\sf replace} function, i.e. {\sf authorList.replace(nameReg, ``$\$$2, $\$$1'')}, where $\$1$ and $\$2$ refer to the match of the first and second capturing group respectively. 

A natural post-condition of {\sf authorNameDBLPtoACM} is that there exists at least one occurrence of the comma symbol between every two occurrences of ``$\sf and$''. 
This post-condition has to be established by the function on \emph{every} execution path.  As an example, consider the path shown in Fig.~\ref{fig-run-exmp-path},
in which the branches taken in the program are represented as
\texttt{assume} statements. The negated post-condition is enforced by
the regular expression in the last \texttt{assume}. For this path, the
post-condition can be proved by showing that the program in
Fig.~\ref{fig-run-exmp-path} is infeasible: there does not exist an
initial value {\sf authorList} so that no assumption fails and the
program executes to the end.

\begin{figure}[tb]
\begin{center}
\small
\begin{minted}[linenos]{javascript}
  var autListReg = 
      /^[A-Z](\w*|.)(\s[A-Z](\w*|.))*(\sand\s[A-Z](\w*|.)(\s[A-Z](\w*|.))*)*$/;
  assume(autListReg.test(authorList));
  var nameReg = /([A-Z](?:\w*|.)(?:\s[A-Z](?:\w*|.))*)(\s[A-Z](?:\w*|.))/g;
  var result = authorList.replace(nameReg, "$2, $1");
  assume(/\sand[^,]*\sand/.test(result));
\end{minted}
\end{center}
\caption{Symbolic execution of a path of the JavaScript program in Fig.~\ref{fig-run-exmp}}
\label{fig-run-exmp-path}
\end{figure}

To enable symbolic execution of the JavaScript programs like in Fig.~\ref{fig-run-exmp-path}, one needs to model both the greedy semantics of the Kleene star and store the matches of capturing groups. For this purpose, we introduce prioritized streaming string transducers (PSST, cf.\ Section~\ref{sec:semantics}) by which {\sf replace(nameReg, ``$\$$2, $\$$1'')} is represented as a PSST $\cT$, where the \emph{priorities} are used to model the greedy semantics of $*$ and the \emph{string variables} are used to record the matches of the capturing groups as well as the return value. Then the  symbolic execution of the program in Fig.~\ref{fig-run-exmp-path} can be equivalently turned into the satisfiability of the following string constraint,
\begin{equation} \label{eq:motiv}
{\sf authorList} \in {\sf autListReg}\wedge {\sf result} = \cT({\sf authorList}) \wedge {\sf result} \in {\sf postConReg},
\end{equation}
where {\sf postConReg} =
\mintinline{javascript}{/^.*\sand[^,]*\sand.*$/}, and {\sf autListReg}
is as in Fig.~\ref{fig-run-exmp}.

Our solver is able to show that \eqref{eq:motiv} is unsatisfiable. On
the calculus level (introduced in more details in
Section~\ref{sect:calculus}), the main inference step applied for this
purpose is the computation of the \emph{pre-image} of {\sf postConReg}
under the function
$\cT$; in other words, we compute the language of all strings that are
mapped to incorrect strings (containing two ``\textsf{and}''s without
a comma in between) by
$\cT$. This inference step relies on the fact that the pre-images of
regular languages under PSSTs are regular (see
Lemma~\ref{lem:psst_preimage}). Denoting the pre-image of {\sf
  postConReg}  by
$\cB$, formula~\eqref{eq:motiv} is therefore equivalent to
\begin{equation}
  \label{eq:motiv2}
  {\sf authorList} \in \cB \wedge
{\sf authorList} \in {\sf autListReg}\wedge {\sf result} = \cT({\sf authorList}) \wedge {\sf result} \in {\sf postConReg}.
\end{equation}

To show that this formula (and thus \eqref{eq:motiv}) is
unsatisfiable, it is now enough to prove that the languages defined by
$\cB$ and {\sf autListReg} are disjoint.



\section{A String Constraint Language Natively Supporting \regexp}
\label{sec:logic}
%
\OMIT{Our goal in this section is to define the formal semantics of {\regexp}-string matching. 
	Traditionally, a regular expression in formal language theory is interpreted as a regular language, i.e., a set of strings, which can be defined inductively in a rather straightforward way. In the context of string constraint solving, as regular expressions are used as arguments in string functions (e.g., {\sf match} and {\sf replace} in JavaScript), 
	what we need is not only the language denoted by the regular expression, but also the intermediate results when parsing a string against the given regular expression. This is especially the case when the capturing group is involved. As a result, we need a more operational (as opposed traditionally denotational) account of the semantics for regular expressions. To this end, we harness an extension of finite-state automata with transition priorities and string variables, called prioritized streaming string transducers (abbreviated as \PSST), to define how a string is parsed by the given regular expression. 
	
	In {\PSST}s, transition priorities are used to capture the non-standard semantics of {\regexp} operators whereas string variables are used to store the matchings of capturing groups. {\PSST}s combine two automata models introduced before, namely, prioritized finite-state automata \cite{BM17} and streaming string transducers \cite{AC10,AD11}. The formal semantics of {\regexp}-string matching is defined by constructing {\PSST}s out of regular expressions. 
	As we shall validate the formal semantics against the actual one of {\regexp}-string matching in programming languages and there are subtle differences between the implementations of {\regexp} in different languages (e.g. JavaScript and Python), it is necessary to choose one specific language to exercise the validation. We choose JavaScript here, since it is one of the most widely used programming languages,  currently the top-one active language in Github.\footnote{https://githut.info/}
	We shall start with the syntax of regular expressions, which are essentially those used in JavaScript. (We do not include backreferences though.) We then formalize the semantics of JavaScript \regexp-string matching in Section~\ref{sect:regextopsst}; the experimental validation is presented in 
	Section~\ref{sect：valid}).
}
In this section, we define a string constraint language natively supporting {\regexp}. Throughout the paper, $\Int^+$ denotes the set of positive integers, and  $\nat$ denotes the set of natural numbers. Furthermore, for $n\in \Int^+$, let $[n]:=\{1, \ldots, n\}$. 
We use $\Sigma$ to denote a finite set of letters, called \emph{alphabet}. A \emph{string} over $\Sigma$ is a finite sequence of letters from $\Sigma$. We use $\Sigma^*$ to denote the set of strings over $\Sigma$, $\varepsilon$ to denote the empty string, and $\Sigma^\varepsilon$ to denote $\Sigma \cup \{\varepsilon\}$. 
A string $w'$ is called a \emph{prefix} (resp. \emph{suffix}) of $w$ if $w = w'w''$ (resp. $w = w'' w'$) for some string $w''$.


We start with the syntax of {\regexp} 
which is essentially that used in JavaScript. (We do not include backreferences though.)
\begin{definition}[Regular expressions, {\regexp}]	
	\[
	\begin{split}
		e & \eqdef  \emptyset \mid \varepsilon \mid a \mid  (e) \mid 
		[e + e] \mid [e \concat e] \mid [e^?] \mid [e^{??}] \mid  \\
		&          [e^*]  \mid [e^{*?}] \mid [e^+] \mid  [e^{+?}] \mid [e^{\{m_1,m_2\}}] \mid [e^{\{m_1,m_2\}?}] 
	\end{split}
	\]
	%
	where $a \in \Sigma$,  $n \in \Int^+$, $m_1,m_2 \in \Nat$ with $m_1 \le m_2$. 
\end{definition}
%
For $\Gamma = \{a_1, \ldots, a_k\}\subseteq \Sigma$, we write $\Gamma$ for  $[[\cdots [a_1 + a_2] + \cdots] + a_k]$ and thus $[\Gamma^\ast] \equiv [[[\cdots [a_1 + a_2] + \cdots] + a_k]^\ast]$. Similarly for $[\Gamma^{\ast?}]$, $[\Gamma^+]$, and $[\Gamma^{+?}]$. We write $|e|$ for the length of $e$, i.e., the number of symbols occurring in $e$.
Note that square brackets $[]$ are used for the operator precedence and the parentheses $()$ are used for \emph{capturing groups}. 
%
%

The operator $[e^*]$ is the \emph{greedy} Kleene star, meaning that $e$ should be matched as many times as possible. In contrast, the operator $[e^{*?}]$ is the \emph{lazy} Kleene star, meaning $e$ should be matched  as few times as possible. The Kleene plus operators $[e^+]$ and $[e^{+?}]$ are similar to $[e^*]$ and $[e^{*?}]$ but $e$ should be matched at least once. Moreover, as expected,  the repetition operators $[e^{\{m_1,m_2\}}]$ require the number of times that $e$ is matched is between $m_1$ and $m_2$ and $[e^{\{m_1,m_2\}?}]$ is the lazy variant. Likewise, the optional operator has greedy and lazy variants $[e^?]$ and $[e^{??}]$, respectively. 

For two {\regexp} $e$ and $e'$, we say that $e'$ is a \emph{subexpression} of $e$,
if one of the following conditions holds: 1) $e'=e$, 2) $e = [e_1 \cdot e_2]$ or $[e_1 + e_2]$, and $e'$ is a subexpression of $e_1$ or $e_2$, 3) $e = [e_1^?], [e_1^{??}], [e_1^{\ast}]$, $[e_1^{+}]$, $[e_1^{\ast?}]$, $[e_1^{+?}]$, $[e_1^{\{m_1, m_2\}}]$, $[e_1^{\{m_1, m_2\}?}]$ or $( e_1)$, and $e'$ is a subexpression of $e_1$. We use $S(e)$ to denote the set of subexpressions of $e$. %

%

We shall formalize the semantics of \regexp, in particular, for a given regular expression and an input string, how the string is matched against the regular expression, in Section~\ref{sect:regextopsst}.

In the rest of this section, we define the string constraint language $\strline$. 

The syntax of $\strline$ is defined by the following rules.
\[
\begin{array}{l c l}
\smallskip
\varphi & \eqdef  & x = y \mid z = x \concat y \mid y  = \extract_{i, e}(x) \mid
y  = \replace_{\pat, \rep}(x) \mid 
\\
& & y = \replaceall_{\pat, \rep}(x)   \mid
 x \in e \mid  \varphi \wedge \varphi \mid \varphi \vee \varphi \mid \neg \varphi \
\label{eq:SL}
\end{array}
\]
where
\begin{itemize}
	\item $\concat$ is the string concatenation operation which concatenates two strings,
\item  $e \in${\regexp} and $\pat \in${\regexp},
\item for the $\extract$ function, $i \in \Nat$,
	\item  for the $\replace$ and $\replaceall$ operation, $\rep \in \refexp$, where $\refexp$ is defined as a concatenation of letters from $\Sigma$, the references $\$i$ ($i \in \Nat$), as well as $\refbefore$ and $\refafter$. (Intuitively, $\$0$ denotes the matching of $\pat$, $\$i$ with $i > 0$ denotes the matching of the $i$-th capturing group, $\refbefore$ and $\refafter$ denote the prefix before resp. suffix after the matching of $\pat$.)
%
\end{itemize}

The $\extract_{i, e}(x)$ function extracts the match of the $i$-th capturing group in the successful match of $e$ to $x$ for $x \in \Lang(e)$ (otherwise, the return value of the function is undefined). Note that $\extract_{i, e}(x)$ returns $x$ if $i=0$. Moreover, if the $i$-th capturing group of $e$ is \emph{not} matched, even if $x \in \Lang(e)$, then $\extract_{i, e}(x)$ returns a special symbol $\nullchar$, denoting the fact that its value is undefined. For instance, when $[[a^+] + ([a^*])]$ is matched to the string $aa$, $[a^+]$, instead of $([a^*])$, will be matched, since $[a^+]$ precedes $([a^*])$. Therefore, $\extract_{1, [[a^+] + ([a^*])]}(aa) = \nullchar$. 

%

\begin{remark}
The match function in programming languages, e.g. $\sf str.match(reg)$ function in JavaScript, finds the first match of $\sf reg$ in $\sf str$,  assuming that $\sf reg$ does not contain the global flag. We can use $\extract$ to express the first match of $\sf reg$ in $\sf str$ by adding $[\Sigma^{*?}]$ and $[\Sigma^*]$ before and after $\sf reg$ respectively. More generally, the value of the $i$-th capturing group in the first match of a $\regexp$ $\sf reg$ in $\sf str$ can be specified as $\extract_{i+1, {\sf reg'}}({\sf str})$, where ${\sf reg'} = [[[\Sigma^{*?}] \concat ({\sf reg})] \concat [\Sigma^*]]$. The other string functions involving regular expressions, e.g. {\sf exec} and {\sf test}, without global flags, are similar to {\sf match}, thus can be encoded by $\extract$ as well.
\end{remark}

The function $\replaceall_{\pat, \rep}(x)$ is parameterized by the \emph{pattern} $\pat \in \regexp$ and the \emph{replacement string} $\rep \in \refexp$.
For an input string $x$, it identifies all matches of $\pat$ in $x$ and replaces them with strings specified by $\rep$. More specifically, $\replaceall_{\pat, \rep}(x)$ finds the first match of $\pat$ in $x$ and replaces the match with $\rep$, let $x'$ be the suffix of $x$ after the first match of $\pat$,  then it finds the first match of $\pat$ in $x'$ and replace the match with $\rep$, and so on.
%
A reference $\$i$ where $i > 0$ is instantiated by the matching of the $i$-th capturing group.
There are three special references\footnote{
    The corresponding syntax for $\$0$, $\refbefore$ and $\refafter$ in JavaScript are $\$\&$, $\$`$, and $\$'$.
} $\$0$, $\refbefore$, and $\refafter$.
These are instantiated by the matched text, the text occurring before the match, and the text occurring after the match respectively.
In particular, if the input word is $u v w$ where $v$ has been matched and will be replaced, then $\$0$ takes the value $v$, $\refbefore$ takes the value $u$, and $\refafter$ takes the value $w$.
When there are multiple matches in a $\replaceall$, the values of $\refbefore$ and $\refafter$ are always with respect to the original input string $x$.

The $\replace_{\pat, \rep}(x)$ function is similar to $\replaceall_{\pat, \rep}(x)$, except that it replaces only the first (leftmost) match of $\pat$.

A $\strline$ formula $\varphi$ is said to be \emph{straight-line}, if 1) it contains neither negation nor disjunction, 2) the equations in $\varphi$ can be ordered into a sequence, say $x_1 = t_1, \ldots, x_n = t_n$, such that $x_1,\ldots, x_n$ are mutually distinct, moreover, for each $i \in [n]$, $x_i$ does \emph{not} occur in $t_1, \ldots, t_{i-1}$. Let $\strlinesl$ denote the set of straight-line $\strline$ formulas.

As a crucial step for solving the string constraints in $\strline$, we shall define the formal semantics of the $\extract$, $\replace$, and $\replaceall$ functions in the next section.

\section{Semantics of string functions via {\PSST}}\label{sec:semantics}

Our goal in this section is to define the formal semantics of the string functions involving {\regexp} used in $\strline$, that is, $\extract$, $\replace$ and $\replaceall$. To this end, we need to first define the semantics of {\regexp}-string matching. One of the key novelties here is to utilize an extension of finite-state automata with transition priorities and string variables, called prioritized streaming string transducers (abbreviated as \PSST). It turns out that {\PSST} provides a convenient means to capture the non-standard semantics of {\regexp} operators and to store the matches of capturing groups in {\regexp}, which paves the way to define the semantics of string functions (and the string constraint language). 

\subsection{Prioritized streaming string transducers (\PSST)}

{\PSST}s can be seen as an extension of finite-state automata with transition priorities and string variables. We first recall the definition of classic finite-state automata.

\begin{definition}[Finite-state Automata] \label{def:nfa}
	A \emph{(nondeterministic) finite-state automaton}
	(\FA{}) over a finite alphabet $\ialphabet$ is a tuple $\Aut =
	(\ialphabet, \controls, q_0, \finals, \transrel)$ where 
	$\controls$ is a finite set of 
	states, $q_0\in \controls$ is
	the initial state, $\finals\subseteq \controls$ is a set of final states, and 
	$\transrel\subseteq \controls \times 
	\ialphabet^\varepsilon \times  \controls$ is the
	transition relation. 
\end{definition}

For an input string $w$, a \emph{run} of $\Aut$ on $w$
is a sequence $q_0 a_1 q_1 \ldots a_n q_n$ such that $w = a_1 \cdots a_n$ and $(q_{j-1}, a_{j}, q_{j}) \in
\transrel$ for every $j \in [n]$.
The run is said to be \defn{accepting} if $q_n \in \finals$.
A string $w$ is \defn{accepted} by $\Aut$ if there is an accepting run of
$\Aut$ on $w$. 
The set of strings accepted by $\Aut$, i.e., the language \defn{recognized} by $\Aut$, is denoted by $\Lang(\Aut)$.
The \defn{size} $|\Aut|$ of $\Aut$ is the cardinality of $\transrel$, the set of transitions.

\OMIT{
\begin{definition}[Prioritized Finite-state Automata]\label{def-pfa}
	A \emph{prioritized finite-state automaton} (PFA) over a finite alphabet $\Sigma$ is a tuple $\pnfa=(Q, \Sigma, \delta, \tau, q_0, F)$ where $\delta \in Q
	\times \Sigma \rightarrow \overline{Q}$ and $\tau \in Q \rightarrow \overline{Q} \times \overline{Q}$ such that for every $q \in Q$, if $\tau(q) = (P_1; P_2)$, then $P_1 \cap P_2 = \emptyset$. 
	The definition of $Q$, $q_0$ and $F$ is the same as FA.
\end{definition}
For $\tau(q)=(P_1; P_2)$, we will use $\pi_1(\tau(q))$ and $\pi_2(\tau(q))$ to denote $P_1$ and $P_2$ respectively.  With slight abuse of notation, we write $q\in (P_1; P_2)$ for $q\in P_1\cup P_2$. Intuitively, $\tau(q)=(P_1; P_2)$ specifies the $\varepsilon$-transitions at $q$, with the intuition that the $\varepsilon$-transitions to the states in $P_1$ (resp. $P_2$) have higher (resp. lower) priorities than the non-$\varepsilon$-transitions out of $q$.

A \emph{run} of $\pnfa$ on a string $w$ is a sequence $q_0 a'_1 q_1 \ldots a'_m q_m$ such that 
\begin{itemize}
	\item for any $i \in [m]$, either $a'_i \in \Sigma$ and $q_i \in \delta (q_{i - 1}, a'_i)$, or $a'_i = \varepsilon$ and $q_i \in \tau(q_{i-1})$, 
	\item $w = a'_1 \cdots a'_m$,
	\item for every subsequence $q_i a'_{i+1} q_{i+1} \ldots a'_{j} q_j$ such that  $i < j$ and $a'_{i+1} = \cdots = a'_j = \varepsilon$, it holds that for every $k, l: i \le k < l < j$, $(q_k, q_{k+1}) \neq (q_l, q_{l+1})$.
	(Intuitively, each transition occurs at most once in a sequence of $\varepsilon$-transitions.) 
	%
\end{itemize}

Note that it is possible that $\delta(q, a) = ()$, that is, there is no $a$-transition out of $q$. 
It is easy to observe that, given a string $w$, the length of a run of $\pnfa$ on $w$ is $O(|w||\cA|)$. 
For any two runs $R = q_0 a_1 q_1 \ldots a_m q_m$ and $R' =  q_0 a'_1 q_1' \ldots a'_n q'_n$ such that $a_1 \ldots a_m = a'_1 \ldots a'_n$, we say that $R$ is of a higher priority over $R'$ if 
\begin{itemize}
	\item either $R'$ is a prefix of $R$ (in this case, the transitions of $R$ after $R'$ are all $\varepsilon$-transitions), 
	\item or there is an index $j$ satisfying one of the following constraints:
	\begin{itemize}
		\item $q_0 a_1 q_1 \ldots q_{j-1} a_j = q_0 a'_1 q'_1 \ldots q'_{j-1} a'_j$, $q_j \neq q'_j$, $a_j \in \Sigma$, and $\delta (q_{j - 1}, a_j) =(\ldots, q_j, \ldots, q_j', \ldots)$,
		\item $q_0 a_1 q_1 \ldots q_{j-1} a_j = q_0 a'_1 q'_1 \ldots q'_{j-1} a'_j$, $q_j \neq q'_j$, $a_j  = \varepsilon$,  and one of the following conditions holds: (i) $\pi_1(\tau(q_{j - 1})) = (\ldots, q_j, \ldots, q_j', \ldots)$, (ii) $\pi_2(\tau(q_{j - 1})) = (\ldots, q_j, \ldots, q_j', \ldots)$, or (iii) $q_j \in \pi_1(\tau(q_{j - 1}))$ and $q'_j \in \pi_2(\tau(q_{j-1}))$, 
		\item $q_0 a_1 q_1 \ldots q_{j-1}  = q_0 a'_1 q'_1 \ldots q'_{j-1} $, $a_j  = \varepsilon$, $a'_j  \in \Sigma$, $q_j \in \pi_1(\tau(q_{j - 1}))$, and $q'_j \in \delta(q_{j-1}, a'_j)$, 
		\item $q_0 a_1 q_1 \ldots q_{j-1}  = q_0 a'_1 q'_1 \ldots q'_{j-1} $, $a_j  \in \Sigma$, $a'_j  = \varepsilon$, $q_j \in \delta(q_{j - 1}, a_j)$, and $q'_j \in \pi_2(\tau(q_{j-1}))$.
	\end{itemize}
\end{itemize}
An \emph{accepting} run of $\pnfa$ on $w$ is a run $R = q_0 a_1 q_1 \ldots a_m q_m$ of $\pnfa$ on $w$ satisfying that 1) $q_m \in F$, 2) $R$ is of the \emph{highest} priority among those runs satisfying $q_m \in F$. 

The language of $\pnfa$, denoted by $\Lang(\pnfa)$, is the set of strings on which $\pnfa$ has an accepting run.
Note that the priorities in PFAs have no impact on whether a string is accepted; rather they affect the way that the string is accepted. As a result, PFAs still define the class of regular languages. 
}

For a finite set $Q$, let $\overline{Q} = \bigcup_{n\in \Nat}\{ (q_1, \ldots, q_n) \mid \forall i \in [n], q_i \in Q \wedge \forall i,j \in[n], i \neq j \rightarrow q_i \neq q_j \}$. Intuitively, $\overline{Q}$ is the set of sequences of non-repetitive elements from $Q$. In particular, the empty sequence $() \in \overline{Q}$. Note that the length of each sequence from $\overline{Q}$ is bounded by  $| Q |$. For a sequence $P = (q_1, \ldots, q_n) \in \overline{Q}$ and  $q \in Q$, we write $q \in P$ if  $q = q_i$ for some $i \in [n]$. Moreover, for $P_1 = (q_1, \ldots, q_m) \in \overline{Q}$ and $P_2 = (q'_1, \ldots, q'_n) \in \overline{Q}$, we say $P_1 \cap P_2 = \emptyset$ if $\{q_1, \ldots, q_m\} \cap \{q'_1, \ldots, q'_n\} = \emptyset$.


\begin{definition}[Prioritized Streaming String Transducers]
A \emph{prioritized streaming string transducer} (\PSST) is a tuple $\psst = (Q, \Sigma, X, \delta, \tau, E, q_0, F)$, where 
\begin{itemize}
\item $Q$ is a finite set of states, 
\item $\Sigma$ is the input and output alphabet, 
\item $X$ is a finite set of string variables, 
\item $\delta \in Q \times \Sigma \rightarrow \overline{Q}$ defines the non-$\varepsilon$ transitions as well as their priorities (from highest to lowest),
\item $\tau \in Q \rightarrow \overline{Q} \times \overline{Q}$ such that for every $q \in Q$, if $\tau(q) = (P_1; P_2)$, then $P_1 \cap P_2 = \emptyset$, (Intuitively, $\tau(q)=(P_1; P_2)$ specifies the $\varepsilon$-transitions at $q$, with the intuition that the $\varepsilon$-transitions to the states in $P_1$ (resp. $P_2$) have higher (resp. lower) priorities than the non-$\varepsilon$-transitions out of $q$.)
\item $E$ associates with each transition a string-variable assignment function, i.e., $E$ is partial function from $Q \times \Sigma^\varepsilon \times
  Q$ to $X \rightarrow (X \cup \Sigma)^{\ast}$ such that its domain is the set of tuples $(q, a, q')$ satisfying that either $a \in \Sigma$ and $q' \in \delta(q, a)$ or $a = \varepsilon$ and $q' \in \tau(q)$,
\item  $q_0 \in Q$ is the initial state, and
\item  $F$ is the output function, which is a partial function from $Q$ to $(X \cup \Sigma)^{\ast}$.
\end{itemize}
\end{definition}
For $\tau(q)=(P_1; P_2)$, we will use $\pi_1(\tau(q))$ and $\pi_2(\tau(q))$ to denote $P_1$ and $P_2$ respectively.  
The size of $\psst$, denoted by $|\psst|$, is defined as $\sum \limits_{(q, a, q') \in \dom(E)} \sum \limits_{x \in X} |E((q, a, q'))(x)|$, where $|E((q, a, q'))(x)|$ is the length of $E(q, a, q')(x)$, i.e., the number of symbols from $X \cup \Sigma$ in it. A PSST $\psst $ is said to be \emph{copyless} if for each transition $(q, a, q')$ in $\psst$ and each $x \in X$, $x$ occurs in $(E(q, a, q')(x'))_{x' \in X}$ at most once. A PSST $\psst$ is said to be \emph{copyful} if it is not copyless. For instance, if $X = \{x_1, x_2\}$ and $E(q, a, q')(x_1) = x_1$ and $E(q, a, q')(x_2) = x_1a$ for some transition $(q, a, q')$, then $x_1$ occurs twice in $(E(q, a, q')(x'))_{x' \in X}$, thus $\psst$ is copyful. 

A run of $\psst$ on a string $w$ is a sequence $q_0 a_1 s_1 q_1 \ldots a_m s_m q_m$ such that
\begin{itemize}
%
\item for each $i \in [m]$, 
\begin{itemize}
\item either $a_i \in \Sigma$, $q_i \in \delta (q_{i-1}, a_i)$, and $s_i = E (q_{i - 1}, a_i, q_i)$, 
\item or $a_i = \varepsilon$, $q_i \in \tau(q_{i-1})$ and $s_i = E (q_{i - 1}, \varepsilon, q_i)$,
\end{itemize}

\item for every subsequence $q_i a_{i+1} s_{i+1} q_{i+1} \ldots a_{j} s_j q_j$ such that  $i < j$ and $a_{i+1} = \cdots = a_j = \varepsilon$,  it holds that each $\varepsilon$-transition occurs at most once in it, namely, for every $k, l: i \le k < l < j$, $(q_k, q_{k+1}) \neq (q_l, q_{l+1})$.
\end{itemize}
Note that it is possible that $\delta(q, a) = ()$, that is, there is no $a$-transition out of $q$. 
From the assumption that each $\varepsilon$-transition occurs at most once in a sequence of $\varepsilon$-transitions, we deduce that given a string $w$, the length of a run of $\psst$ on $w$, i.e. the number of transitions in it, is $O(|w||\psst|)$. 

For any pair of runs $R = q_0 a_1 s_1 \ldots a_m s_m q_m$ and $R' = q_0 a'_1
  s_1' \ldots a'_n s_n' q_n'$ such that $a_1 \ldots a_m = a'_1 \ldots a'_n$, we say that $R$ is of a higher priority over $R'$ if 
\begin{itemize}
	\item either $R'$ is a prefix of $R$ (in this case, the transitions of $R$ after $R'$ are all $\varepsilon$-transitions), 
	\item or there is an index $j$ satisfying one of the following constraints:
	\begin{itemize}
		\item $q_0 a_1 q_1 \ldots q_{j-1} a_j = q_0 a'_1 q'_1 \ldots q'_{j-1} a'_j$, $q_j \neq q'_j$, $a_j \in \Sigma$, and we have that $\delta (q_{j - 1}, a_j) =(\ldots, q_j, \ldots, q_j', \ldots)$,
		\item $q_0 a_1 q_1 \ldots q_{j-1} a_j = q_0 a'_1 q'_1 \ldots q'_{j-1} a'_j$, $q_j \neq q'_j$, $a_j  = \varepsilon$,  and one of the following holds: (i) $\pi_1(\tau(q_{j - 1})) = (\ldots, q_j, \ldots, q_j', \ldots)$, (ii) $\pi_2(\tau(q_{j - 1})) = (\ldots, q_j, \ldots, q_j', \ldots)$, or (iii) $q_j \in \pi_1(\tau(q_{j - 1}))$ and $q'_j \in \pi_2(\tau(q_{j-1}))$, 
		\item $q_0 a_1 q_1 \ldots q_{j-1}  = q_0 a'_1 q'_1 \ldots q'_{j-1} $, $a_j  = \varepsilon$, $a'_j  \in \Sigma$, $q_j \in \pi_1(\tau(q_{j - 1}))$, and $q'_j \in \delta(q_{j-1}, a'_j)$, 
		\item $q_0 a_1 q_1 \ldots q_{j-1}  = q_0 a'_1 q'_1 \ldots q'_{j-1} $, $a_j  \in \Sigma$, $a'_j  = \varepsilon$, $q_j \in \delta(q_{j - 1}, a_j)$, and $q'_j \in \pi_2(\tau(q_{j-1}))$.
	\end{itemize}
\end{itemize}

  
An \emph{accepting} run of $\psst$ on $w$ is a run of $\psst$ on $w$, say $R = q_0 a_1 s_1 \ldots a_m s_m q_m$, such that 1) $F(q_m)$ is defined, 2)  $R$ is of the highest priority among those runs satisfying 1). The output of $\psst$ on $w$, denoted by $\psst(w)$, is defined as $\eta_m(F(q_m))$, where $\eta_0(x) = \varepsilon$ for each $x \in X$, and $\eta_{i}(x) = \eta_{i-1}(s_{i}(x))$ for every $1 \le i \le m$ and $x \in X$. Note that here we abuse the notation $\eta_m(F(q_m))$ and $\eta_{i-1}(s_{i}(x))$ by taking a function $\eta$ from $X$ to $\Sigma^*$ as a function from $(X \cup \Sigma)^*$ to $\Sigma^*$, which maps each $x \in X$ to $\eta(x)$ and each $a \in \Sigma$ to $a$. If there is no accepting run of $\psst$ on $w$, then $\psst(w) = \bot$, that is, the output of $\psst$ on $w$ is undefined. The string relation defined by $\psst$, denoted by $\cR_\psst$,  is 
$\{(w, \psst(w)) \mid w \in \Sigma^\ast, \psst(w)  \neq \bot\}$.

\begin{example}
The {\PSST} $\cT=(Q, \Sigma, X, \delta, \tau, E,  q_{0}, F)$ to extract the match of the first capturing group for the regular expression \mintinline{javascript}{(\d+)(\d*)} 
%
is illustrated in Fig.~\ref{fig-psst-exmp}, where $x_1$ and $x_2$ store the matches of the two capturing groups. More specifically, in $\cT$ we have $\Sigma = \{0,\cdots,9\}$, $X= \{x_1,x_2\}$, $F(q_{4}) = x_1$ denotes the final output, and $\delta, \tau, E$ are illustrated 
in Fig.~\ref{fig-psst-exmp}, where the dashed edges denote the $\varepsilon$-transitions of lower priorities than the non-$\varepsilon$-transitions and the symbol $\ell$ denotes the currently scanned input letter. For instance, for the state $q_2$, $\delta(q_2, \ell) = (q_2)$ for $\ell \in \{0,\ldots, 9\}$, $\tau(q_2) = (();(q_3))$, $E(q_2, \ell, q_2)(x_1) = x_1 \ell$, $E(q_2, \ell, q_2)(x_2) = x_2$,  $E(q_2, \varepsilon, q_3)(x_1) = x_1$, and $E(q_2, \varepsilon, q_3)(x_2) = \varepsilon$. Note that the identity assignments, e.g. $E(q_2, \varepsilon, q_3)(x_1) = x_1$, are omitted in Fig.~\ref{fig-psst-exmp} for readability.  For the input string $w$=``2050'', the accepting run of $\cT$ on $w$ 
is 
\[
q_0 \xrightarrow[x_1:=\varepsilon]{\varepsilon} q_1 \xrightarrow[x_1:=x_12]{2} q_2  \xrightarrow[x_1:=x_10]{0} q_2  \xrightarrow[x_1:=x_15]{5} q_2  \xrightarrow[x_1:=x_10]{0} q_2  \xrightarrow[x_2:=\varepsilon]{\varepsilon} q_3  \xrightarrow{\varepsilon} q_4,
\]
where the value of $x_1$ and $x_2$ when reaching the state $q_4$ are ``2050'' and $\varepsilon$ respectively. 

\begin{figure*}[ht]
\centering
\includegraphics[width=0.8\textwidth]{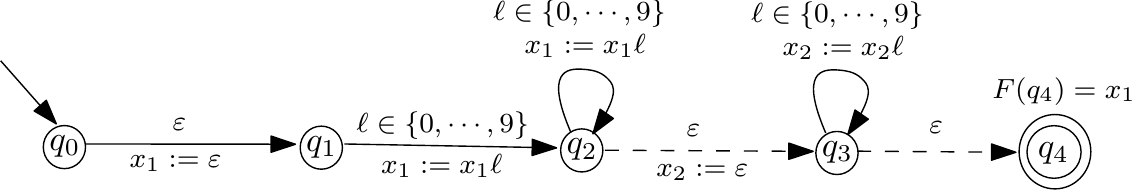}
\caption{The PSST $\cT$: Extract the matching of the first capturing group in ($\backslash$d+)($\backslash$d*)}
\label{fig-psst-exmp}
\end{figure*}
\end{example}




\subsection{Semantics of \regexp-String Matching} \label{sect:regextopsst}

We now define the formal semantics of {\regexp}. Traditionally they are interpreted as a regular language which can be defined inductively. 
In our case, where {\regexp} are mainly used in string functions, 
what matters is 
the intermediate result when parsing a string against the given {\regexp}. 
As a result, we shall present an operational (as opposed to traditional denotational) account of the \regexp-string matching by constructing {\PSST}s out of regular expressions.

Note that in \cite{BDM14,BM17}, a construction from {\regexp} to prioritized finite transducers (PFT) was given. The construction therein is a variant of the classical Thompson construction from regular expressions to nondeterministic finite automata \cite{Thompson68}. In particular, the size of the constructed PFT is linear in the size of the given {\regexp}. One may be tempted to think that the construction in \cite{BDM14,BM17} can be easily adapted to construct {\PSST}s out of regular expressions. 
Nevertheless, the construction in \cite{BDM14,BM17} does \emph{not} work for so called \emph{problematic regular expressions}, i.e.,  those regular expressions that contain the subexpressions $e^*$ or $e^{*?}$ with $\varepsilon \in \Lang(e)$. Moreover, the construction therein did not consider the repetition operators $[e_1^{\{m_1,m_2\}}]$ or $[e_1^{\{m_1,m_2\}?}]$. 
%
Our construction, which is considerably different from that in \cite{BDM14,BM17}, works for arbitrary regular expressions. In particular, the size of the constructed {\PSST} can be \emph{exponential} in the size of the given regular expression in the worst case. Moreover, we validate by extensive experiments that our construction is consistent with the actual \regexp-string matching in JavaScript. 

For technical convenience, we assume that $F$ in a {\PSST} is a set of final states, instead of an output function, in the sequel. 
The main idea of the construction is to split the set of final states, $F$, into two disjoint subsets $F_1$ and $F_2$, with the intention that $F_1$ and $F_2$ are responsible for accepting the empty string resp. non-empty strings. 
Therefore, the {\PSST}s constructed below are of the form $(Q, \Sigma, X, \delta, \tau, E, q_0, (F_1, F_2))$. 
The necessity of this splitting will be illustrated in Example~\ref{exmp-psst-partition}.

Furthermore, to deal with the situation that some capturing group may not be matched to any string and its value is undefined, we introduce a special symbol $\nullchar$ 
and assume that the initial values of all the string variables are $\nullchar$. 
For simplicity, in the definition of a {\PSST}, if $\delta(q, a, q') = ()$ or $\tau(q, \varepsilon, q') = ((); ())$,  they will not be stated explicitly. Moreover, we will omit all the assignments $E(q, a, q')(x)$ such that $E(q, a, q')(x) = x$.

\OMIT{To be polished: 
To ensure this, making copies of {\PSST} might be needed during the construction, as is done in concatenation of two {\PSST}s (Definition \ref{def-psstconcat}), which leads to the worst case exponential blow-up.
However, the separation of $F$ into two sets is necessary for correct modeling of {\regexp} semantics. This is justified by the observation that in real world languages like JavaScript, the behaviour of certain {\regexp} operator, like greedy Kleene star $e^*$, is influenced by whether the subexpression $e$ matches an empty string. 
For example, the expression $(a^{*?})^*$ either matches $\varepsilon$ holistically and in this case the subexpression $a^{*?}$ is not entered and matched at all, or it matches the concatenation of a number of \emph{non-empty} strings which are accepted by $a^{*?}$. Note that this is an instance of problematic regular expression in the setting of \cite{BDM14, BM17}. By splitting the set of final states, we keep track of the emptiness of subexpressions during the match, and thus give a more fine-grained semantics to those operators and successfully handle this class of expression.
}

For {\PSST}s of the form $(Q, \Sigma, X, \delta, \tau, E, q_0, (F_1, F_2))$, we introduce a notation to be used in the construction, namely, the concatenation of two {\PSST}s. 

\begin{definition}[Concatenation of two PSSTs]\label{def-psstconcat}
For $i \in \{1,2\}$, let $\cT_i$ be a PSST such that $\cT_i = (Q_i, \Sigma, X_i, \delta_i, \tau_i, E_i, q_{i,0}, (F_{i,1}, F_{i,2}))$. Then the \emph{concatenation} of $\cT_1$ and $\cT_2$, denoted by $\cT_1 \concat \cT_2$, is defined as follows (see Fig.~\ref{fig-psstconcat}): 
Let  
$\cT'_{2} = (Q'_{2}, \Sigma, X_{2}, \delta'_{2}, \tau'_{2}, E_{2}', q'_{2,0}, (F'_{2, 1}, F'_{2,2}))$ be a fresh copy of $\cT_{2}$, but with the string variables of $\cT_{2}$ kept unchanged. 
Then 
\[\cT = ( Q_{1} \cup Q_{2} \cup Q'_{2}, \Sigma, X_1 \cup X_2, \delta, \tau, q_{1,0}, (F_{2,1}, F_{2,2} \cup F'_{2,1} \cup F'_{2,2}))\] 
where 
	\begin{itemize}
	\item $\delta$ comprises the transitions in $\delta_1$, $\delta_2$, and $\delta'_2$,
	%
%
			%
	\item $\tau$ comprises the transitions in $\tau_1$, $\tau_2$, $\tau'_2$, and the following transitions,
	\begin{itemize}
%
	\item for every $f_{1,1} \in F_{1,1}$, $\tau(f_{1,1}) = ((q_{2,0}); ())$, 
	\item for every $f_{1,2} \in F_{1,2}$, $\tau(f_{1,2}) = ((q'_{2,0}), ())$,
	\end{itemize}
	\item $E$ inherits all the assignments in $E_1$, $E_2$, and $E'_2$, and includes the following assignments:  for every $f_{1,1} \in F_{1,1}$, $f_{1,2} \in F_{1,2}$, and $x' \in X_2$, $E(f_{1,1}, \varepsilon, q_{2,0})(x') = E(f_{1,2}, \varepsilon, q'_{2,0})(x')= \nullchar$. (Intuitively, the values of all the variables in $X_2$ are reset when entering $\cT_2$ and $\cT'_2$.)
  \end{itemize}
		\begin{figure}[tb]
			\centering
			\includegraphics[width = 0.5\textwidth]{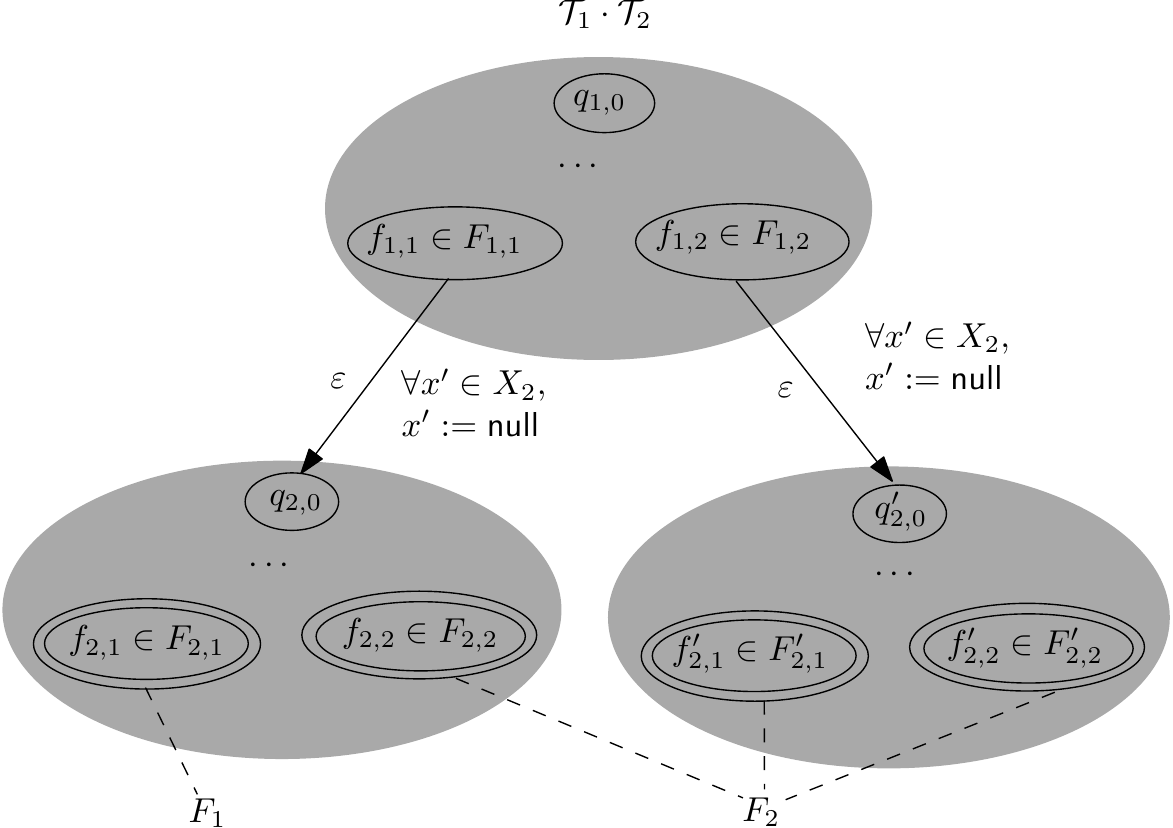}
			\caption{$\cT_1\concat \cT_2$: Concatenation of $\cT_1$ and $\cT_2$}
			\label{fig-psstconcat}
		\end{figure}  
\end{definition}
Note that in the above definition, it is possible that $X_1 \cap X_2 \neq \emptyset$. We remark that if $F_{1,1} = \emptyset$ or $F_{2,1} = \emptyset$, then \emph{one copy} of $\cT_2$, instead of two copies, is sufficient for the concatenation.

We shall recursively construct a {\PSST} $\cT_e$ for each {\pcre} $e$, such that the initial state has no incoming transitions and each of its final states has no outgoing transitions. Moreover, all the transitions out of the initial state are $\varepsilon$-transitions. 
%
We assume that in $\cT_e$, a string variable $x_{e'}$ is introduced for each subexpression $e'$ of $e$. 

The construction is technical and below we only select to present some representative cases. The other cases are
\ifproceeding given in the long version of this paper \cite{popl22-full}. 
\else relegated to Appendix~\ref{app:reg2psst}. 
\fi

\paragraph{Case $e = (e_1)$} $\cT_e$ is adapted from $\cT_{e_1} = (Q_{e_1}, \Sigma, X_{e_1}, \delta_{e_1}, \tau_{e_1}, E_{e_1},  q_{e_1,0}, (F_{e_1,1}, F_{e_1,2}))$ by adding the string variable $x_e$ and the assignments for $x_e$, that is, $X_e = X_{e_1} \cup \{x_e\}$ and for each transition $(q, a, q')$ in $\cT_{e_1} $ with $a \in \Sigma^\varepsilon$, we have $E_e(q, a, q')(x_e) = E_{e_1}(q, a, q')(x_{e_1})$.

\paragraph{Case $e = [e_1 + e_2]$ (see Fig.~\ref{fig-reg2pfa-1})} For $i \in \{1, 2\}$, let it be the case that we have
$\cT_{e_i} = (Q_{e_i}, \Sigma, X_{e_i}, \delta_{e_i}, \tau_{e_i}, E_{e_i},  q_{e_i,0}, (F_{e_i,1}, F_{e_i,2}))$. Moreover, assume $X_{e_1} \cap X_{e_2} = \emptyset$.
%
Then 
\[\cT_e = (Q_{e_1} \cup Q_{e_2} \cup \{q_{e,0}\}, \Sigma, X_{e_1} \cup X_{e_2} \cup \{x_e\}, 
		\delta_e, \tau_e, E_e, q_{e,0}, (F_{e_1,1} \cup F_{e_2,1}, F_{e_1,2} \cup F_{e_2,2}))\] where  
\begin{itemize}
			\item $\delta_e$ comprises the transitions in $\delta_{e_1}$ and $\delta_{e_2}$,
			%
			%
			\item $\tau_e$ comprises the transitions in $\tau_{e_1}$ and $\tau_{e_2}$, as well as the transition $\tau_e(q_{e,0}) = ((q_{e_1,0}); (q_{e_2,0}))$,
			\item $E_e$ inherits $E_{e_1}$, $E_{e_2}$, plus the assignments $E_e(q_{e,0}, \varepsilon, q_{e_1,0})(x_{e}) = E_e(q_{e,0}, \varepsilon, q_{e_2,0})(x_{e}) =\varepsilon$, as well as $E_e(q,a, q')(x_{e}) = x_e a$ for every transition $(q, a, q')$ in $\cT_{e_1}$ and $\cT_{e_2}$ (where $a \in \Sigma^\varepsilon$).
\end{itemize}
		\begin{figure}[ht]
			\centering
			\includegraphics[width = 0.6\textwidth]{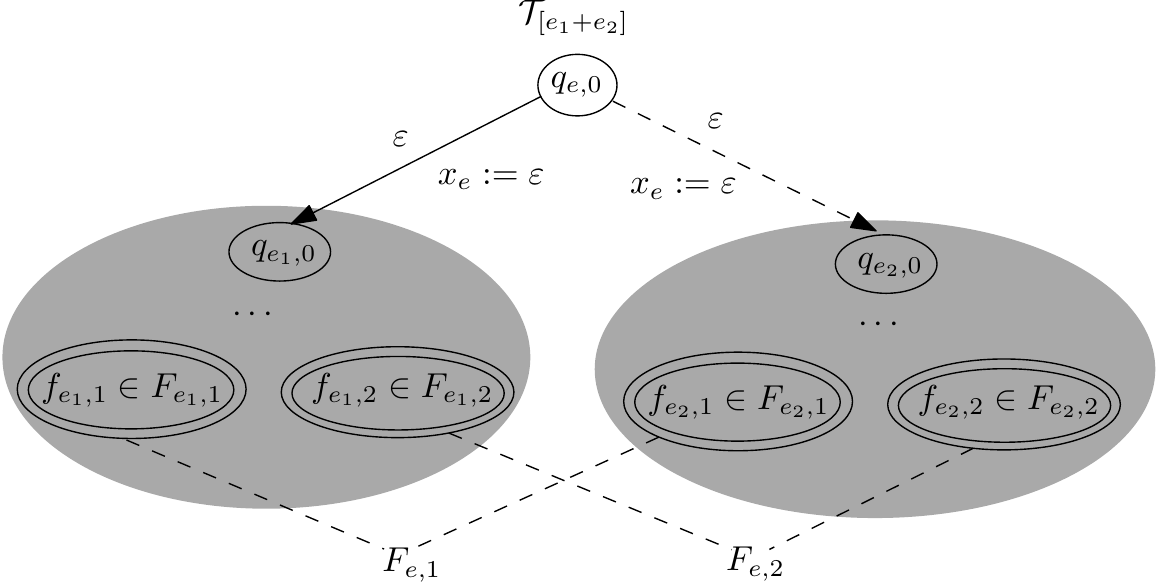}
			\caption{The PSST $\cT_{[e_1+e_2]}$}
			\label{fig-reg2pfa-1}
		\end{figure}

\paragraph{Case $e = [e_1 \concat e_2]$} 
For $i \in \{1,2\}$, let  
$\cT_{e_i} = (Q_{e_i}, \Sigma, X_{e_i}, \delta_{e_i}, \tau_{e_i}, E_{e_i}, q_{e_i,0}, (F_{e_i,1}, F_{e_i,2}))$. Moreover, let us assume that $X_{e_1}\cap X_{e_2}=\emptyset$.
Then $\cT_e$ is obtained from $\cT_{e_1} \concat \cT_{e_2}$ (the concatenation of $\cT_{e_1}$ and $\cT_{e_2}$, see Fig.~\ref{fig-psstconcat}) by adding a string variable $x_e$, a fresh state $q_{e,0}$ as the initial state, the $\varepsilon$-transition $\tau_e(q_{e,0}) = ((q_{e_1,0});())$, and the assignments $E_e(q_{e,0}, \varepsilon, q_{e_1,0})(x_e) = \varepsilon$, $E_e(p, a, q)(x_e) = x_e a$ for every transition $(p, a, q)$ in $\cT_{e_1}$, $\cT_{e_2}$, and $\cT'_{e_2}$ (where $a \in \Sigma^\varepsilon$).

\paragraph{Case $e = [e_1^?]$ (see Fig.~\ref{fig-reg2pfa-6})} Let $\cT_{e_1} = (Q_{e_1}, \Sigma, X_{e_1}, \delta_{e_1}, \tau_{e_1}, E_{e_1}, q_{e_1,0}, (F_{e_1,1}, F_{e_1,2}))$. Then 
\[\cT_e = (Q_{e_1} \cup \{q_{e,0}, f_{\varepsilon}\}, \Sigma, X_{e_1} \cup \{x_e\}, 
		\delta_e, \tau_e, E_{e}, q_{e,0}, (\{f_{\varepsilon}\}, F_{e_1,2}))\]
where  
		\begin{itemize}
			\item $\delta_e$ is exactly $\delta_{e_1}$,
			%
			\item $\tau_e$ comprises the transitions in $\tau_{e_1}$, as well as the transition $\tau_e(q_{e,0}) = ((q_{e_1,0}, f_{\varepsilon}); ())$,
			\item $E_e$ inherits $E_{e_1}$ and includes the assignments $E_e(q_{e,0},\varepsilon,q_{e_1, 0})(x_e) = E_e(q_{e,0},\varepsilon,f_{\varepsilon})(x_e) =\varepsilon$, as well as $E_e(q, a, q')(x_e) =x_e a$ for every transition $(q, a, q')$ in $\cT_{e_1}$ (where $a \in \Sigma^\varepsilon$).
			%
		\end{itemize}
Note that $F_{e_1,1}$ is not included into $F_{e,1}$ here.
%
		\begin{figure}[tb]
			\centering
			\includegraphics[width = 0.8\textwidth]{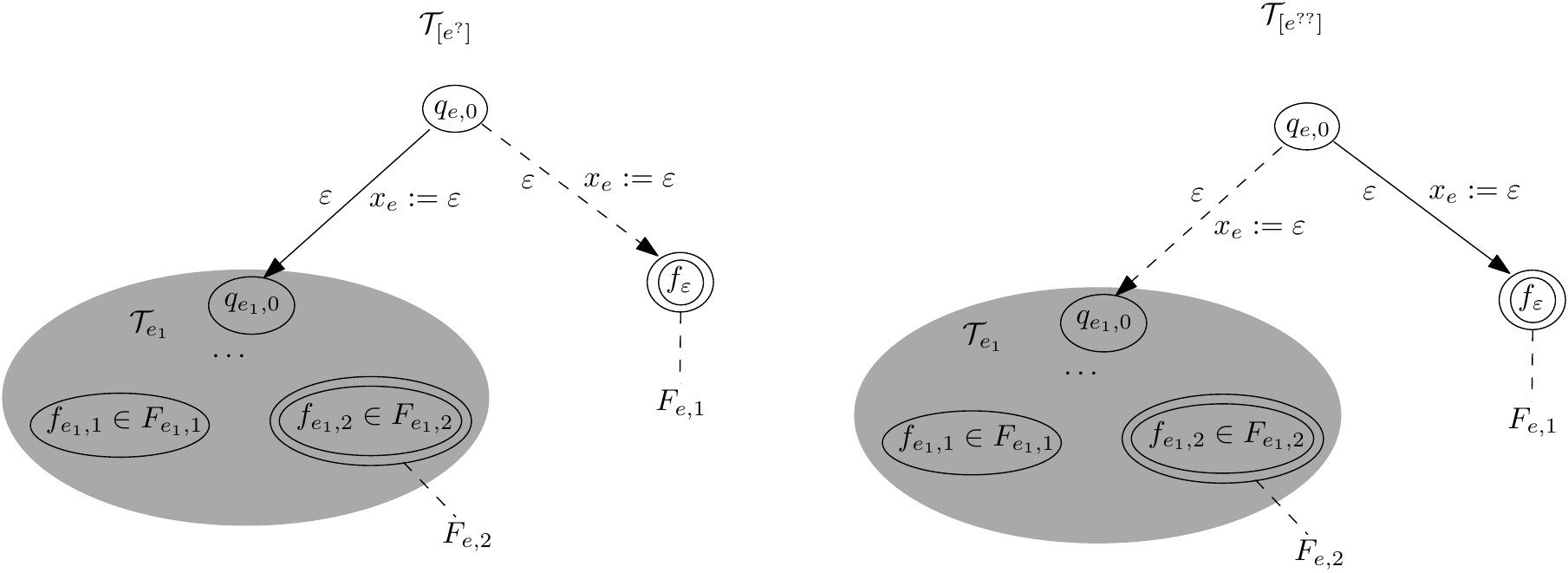}
			\caption{The PSST $\cT_{[e_1^?]}$ and $\cT_{[e_1^{??}]}$}
			\label{fig-reg2pfa-6}
		\end{figure}

\paragraph{Case $e = [e_1^{??}]$ (see Fig.~\ref{fig-reg2pfa-6})}  In this case, $\cT_{[e_1^{??}]}$ is  almost the same as $\cT_{[e_1^{?}]}$. The only difference is that the priorities of the two $\varepsilon$-transitions out of $q_{e,0}$ are swapped, namely, $\tau_e(q_{e,0}) = ((f_{\varepsilon}, q_{e_1,0}); ())$ here.
\OMIT
{
Let $\cT_{e_1} = (Q_{e_1},
		\Sigma, X_{e_1}, \delta_{e_1}, \tau_{e_1}, E_{e_1} , q_{e_1,0}, (F_{e_1,1}, F_{e_1,2}))$. 
Then 
\[\cA_e = (Q_{e_1} \cup \{q_{e,0}, q_{\varepsilon}\}, \Sigma, X, 
		\delta_e, \tau_e, E, q_{e,0}, (\{q_{\varepsilon}\}, F_{e_1,2}))\] 
where 
		\begin{itemize}
			\item $q_{e,0}  \not \in Q_{e_1}$
			\item $\delta_e(q, a) = \delta_{e_1}(q, a)$ for every $q \in Q_{e_1}$ and $a \in \Sigma$, $\delta_e(q_{e,0}, a)  = ()$ and $\delta_e(q_{\varepsilon}, a) = ()$ for every $a \in \Sigma$, 
			\item $\tau_e(q) = \tau_{e_1}(q)$ for every $q \in Q_{e_1}$, $\tau_e(q_{e,0}) = ((q_{\varepsilon}, q_{e_1,0}); ())$,
			
			\item for each transition $(q, a, q')$ from $\delta_{e_1}$, $E(q,a,q')(x) = E_1(q,a,q')(x)$ and $E(q_{e,0},\varepsilon,q_{\varepsilon})(x) =\varepsilon$
		\end{itemize}
}
\OMIT{
Fig.~\ref{fig-reg2pfa-7} depicts the construction. 
		\begin{figure}[ht]
			\centering
			\includegraphics[width = 0.5\textwidth]{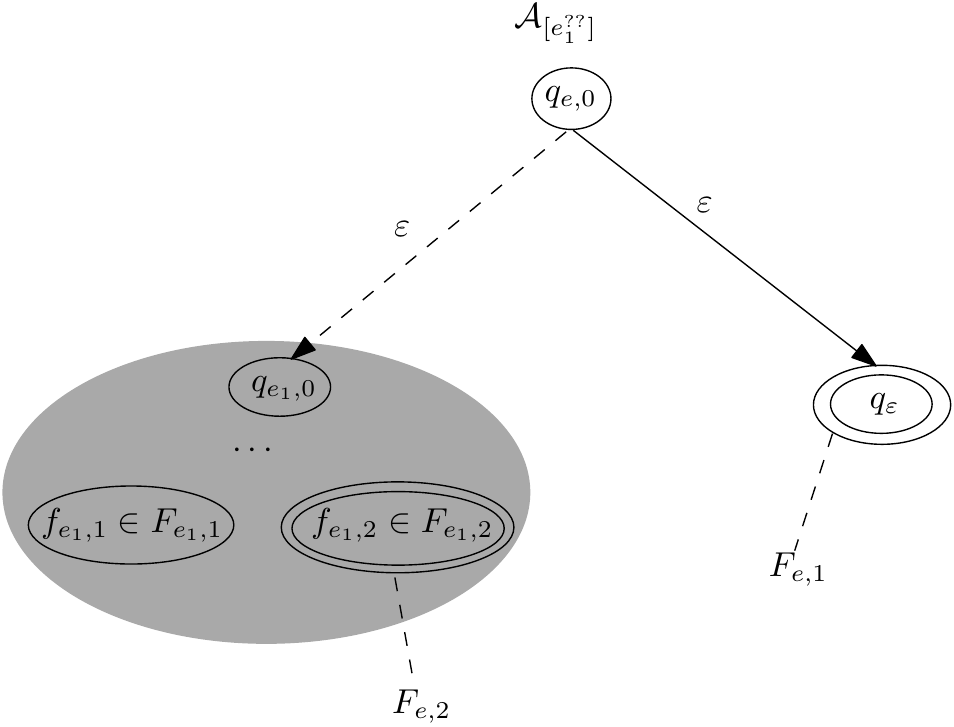}
			\caption{The PFA $\cA_{[e_1^{??}]}$}
			\label{fig-reg2pfa-7}
		\end{figure}
}	


\paragraph{Case $e = [e_1^{\ast}]$ (see Fig.~\ref{fig-reg2pfa-3})} 
Let $\cT_{e_1} = (Q_{e_1}, \Sigma, X_{e_1}, \delta_{e_1}, \tau_{e_1}, E_{e_1}, q_{e_1,0}, (F_{e_1,1}, F_{e_1,2}))$. Then
\[ \cT_e = (Q_{e_1} \cup \{q_{e,0}, f_{e,1}, f_{e,2}\}, \Sigma, X_{e}, \delta_e, E_{e}, \tau_e, q_{e,0}, (\{f_{e,1}\}, \{f_{e,2}\}))\] where 
		\begin{itemize}
			
			\item $\delta_e$ is exactly $\delta_{e_1}$,
%
			\item $\tau_e$ comprises the transitions in $\tau_{e_1}$,  as well as the transitions $\tau_e(q_{e,0}) = ((q_{e_1,0}, f_{e,1}); ())$,  $\tau_e(f_{e_1,1}) = ((q_{e_1,0});())$ for every $f_{e_1,1} \in F_{e_1,1}$, and $\tau_e(f_{e_1,2}) = ((q_{e_1,0}, f_{e,2});())$ for every $f_{e_1,2} \in F_{e_1,2}$, 
			\item $E_e$ inherits $E_{e_1}$ plus the assignments $E_e(q_{e,0},\varepsilon,f_{e,1})(x_e) = E_e(q_{e,0},\varepsilon,q_{e_1,0})(x_e) = \varepsilon$, $E_e(f_{e_1,1},\varepsilon, q_{e_1,0})(x) = E_e(f_{e_1,2},\varepsilon, q_{e_1,0})(x)= \nullchar$ for every $f_{e_1,1} \in F_{e_1,1}$, $f_{e_1,2} \in F_{e_1,2}$, and $x \in X_{e_1}$, as well as $E_e(q, a, q')(x_e) = x_e a$ for every transition $(q, a, q')$ in $\cT_{e_1}$ with $a \in \Sigma^\varepsilon$. (Intuitively, the values of all the string variables in $X_{e_1}$ are reset when starting a new iteration of $e_1$.)
		\end{itemize}

\begin{figure}[tb]
	\centering
	\includegraphics[width = 0.8\textwidth]{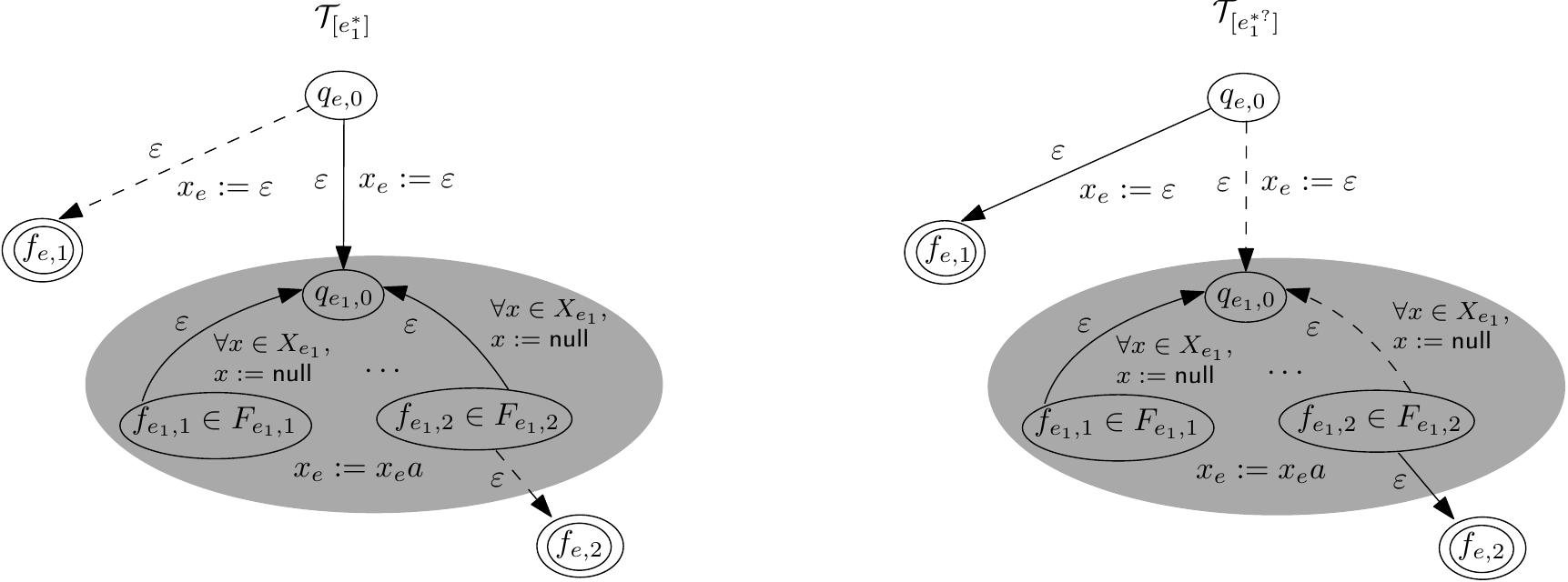}
	\caption{The PSST $\cT_{[e_1^\ast]}$ and $\cT_{[e_1^{\ast?}]}$}
	\label{fig-reg2pfa-3}
\end{figure}

\paragraph{Case $e = [e_1^{\ast?}]$ (see Fig.~\ref{fig-reg2pfa-3})} The construction is almost the same as $e = [e_1^{\ast}]$. The only difference is that the priorities of the $\varepsilon$-transitions out of $q_{e,0}$ resp. $f_{e_1,2} \in F_{e_1,2}$ are swapped.
\OMIT{
Let $\cA_{e_1} = (Q_{e_1}, \Sigma, X_1, \delta_{e_1}, \tau_{e_1}, E_1, q_{e_1,0}, (F_{e_1,1}, F_{e_1,2}))$. 
Then 
\[\cA_e = (Q_{e_1} \cup \{q_{e,0}, f_{e,0}, f_{e,1}\}, \Sigma, X_1, \delta_e, \tau_e, E, q_{e,0}, (\{f_{e,0}\}, \{f_{e,1}\}))\]  
where 
		\begin{itemize}
			\item $q_{e,0}, f_{e,0} \not \in Q_{e_1}$,
			
			\item for every $q \in Q_{e_1}$ and $a \in \Sigma$, $\delta_e(q, a) = \delta_{e_1}(q, a)$, 
			
			\item for every $q \in Q_{e_1} \setminus (F_{e_1,1} \cup F_{e_1,2})$,  $\tau_e(q) = \tau_{e_1}(q)$, moreover, $\tau_e(q_{e,0}) = ((f_{e,0}, q_{e_1,0}); ())$,  $\tau_e(q) = ((q_{e_1,0});())$ for every $q \in F_{e_1,1}$, $\tau_e(q) = ((f_{e,1}, q_{e_1,0});())$ for every $q \in F_{e_1,2}$, $\tau_e(f_{e,0}) =\tau_e(f_{e,1}) = (();())$. (Intuitively, the $\varepsilon$-transitions from $q_{e,0}$ to $f_{e,0}$ and $q_{e_1,0}$ , from each $q \in F_{e_1,1}$ to  $q_{e_1,0}$, and from each $q \in F_{e_1,2}$ to $f_{e,1}$ and $q_{e_1,0}$ respectively are added, moreover, the $\varepsilon$-transition from $q_{e,0}$ to $q_{e_1,0}$ and from $q \in F_{e_1,2}$ to $q_{e_1,0}$ are of the lowest priority.)
			
			\item for each transition $(q, a, q')$ from $\delta_{e_1}$, $E(q,a,q')(x) = E_1(q,a,q')(x)$, and $E(q_{e,0},\varepsilon,q_{f_e,0})(x) =\varepsilon$, $E(q_{e,0},\varepsilon,q_{e_1,0})(x) =\varepsilon$, $E(f_{e_1,2},\varepsilon,f_{e,1})(x) =x$.
		\end{itemize}
}

\paragraph{Case $e = [e_1^{+}]$}  We first construct $\cT_{e_1}$ and $\cT^-_{[e^\ast_1]}$, where $\cT^-_{[e^\ast_1]}$ is obtained from $\cT_{[e^\ast_1]}$ by dropping the string variable $x_{[e^\ast_1]}$. Therefore, $\cT_{e_1}$ and $\cT^-_{[e^\ast_1]}$ have the same set of string variables, $X_{e_1}$. Then we construct $\cT_{e}$ by adding into $\cT_{e_1} \concat \cT^-_{[e^\ast_1]}$ a fresh state $q_{e,0}$ as the initial state, and the transitions $\tau_e(q_{e,0}) = ((q_{e_1,0});())$, as well as the assignments $E_e(q_{e,0}, \varepsilon, q_{e_1,0})(x_e) = \varepsilon$, $E_e(q, a, q')(x_e) = x_e a$ for every transition $(q, a, q')$ in $\cT_{e_1} \concat \cT^-_{[e^\ast_1]}$. 

\paragraph{Case $e = [e_1^{\{m_1,m_2\}}]$ for $1 \le m_1 < m_2$ (see Fig.~\ref{fig-reg2pfa-4})} We first construct $\cT^{\{m_1\}}_{e_1}$ as the concatenation of $m_1$ copies of $\cT_{e_1}$ (Recall Definition~\ref{def-psstconcat} for the concatenation of PSSTs). Note that $\cT^{\{m_1\}}_{e_1}$ is different from $\cT_{e_1^{m_1}}$, the PSST constructed from $e_1^{m_1}$, the concatenation of the expression $e_1$ for $m_1$ times. In particular, the set of string variables in $\cT^{\{m_1\}}_{e_1}$ is $X_{e_1}$, which is different from that of $\cT_{e_1^{m_1}}$. 

Then we construct the PSST $\cT^{\{1,m_2-m_1\}}_{e_1}$ (see Fig.~\ref{fig-reg2pfa-4}), which consists of $m_2-m_1$ copies of $\cT_{e_1}$, denoted by $(\cT^{(i)}_{e_1})_{i \in [m_2-m_1]}$, as well as the $\varepsilon$-transition from $q^{(1)}_{e_1,0}$ to a fresh state $f^\prime_0$ (of the lowest priority), and the $\varepsilon$-transitions from each $f^{(i)}_{e_1,2} \in F^{(i)}_{e_1,2}$ with $1\le i < m_2-m_1$ to $q^{(i+1)}_{e_1,0}$ (of the highest priority) and a fresh state $f^\prime_1$ (of the lowest priority). The final states of $\cT^{\{1,m_2-m_1\}}_{e_1}$ are $(\{f_0'\},\{f_1'\})$. (Intuitively, each $\cT^{(i)}_{e_1}$ accepts only nonempty strings, thus $f^{(i)}_{e_1,1} \in F^{(i)}_{e_1,1}$ contains no outgoing transitions in $\cT^{\{1,m_2-m_1\}}_{e_1}$. ) Note that the set of string variables in $\cT^{\{1,m_2-m_1\}}_{e_1}$ is still $X_{e_1}$.
\begin{figure}[tb]
	\centering
	\includegraphics[width = 0.8\textwidth]{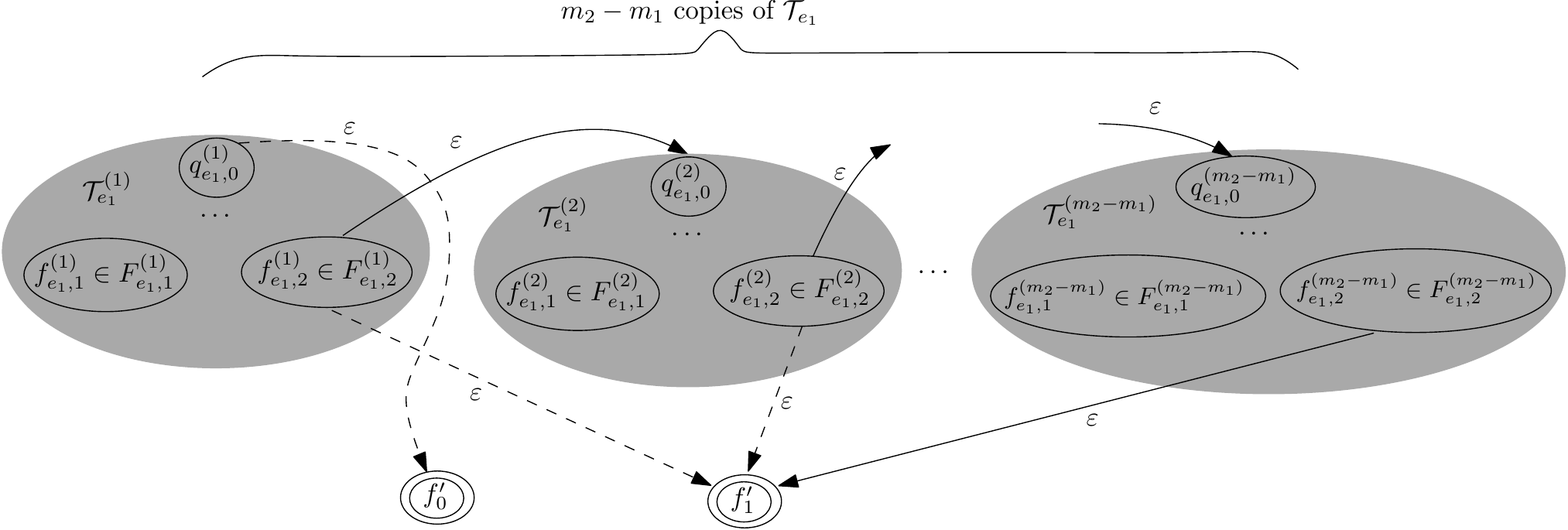}
	\caption{The PSST $\cT^{\{1,m_2-m_1\}}_{e_1}$}
	\label{fig-reg2pfa-4}
\end{figure}  

Finally, we construct $\cT_e$ from $\cT^{\{m_1\}}_{e_1} \concat \cT^{\{1,m_2-m_1\}}_{e_1}$, the concatenation of $\cT^{\{m_1\}}_{e_1}$ and $\cT^{\{1,m_2-m_1\}}_{e_1}$, by adding a fresh state $q_{e,0}$, a string variable $x_e$, the $\varepsilon$-transition $\tau_e(q_{e,0}) = ((q_{e_1,0});())$ (assuming that $q_{e_1,0}$ is the initial state of $\cT^{\{m_1\}}_{e_1}$),  and also the assignments $E_e(q_{e_0}, \varepsilon, q_{e_1,0})(x_e) = \varepsilon$, as well as $E_e(q, a, q')(x_e) = x_e a$ for each transition $(q, a, q')$ in  $\cT^{\{m_1\}}_{e_1} \concat \cT^{\{1,m_2-m_1\}}_{e_1}$.

\begin{example}\label{exmp-pfa}
Consider {\pcre} $e = [a^+]$. 
	%
We first construct $\cT_{a}$ and $\cT^-_{a^*}$ (recall that $\cT^-_{a^*}$ is obtained from $\cT_{a^*}$ by removing the string variable $x_{[a^*]}$, see Fig.~\ref{fig-pfa}).  Then we construct $\cT_{e}$ from $\cT_{a} \concat \cT^-_{a^*}$ by adding the initial state $q_{[a^+],0}$, the string variable $x_{[a^+]}$, as well as the assignments for $x_{[a^+]}$ (see Fig.~\ref{fig-pfa}). Note here only one copy of $\cT^-_{a^*}$ is used in $\cT_{a} \concat \cT^-_{a^*}$, since $\varepsilon$ is not accepted by $\cT_{a}$.
 %
	%
	\begin{figure}[tb]
		\centering
		\includegraphics[width=0.9\textwidth]{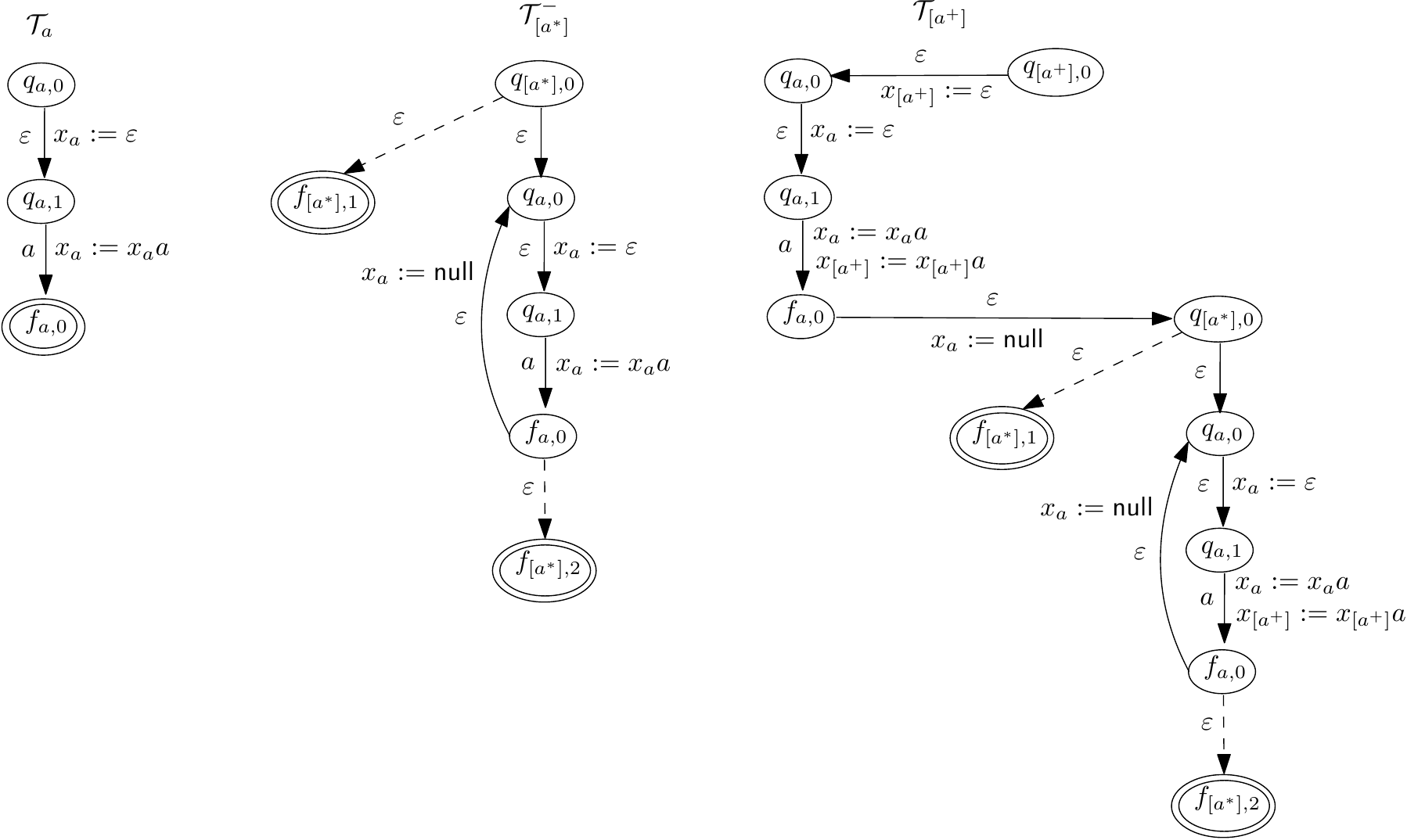}
		\caption{The PSST $\cT_e$ for $e = [a^+]$}
		\label{fig-pfa}
	\end{figure}
\end{example}

The following example illustrates the necessity of splitting final states into two disjoint subsets.

\begin{example}\label{exmp-psst-partition}
Consider {\pcre} $e = [([a^{*?}])^{*}]$. If we execute ``$aaa$''.match(/($a$*?)*/)  in node.js, then 
the result is the array $[$``$aaa$'', ``$a$''$]$, which means ($a$*?)* is matched to ``$aaa$'' and ($a$*?) is matched to $a$. If we did not split the set of final states into two disjoint subsets, we would have obtained a {\PSST} $\cT^\prime_e$ as illustrated in Fig.~\ref{fig-psst-partition}, to simulate the matching of $e$ against words. The accepting run of $\cT^\prime_e$ on $w  = aaa$ is 
%
$$
\begin{array}{l}
q_{[([a^{*?}])^{*}]} \xrightarrow{\varepsilon} 
q_{([a^{*?}]), 0} \xrightarrow{\varepsilon} 
f_{([a^{*?}])} \xrightarrow{\varepsilon} 
q_{([a^{*?}]), 0} \xrightarrow{\varepsilon} 
q_{a, 0} \xrightarrow{\varepsilon}
q_{a, 1} \xrightarrow{a} 
f_a \xrightarrow{\varepsilon} 
f_{([a^{*?}])} \xrightarrow{\varepsilon} 
\\
\hspace*{8mm} 
q_{([a^{*?}]), 0} \xrightarrow{\varepsilon} 
q_{a, 0} \xrightarrow{\varepsilon} 
q_{a, 1} \xrightarrow{a} 
f_a \xrightarrow{\varepsilon} 
f_{([a^{*?}])} \xrightarrow{\varepsilon} 
q_{([a^{*?}]), 0} \xrightarrow{\varepsilon} 
q_{a, 0} \xrightarrow{\varepsilon} 
q_{a, 1} \xrightarrow{a} 
f_a \xrightarrow{\varepsilon} 
f_{([a^{*?}])} \xrightarrow{\varepsilon} 
\\
\hspace*{12mm} 
q_{([a^{*?}]), 0} \xrightarrow{\varepsilon} 
f_{([a^{*?}])} \xrightarrow{\varepsilon} 
f_{[([a^{*?}])^*]},
\end{array}
$$
where $x_{e} = aaa$ and $ x_{([a^{*?}])} = \varepsilon$, namely, $e $ is matched to ``$aaa$''  and  $([a^{*?}])$ is matched to $\varepsilon$. Therefore, the semantics of $e$ defined by $\cT^\prime_e$  is \emph{inconsistent} with semantics of /($a$*?)*/ in node.js. Intuitively, the semantics of /($a$*?)*/ in node.js requires that either it is matched to $\varepsilon$ in whole and the subexpression $a$*? is \emph{not} matched at all, or it is matched to a concatenation of \emph{non-empty} strings each of which matches $a$*?. This semantics can be captured by (adapted) {\PSST}s where the set of final states is split into two disjoint subsets.
	\begin{figure}[tb]
		\centering
		\includegraphics[width=0.9\textwidth]{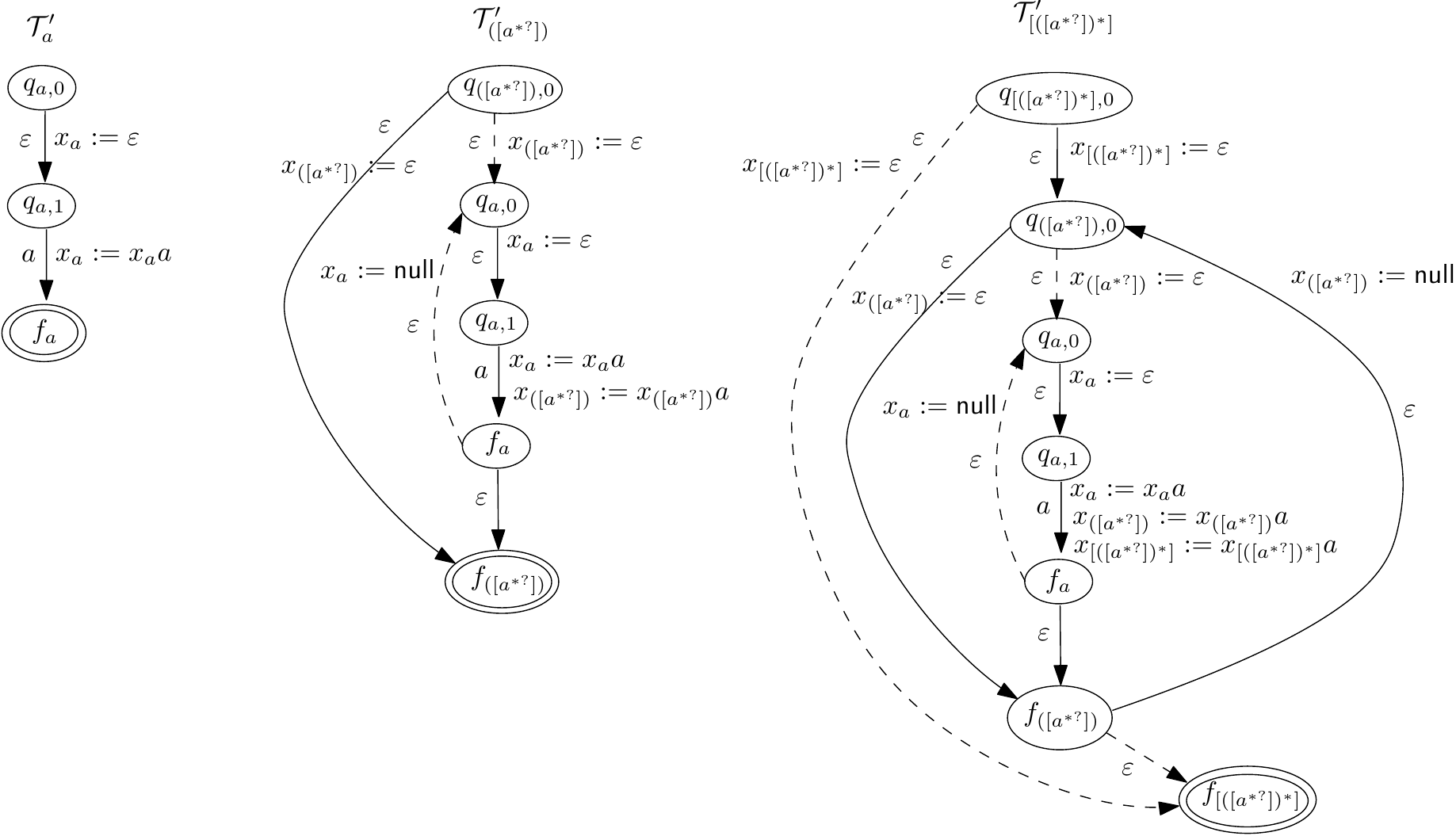}
		\caption{The PSST $\cT^\prime_e$ for $e = [([a^{*?}])^{*}]$ with a single set of final states}
		\label{fig-psst-partition}
	\end{figure}
\end{example}




\paragraph{Validation experiments for the formal semantics} \label{sect：valid}
We have defined 
{\regexp}-string matching by constructing {\PSST}s. 
In the sequel, we conduct experiments to validate the formal semantics against the actual JavaScript {\regexp}-string matching.

Let $\opset$ denote the set of {\regexp} operators: alternation $+$, concatenation $\concat$, optional $?$, lazy optional $??$, Kleene star $*$, lazy Kleene star $*?$, Kleene plus $+$, lazy Kleene plus $+?$, repetition $\{m_1,m_2\}$, and lazy repetition $\{m_1,m_2\}?$. Moreover, let $\opset^{2}$ (resp. $\opset^{3}$) denote the set of pairs (resp. triples) of operators from $\opset$. 
Aiming at a good coverage of different syntactical ingredients of {\regexp}, we generate regular expressions for every element of $\opset^{\le 3} = \opset \cup \opset^2 \cup \opset^3$.
As arguments of these operators, we consider the following character sets: $\mathbb{S} = \{$a, $\ldots$, z$\}$, $\mathbb{C}=\{$A, $\ldots$, Z$\}$, $\mathbb{D} = \{0,\ldots,9\}$, and $\mathbb{O}$, the set of ASCII symbols not belonging to $\mathbb{S} \cup \mathbb{C} \cup \mathbb{D}$.
Intuitively, these character sets correspond to JavaScript character classes [a-z], [A-Z], [0-9], and [{\textasciicircum}a-zA-Z0-9] (where {\textasciicircum} denotes complement).
Moreover, for the regular expression generated for each element of $\opset^{\le 3}$, we set the subexpression corresponding to its first component as the capturing group. 
For instance, for the pair $(*?, *)$, we generate the {\regexp} $[([\mathbb{S}^{*?}])^{*}]$. In the end, we generate $10+10*10+10*10*10 = 1110$ {\regexp}es. 

For each generated {\regexp} $e$, we construct a PSST $\psst_e$, whose output corresponds to the matching of the first capturing group in $e$.  Moreover, we generate from $\psst_e$ an input string $w$ as well as the corresponding output $w'$. We require that the length of $w$ is no less than some threshold (e.g., $10$), in order to avoid the empty string and facilitate a  meaningful comparison with the actual semantics of JavaScript regular-expression matching. 
Let {\sf reg} be the JavaScript regular expression corresponding to $e$. Then we execute the following JavaScript program $\cP_{e,w}$,
\begin{center}
{
\small
\begin{minted}{javascript}
      var x = w; console.log(x.match(reg)[1]);
\end{minted}
}
\end{center}
and confirm that its output is equal to $w'$, thus validating that the formal semantics of  \regexp-string matching defined by PSSTs is consistent with the actual semantics of JavaScript {\sf match} function. For instance, for the {\regexp} expression $[([\mathbb{S}^{*?}])^{*}]$, we generate from the $\psst_e$ the input string $w= aaaaaaaaaa$, together with the output $a$. Then we generate  the JavaScript program from ${\sf reg}$ and $w$, execute it, and obtain the same output $a$.

In all the generated {\regexp}s, we confirm the consistency of the formal semantics of  \regexp-string matching defined by PSSTs with the actual JavaScript semantics, namely, for each {\regexp}  $e$, the output of the PSST $\psst_e$ on $w$ is equal to the output of the JavaScript program $\cP_{e,w}$.

\subsection{Modeling string functions by PSSTs}

The $\extract$, $\replace$ and $\replaceall$ functions can be accurately modeled using PSSTs.
That is, we can reduce satisfiability of our string logic to satisfiability of a logic containing only concatenation, PSST transductions, and membership of regular languages.

\begin{lemma}\label{lem-str-fun-to-psst}
    The satisfiability of $\strline$ reduces to the satisfiability of boolean combinations of formulas of the form $z=x \concat y$, $y=\cT(x)$, and $x \in \cA$, where $\cT$ is a PSST and $\cA$ is an FA.
\end{lemma}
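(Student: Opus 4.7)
The plan is to eliminate each of the three regex-dependent primitives ($\extract$, $\replace$, $\replaceall$) and the regular-membership atom $x \in e$ by replacing them with PSST-transduction atoms $y = \cT(x)$ and FA-membership atoms $x \in \cA$, respectively, while leaving the boolean structure, the word equations $x=y$, and the concatenation atoms $z = x \concat y$ untouched. Since Section~\ref{sect:regextopsst} already associates with every $e \in \regexp$ a PSST $\cT_e$ whose string variables $x_{e'}$ record, along the highest-priority accepting run on input $w$, the match of each capturing subexpression $e'$ of $e$, essentially all of the work has been done; what remains is to repackage this material as single-output PSSTs in each of the three cases.

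For the regular constraint $x \in e$, I would drop the output function and the variable assignments from $\cT_e$, retaining only $(Q_e,\Sigma,\delta_e,\tau_e,q_{e,0},F_{e,1}\cup F_{e,2})$, and furthermore discard the priorities on $\tau_e$; this leaves a nondeterministic FA $\cA_e$ with $\Lang(\cA_e) = \Lang(e)$, since priorities affect only which accepting run is selected, not acceptance itself. For $y = \extract_{i,e}(x)$, I would take $\cT_e$, locate the variable $x_{e'}$ corresponding to the $i$-th capturing subexpression $e'$ of $e$ (with $e' = e$ when $i=0$), and define a new output function $F$ on every state of $F_{e,1} \cup F_{e,2}$ mapping it to the single variable $x_{e'}$. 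Thus $\cT_e(w) = x$ is precisely the match of the $i$-th capturing group on the highest-priority accepting run; for $w \notin \Lang(e)$ the transducer has no accepting run and $\cT_e(w) = \bot$, matching the ``undefined'' case, while the convention that uninitialised capturing variables carry the sentinel $\nullchar$ handles the case that the $i$-th group is not visited. To simulate $\extract$ inside a satisfiability query we introduce a fresh variable $y$ and conjoin $y = \cT_e(x) \wedge x \in \cA_e$, where the second conjunct enforces $x \in \Lang(e)$ so that satisfying assignments correspond exactly to those in the original formula.

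For $y = \replaceall_{\pat,\rep}(x)$, I would build a PSST $\cT_{\pat,\rep}$ that sweeps $x$ left-to-right and, at each position, either outputs the current letter (appending it to both the output variable and the prospective ``before-match'' buffer $\refbefore$), or enters a copy of $\cT_{\pat}$ that greedily consumes the leftmost longest match; upon reaching a final state of that copy, it appends to the output the concrete instantiation of $\rep$ obtained by substituting each reference $\$i$, $\$0$, $\refbefore$, $\refafter$ by the corresponding string variable of $\cT_{\pat}$ (with $\refafter$ being handled by a second variable that records the unprocessed suffix, or equivalently by symmetric book-keeping maintained throughout the run). The priorities of $\cT_{\pat}$ ensure the leftmost/greedy match semantics, and the outer loop sequences matches with the intervening non-matching letters, exactly as in the definition of $\replaceall$. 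The case of $\replace$ is a simplified variant in which the loop runs at most once: after one successful match the PSST switches to a ``copy mode'' that appends the remainder of $x$ verbatim. In both constructions the output is assembled in a single accumulator variable, so $\cT_{\pat,\rep}$ is a legitimate PSST.

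The only genuine obstacle is the correct treatment of $\refafter$, since when the replacement is emitted the suffix of $x$ has not yet been read, so its value is not available at that moment. The way I would resolve this is to pre-process $\rep$: split it into the part before any $\refafter$ occurrence and the part after, and have the PSST write the ``before-$\refafter$'' fragment at the time of the match while also recording a separate variable that later, upon reading each remaining letter of $x$, accumulates this letter into a buffer used to instantiate the deferred $\refafter$ when the final state is reached; a symmetric buffer serves $\refbefore$. With these gadgets in place, each atom of $\strline$ is rewritten as a conjunction over $z = x \concat y$, $y = \cT(x)$, and $x \in \cA$, the rewriting preserves satisfiability because each PSST implements the exact semantics specified in Section~\ref{sec:logic}, and the boolean combination is preserved by applying the translation atom-wise.
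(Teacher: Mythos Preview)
Your treatment of $x\in e$, of $\extract_{i,e}$, and of $\replace/\replaceall$ when $\rep$ contains only ordinary references $\$i$ (and even $\refbefore$) is essentially what the paper does: drop priorities to get an FA, keep only the variable $x_{e'}$ and output it, and for replacement wrap a copy of $\cT_\pat$ in a scanning loop that appends $\rep$ with the captured variables substituted. Your extra conjunct $x\in\cA_e$ next to $y=\cT_e(x)$ is harmless but redundant, since $\cT_e$ already has no accepting run on $x\notin\Lang(e)$.

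The genuine gap is your handling of $\refafter$ for $\replaceall$. A single left-to-right PSST cannot realise $\replaceall_{\pat,\rep}$ when $\refafter$ occurs in $\rep$, and your ``deferred buffer'' idea does not repair this. Take $\pat=a$, $\rep=\refafter$, and inputs of the form $(ca)^j d$ with $c,d\neq a$. The output must contain exactly $j{+}1$ occurrences of $d$ (one inside each $\refafter_i$ and one trailing). By pigeonhole, two different values $j_1<j_2$ leave the PSST in the same control state before reading $d$; the single remaining transition on $d$ and the fixed output expression $F(q')$ can introduce only a bounded number of literal $d$'s, independent of $j$, because $d$ is not yet in any variable. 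Hence the number of $d$'s in the output is bounded, contradicting $j{+}1$. So one PSST is not enough here; your buffer stores one suffix, but $\replaceall$ needs an unbounded number of distinct suffixes placed at distinct positions in the output.

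The paper's fix is not to build one PSST but to emit a short \emph{sequence} of PSST atoms: first a $\replaceall$ that brackets each match as $\langle m\rangle$; then a copyful PSST that, before every $\langle$, inserts the prefix read so far; then $\reverse$, the symmetric PSST inserting the (reversed) suffix before every $\rangle$, and $\reverse$ again; finally a $\replaceall$ whose pattern captures $\langle(\Sigma^{*?})\langle\pat\rangle(\Sigma^{*?})\rangle$ so that $\refbefore$ and $\refafter$ become ordinary captures $\$1$ and $\$(k{+}2)$. The key idea you are missing is the use of reversal to turn the unread suffix into an already-read prefix, after which a left-to-right copyful PSST can insert it. This still lands in the target language, since each step is an atom $y=\cT(x)$ (including $\reverse$, which is easily a PSST).
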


First, observe that regular constraints (aka membership queries) $x \in e$ can be reduced to FA membership queries $x \in \cA$ using standard techniques.
Features such a greediness and capture groups do not affect whether a word matches a {\regexp}, they only affect \emph{how} a string matches it.
Thus, for  regular constraints, these features can be ignored and a standard translation from regular expressions to finite automata can be used.

The $\extract_{i,e}$ function can be defined by a PSST $\cT_{i,e}$ obtained from the PSST $\cT_e$ (see Section~\ref{sect:regextopsst}) by removing all string variables, except $x_{e'}$, where $e'$ is the subexpression of $e$ corresponding to the $i$th capturing group, and setting the output expression of the final states as $x_{e'}$.

We give a sketch of the encoding of $\replaceall$ here.
Full formal details are given in \ifproceeding the long version of the paper \cite{popl22-full}. \else Appendix~\ref{appendix:sec-extract-replace-to-psst}. \fi
The encoding of $\replace$ is almost identical to that of $\replaceall$.

A call $\replaceall_{\pat, \rep}(x)$ replaces every match of $\pat$ by a value determined by the replacement string $\rep$.
Recall, $\rep$ may contain references $\$i$, $\refbefore$, or $\refafter$.
The first step in our reduction to PSSTs is to eliminate the special references $\$0$, $\refbefore$, and $\refafter$.
In essence, this simplification uses PSST transductions to insert the contextual information needed by $\refbefore$ and $\refafter$ alongside each substring that will be replaced.
Then, the call to $\replaceall$ can be rewritten to include this information in the match, and use standard references ($\$i$) in the replacement string.
The reference $\$0$ can be eliminated by wrapping each pattern with an explicit capturing group.

We show informally how to construct the PSST for $\replaceall_{\pat, \rep}$ where all the references in $\rep$ are of the form $\$i$ with $i > 0$.
The full reduction is 
\ifproceeding given in the long version of the paper \cite{popl22-full}.
\else given in the appendix. \fi

Let $\rep = w_1 \$i_1 w_2 \cdots w_k \$i_k w_{k+1}$. For each $j \in [k]$, we introduce a \emph{fresh} string variable $y_{j}$. 
Let us use $\rep[(y_1, \cdots, y_k)/(\$i_1,\cdots, \$i_k)]$ to denote the sequence $w_1 y_1 w_2 \cdots w_k y_k w_{k+1}$.
For instance, if $\rep = a \$1 a \$2 a \$1 a$, then 
$\rep[(y_1, y_2, y_3)/(1,2, 1)] = a y_1 a y_2 a y_3 a$. 
%
Moreover, let $e'_{i_1},\ldots, e'_{i_k}$ be the subexpressions of $\pat$ corresponding to the $i_1$th, $\ldots$, $i_k$th capturing groups.
Note here we use mutually distinct fresh variables $y_1, \cdots, y_k$ for $\$i_1, \cdots, \$i_k$, even if $i_j$ and $i_{j'}$ may be equal for $j \neq j'$. 
We make this choice for the purpose of satisfying the copyless property~\cite{AC10} of PSSTs, which leads to improved complexity results in some cases (discussed in the sequel).
If we tried to use the same variable for the different occurrences of the same reference – then the resulting transition in the encoding below would not be copyless.
Moreover, the construction below guarantees that the values of different variables for the multiple occurrences of the same reference are actually the same.



 Suppose $\cT_\pat = (Q_{\pat}, \Sigma, X_{\pat}, \delta_{\pat}, \tau_{\pat}, E_\pat, q_{\pat,0}, (F_{\pat,1}, F_{\pat,2}))$.
Then $\cT_{\replaceall_{\pat,\rep}}$ is obtained from $\cT_\pat$ by adding the fresh string variables $y_1, \cdots, y_k$ and a fresh state $q'_0$ such that (see Fig.~\ref{fig-psst-replaceall})
\begin{itemize}
    \item $\cT_{\replaceall_{\pat,\rep}}$ goes from $q'_0$ to $q_{\pat,0}$ via an $\varepsilon$-transition of higher priority than the non-$\varepsilon$-transitions, in order to search the first match of $\pat$ starting from the current position,
    \item when $\cT_{\replaceall_{\pat,\rep}}$ stays at $q'_0$, it keeps appending the current letter to the end of $x_0$, which stores the output of $\cT_{\replaceall_{\pat,\rep}}$,
    \item starting from $q_{\pat, 0}$, $\cT_{\replaceall_{\pat,\rep}}$ simulates $\cT_\pat$ and stores the matches of capturing groups of $\pat$ into the string variables (in particular,
    the matches of the $i_1$th, $\ldots$, $i_k$th capturing groups into the string variables $x_{e'_{i_1}}, \cdots, x_{e'_{i_k}}$ respectively), moreover, for each $j \in [k]$, $y_j$ is updated in the same way as $x_{e'_{i_j}}$ (in particular, for each transition $(q, a, q')$ in $\cT_\pat$ such that $E_\pat(q, a, q')(x_{e'_{i_j}}) = x_{e'_{i_j}} a$,  we have $E_\pat(q, a, q')(y_j) = y_j a$),

    \item when the first match of $\pat$ is found, $\cT_{\replaceall_{\pat,\rep}}$ goes from $f_{\pat,1} \in F_{\pat, 1}$ or $f_{\pat, 2} \in F_{\pat, 2}$ to $q'_0$ via an $\varepsilon$-transition, it then appends the replacement string, which is $\rep[(y_1, \cdots, y_k)/(\$i_1,\cdots, \$i_k)]$, to the end of $x_0$, resets the values of all the string variables, except $x_0$, to $\nullchar$, and keeps searching for the next match of $\pat$.
\end{itemize}

It may be observed that the PSST will be copyless.
That is, the value of a variable is not copied to two or more variables during a transition.
In all but the last case, variables are only copied to themselves, via assignments of the form $x_{e'} := x_{e'} a$, $x_{e'} := x_{e'}$, $x_{e'} := \varepsilon$, or $x_{e'} := \nullchar$.
In the final case, when a replacement is made, the assignments are
$x_0 := x_0 w_1y_1 w_2 \cdots w_k y_k w_{k+1}$
and
$x^\prime := \nullchar$ for all the variables $x^\prime \in X_\pat \cup \{y_1, \cdots, y_k\}$.
Again, only one copy of the value of each variable is retained.

Copyful PSSTs are only needed when removing $\refbefore$ and $\refafter$ from the replacement strings.
To see this, consider the prefix preceding the first replacement in a string.
If $\refbefore$ appears in the replacement string, this prefix will be copied an unbounded number of times (once for each matched and replaced substring).
Conversely, references of the form $\$i$ are ``local'' to a single match.
By having a separate variable for each occurence of $\$i$ in the replacement string, we can avoid having to make copies of the values of the variables.

\begin{figure}[ht]
    \centering
    \includegraphics[scale=0.7]{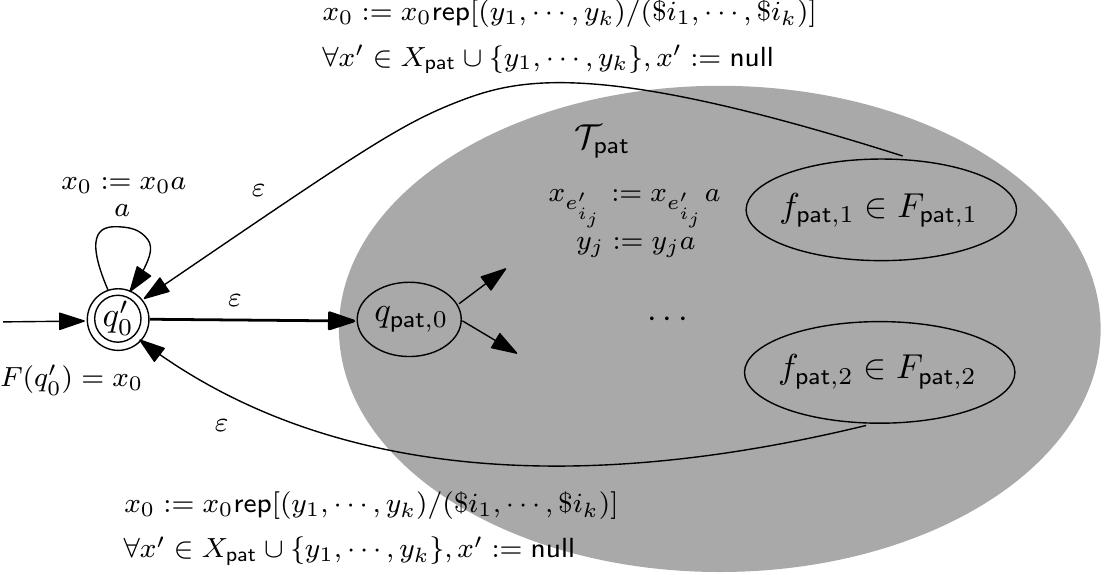}
    \caption{The PSST $\cT_{\replaceall_{\pat,\rep}}$}
    \label{fig-psst-replaceall}
\end{figure}


\section{A Propagation-Based Calculus for String Constraints}
\label{sect:calculus}

\begin{table}
  \small
  \caption{Rules of the one-sided sequent calculus. A term
    $e^c$ denotes the complement of a regular expression~$e$,  i.e.,
    ~$\lang{e^c} = \Sigma^* \setminus \lang{e}$.}
  \label{tab:calculus}

  \begin{gather*}
    \infer[$\wedge$]
    {\seqq{\varphi \wedge \psi}}
    {\seqq{\varphi, \psi}}
    \quad
    \infer[$\neg\vee$]
    {\seqq{ \neg(\varphi \vee \psi)}}
    {\seqq{ \neg\varphi, \neg\psi}}
    \quad
    \inferii[$\vee$]
    {\seqq{\varphi \vee \psi}}
    {\seqq{\varphi}}{\seqq{\psi}}
    \quad
    \inferii[$\neg\wedge$]
    {\seqq{\neg(\varphi \wedge \psi)}}
    {\seqq{\neg\varphi}}{\seqq{\neg\psi}}
    \quad
    \infer[$\neg\neg$]
    {\seqq{ \neg\neg\varphi}}
    {\seqq{\varphi}}
    \\[1ex]
    \infer[$\not\in$]
    {\seqq{x \not\in e}}
    {\seqq{x \in e^c}}
    \quad
    \inferC[\ruleName{$\not=$}]{\text{where~} y \text{~is fresh}}
    {\seqq{x \not= f(x_1, \ldots, x_n)}}
    {\seqq{x \not= y, y = f(x_1, \ldots, x_n)}}
    \quad
    \inferii[Cut]
    {\seqq{}}
    {\seqq{x \in e}}{\seqq{x \in e^c}}
    \\[2ex]
    \infer[$=$-Prop]
    {\seqq{x \in e, x = y}}
    {\seqq{x \in e, x = y, y \in e}}
    \qquad
    \inferC[$\not=$-Subsume]{\text{if~} \lang{e_1} \cap \lang{e_2} = \emptyset}
    {\seqq{x \in e_1, x \not= y, y \in e_2}}
    {\seqq{x \in e_1, y \in e_2}}
    \\[2ex]
    \inferC[$=$-Prop-Elim]{\text{if~} |\lang{e}| = 1}
    {\seqq{x \in e, x = y}}
    {\seqq{x \in e, y \in e}}
    \qquad
    \inferC[$\not=$-Prop-Elim]{\text{if~} |\lang{e}| = 1}
    {\seqq{x \in e, x \not= y}}
    {\seqq{x \in e, y \in e^c}}
    \\[3ex]
    \begin{array}{@{}c@{~}l@{}}
      \infer[Close]
      {\seqq{x \in e_1, \ldots, x \in e_n}}
      {}
      &
        \text{if~} \lang{e_1} \cap \cdots \cap\lang{e_n} = \emptyset
      \\[2ex]
      \infer[Subsume]
      {\seqq{x \in e, x \in e_1, \ldots, x \in e_n}}
      {\seqq{x \in e_1, \ldots, x \in e_n}}
      &
        \text{if~} \lang{e_1} \cap \cdots \cap\lang{e_n} \subseteq \lang{e}
      \\[2.5ex]
      \infer[Intersect]
      {\seqq{x \in e_1, \ldots, x \in e_n}}
      {\seqq{x \in e}}
      &
        \text{if~}
        \begin{array}{l}
          n > 1 \text{~and~}
          \\
          \lang{e_1} \cap \cdots \cap\lang{e_n} = \lang{e}
        \end{array}
      \\[4ex]
      \infer[Fwd-Prop]
      {\seqq{x = f(x_1, \ldots, x_n), x_1 \in e_1, \ldots, x_n \in e_n}}
      {\seqq{x \in e, x = f(x_1, \ldots, x_n), x_1 \in e_1, \ldots, x_n \in e_n}}
      &
        \text{if~} \lang{e} = f(\lang{e_1}, \ldots, \lang{e_n})
      \\[3ex]
      \infer[Fwd-Prop-Elim]
      {\seqq{x = f(x_1, \ldots, x_n), x_1 \in e_1, \ldots, x_n \in e_n}}
      {\seqq{x \in e, x_1 \in e_1, \ldots, x_n \in e_n}}
      &
        \text{if~}
        \begin{array}{l}
          \lang{e} = f(\lang{e_1}, \ldots, \lang{e_n})
          \\
          \text{and~} 
          |\lang{e}|= 1
          \end{array}
      \\[3ex]
      \infer[Bwd-Prop]
      {\seqq{x \in e, x = f(x_1, \ldots, x_n)}}
      {\big\{\seqq{x \in e, x = f(x_1, \ldots, x_n),
      x_1 \in e_1^i, \ldots, x_n \in e_n^i}\big\}_{i=1}^k}
      &
        \text{if~}
        \begin{array}{l}
          f^{-1}(\lang{e}) = \\
          \bigcup_{i=1}^k \big( \lang{e_1^i} \times \cdots \times \lang{e_n^i} \big)
          \end{array}
    \end{array}
  \end{gather*}
\end{table}

We now introduce our calculus for solving string constraints in
$\strline$ (see Table~\ref{tab:calculus}), state its correctness, and observe that it gives rise to a
decision procedure for the fragment $\strlinesl$ of straightline
formulas. The calculus is based on the principle of propagating
regular language constraints by computing images and pre-images of
string functions. We deliberately keep the calculus minimalist and
focus on the main proof rules; for an implementation, the calculus has
to be complemented with a suitable strategy for applying the rules, as
well as standard SMT optimizations such as non-chronological
back-tracking and conflict-driven learning. An implementation also has
to choose a suitable effective representation of \regexp{}
membership constraints, for instance using finite-state automata.\footnote{
    Recall features such as greediness do not need to be modeled for simple membership queries as they do not change the accepted language.
}
In particular, we use the fact that---for membership---\regexp{} can be complemented.
We denote the complement of $e$ in a membership constraint by $e^c$.
Our calculus is parameterized in the set of considered string
functions; in this paper, we work with the set
$\{\concat, \extract, \replace, \replaceall\}$ consisting of
concatenation, extraction, and replacement, but this set can be
extended by other functions for which images and/or pre-images can be
computed (see Section~\ref{sec:rules}).

\subsection{Sequents and Examples}

The calculus operates on \emph{one-sided sequents,} and can be
interpreted as a sequent calculus in the sense of
Gentzen~\cite{Gentzen35} in which all formulas are located in the
antecedent (to the left of the turnstile~$\vdash$). A one-sided sequent is a
finite set $\Gamma \subseteq \strline$ of string constraints. For sake
of presentation, we write sequents as lists of formulas separated by
comma, and $\seqq{\varphi_1, \ldots, \varphi_n}$ for the union
$\Gamma \cup \{\varphi_1, \ldots, \varphi_n\}$. We say that a
sequent~$\seqq{}$ is \emph{unsatisfiable} if $\bigwedge \Gamma$ is
unsatisfiable. Our calculus is refutational and has the purpose of
either showing that some initial sequent~$\seqq{}$ is unsatisfiable,
or that it is satisfiable by constructing a solution for it. A
solution is a
sequent~$\seq{x_1 \in w_1, x_2 \in w_2, \ldots, x_n \in w_n}$ that
defines the values of string variables using \regexps{} that only consist
of single words.

\begin{figure}
  \begin{prooftree}
    \AxiomC{}
    \LeftLabel{\ruleName{Close}}
    \UnaryInfC{$x \in a^+\Sigma^*,
      x = y \concat z, y \in a^+, z \in \Sigma^*,
      x \in b^+(a^c)^*, x = \replaceall_{a,
        b}(x)$}
    \LeftLabel{\ruleName{Fwd-Prop}}
    \UnaryInfC{$x \in a^+\Sigma^*,
      x = y \concat z, y \in a^+, z \in \Sigma^*, x = \replaceall_{a,
        b}(x)$}
    \LeftLabel{\ruleName{Fwd-Prop}}
    \UnaryInfC{$x = y \concat z, y \in a^+, z \in \Sigma^*, x = \replaceall_{a,
        b}(x)$}
    \LeftLabel{\ruleName{$\wedge^*$}}
    \UnaryInfC{$x = y \concat z \wedge y \in a^+ \wedge z \in \Sigma^*
      \wedge x = \replaceall_{a,
    b}(x)$}
  \end{prooftree}
  \caption{Proof of unsatisfiability for \eqref{eq:calcEx1} in
    Example~\ref{ex:calc1}}
  \label{fig:calcEx1}
\end{figure}

\begin{figure}
  \begin{prooftree}
    \AxiomC{$x \in a, z \in a, y \in \epsilon, r \in b$}
    \LeftLabel{\ruleName{Subsume$\mbox{}^*$}}
    \UnaryInfC{$x \in a, z \in a, y \in \epsilon, r \in b, \ldots$}
    \LeftLabel{\ruleName{FPE}}
    \UnaryInfC{$z \in a, y \in \epsilon, x \in a, r = \replaceall_{a, b}(x), \ldots$}
    \LeftLabel{\ruleName{FPE}}
    \UnaryInfC{\rule{0em}{1.6ex}$z \in a, y \in \epsilon, x = y \cdot z, \ldots$}
    \AxiomC{$\vdots$}
    \UnaryInfC{$z \in a^c, \ldots$}
    \LeftLabel{\ruleName{Cut}}
    \BinaryInfC{$y \in \epsilon, z \in a^+, x = y \concat z, 
      x \in a^+, \ldots$}
    \AxiomC{$\vdots$}
    \UnaryInfC{$y \in a^+, z \in a^*, \ldots$}
    \LeftLabel{\ruleName{Bwd-Prop}}
    \BinaryInfC{$x = y \concat z, x \in a^+, r = \replaceall_{a, b}(x)$}
    \LeftLabel{\ruleName{$\wedge^*$}}
    \UnaryInfC{$x = y \concat z \wedge x \in a^+ \wedge r = \replaceall_{a, b}(x)$}
  \end{prooftree}
  \caption{Proof of satisfiability for \eqref{eq:calcEx2} in
    Example~\ref{ex:calc2}. \ruleName{FPE} stands for \ruleName{Fwd-Prop-Elim}}
  \label{fig:calcEx2}
\end{figure}

\begin{example}
  \label{ex:calc1}
  We first illustrate the calculus by showing unsatisfiability of the
  constraint\footnote{Note here for convenience, in the regular constraints $x \in e$, we write $e$ as in classical regular expressions and do not strictly follow the syntax of $\strline$, since in this case, only the language defined by $e$ matters. }:
  \begin{equation}
    \label{eq:calcEx1}
    x = y \concat z \wedge y \in a^+ \wedge z \in \Sigma^*
    \wedge x = \replaceall_{a, b}(x)
  \end{equation}
  To this end, we construct a proof tree that has \eqref{eq:calcEx1}
  as its root, by applying proof rules until all proof goals have been
  closed (Fig.~\ref{fig:calcEx1}). The proof is growing upward, and
  is built by first eliminating the conjunctions~$\wedge$, resulting
  in a list of formulas. Next, we apply the rule~\ruleName{Fwd-Prop} for
  \emph{forward-propagation} of a regular expression constraint. Given
  that $y \in a^+, z \in \Sigma^*$, from the
  equation~$x = y \concat z$ we can conclude that $x \in
  a^+\Sigma^*$. From $x \in a^+\Sigma^*$ and
  $x = \replaceall_{a, b}(x)$, we can next conclude that
  $x \in b^+(a^c)^*$, i.e., $x$ starts with $b$ and
  cannot contain the letter~$a$. Finally,
  the proof can be closed because the languages~$a^+\Sigma^*$ and
  $b^+(a^c)^*$ are disjoint.
\end{example}

\begin{example}
  \label{ex:calc2}
  We next consider the case of a satisfiable formula in $\strlinesl$:
  \begin{equation}
    \label{eq:calcEx2}
    x = y \concat z \wedge x \in a^+ \wedge r = \replaceall_{a, b}(x)
  \end{equation}
  Fig.~\ref{fig:calcEx2} shows how a solution can be constructed for
  this formula. The strategy is to first derive constraints for the
  variables~$y, z$ whose value is not determined by any
  equation. Given that $x \in a^+$, from the equation $x = y \cdot z$
  we can derive that either $y \in \epsilon, z \in a^+$ or
  $y \in a^+, z \in a^*$, using rule~\ruleName{Bwd-Prop}. We focus on
  the left branch~$y \in \epsilon, z \in a^+$. Since propagation is
  not able to derive further information for $y, z$, and no
  contradiction was detected, at this point we can conclude
  satisfiability of \eqref{eq:calcEx2}. To construct a solution, we
  pick an arbitrary value for $z$ satisfying the
  constraint~$z \in a^+$, and use \ruleName{Cut} to add the
  formula~$z \in a$ to the branch. Again following the left branch, we
  can then use \ruleName{Fwd-Prop-Elim} to evaluate $x = y\cdot z$ and
  add the formula~$x \in a$, and after that $r \in b$ due to
  $r = \replaceall_{a, b}(x)$. Finally, \ruleName{Subsume} is used to
  remove redundant \regexp{} constraints from the proof goal. The
  resulting sequent (top-most sequent on the left-most branch) is a
  witness for satisfiability of \eqref{eq:calcEx2}.
\end{example}

\subsection{Proofs and Proof Rules}
\label{sec:rules}

More formally, proof rules are relations between a finite list of
sequents (the premises), and a single sequent (the conclusion). Proofs
are finite trees growing upward, in which each node is labeled with a
sequent, and each non-leaf node is related to the node(s) directly
above it through an instance of a proof rule. A proof branch is a path
from the proof root to a leaf. A branch is closed if a closure rule (a
rule without premises) has been applied to its leaf, and open
otherwise. A proof is closed if all of its branches are closed.

The proof rules of the calculus are shown in
Table~\ref{tab:calculus}. The first row shows standard proof rules to
handle Boolean operators; see, e.g.,
\cite{DBLP:books/daglib/0022394}. Rule~\ruleName{$\not\in$} turns
negated membership predicates into positive ones through
complementation, and rule~\ruleName{$\not=$} negative function
applications into positive ones. As a result, only disequalities
between string variables remain. The rule~\ruleName{Cut} can be used to
introduce case splits, and is mainly needed to extract solutions once
propagation has converged (as shown in Example~\ref{ex:calc2}).

The next four rules handle equations between string
variables. Rule~\ruleName{=-Prop} propagates \regexp{} constraints from
the left-hand side to the right-hand side of an equation;
\ruleName{=-Prop-Elim} in addition removes the equation in the case
where the propagated constraint has a unique solution. The
rule~\ruleName{$\not$=-Prop-Elim} similarly turns a singleton \regexp{}
for the left-hand side of a disequality into a \regexp{} constraint on the
right-hand side.  As a convention, we allow application of
\ruleName{=-Prop}, \ruleName{=-Prop-Elim}, and
\ruleName{$\not$=-Prop-Elim} in both directions, left-to-right and
right-to-left of equalities/disequalities. Finally, \ruleName{$\not=$-Subsume} eliminates
disequalities that are implied by the \regexp{} constraints of a proof
goal.

The rule~\ruleName{Close} closes proof branches that contain
contradictory \regexp{} constraints, and is the only closure rule needed
in our calculus. \ruleName{Subsume} removes \regexp{} constraints
that are implied by other constraints in a sequent, and
\ruleName{Intersect} replaces multiple \regexps{} with a single
constraint.

The last three rules handle applications of functions
$f \in \{\concat, \extract, \replace, \replaceall\}$ through
propagation. Rule~\ruleName{Fwd-Prop} defines forward propagation, and
adds a \regexp{} constraint~$x \in e$ for the value of a function by
propagating constraints about the arguments. The \regexp{}~$e$ encodes
the image of the argument \regexps{} under $f$:
\begin{definition}[Image]
  For an $n$-ary string
  function~$f : \Sigma^* \times \cdots \times \Sigma^* \to \Sigma^*$
  and languages $L_1, \ldots, L_n \subseteq \Sigma^*$, we define the
  \emph{image} of $L_1, \ldots, L_n$ under $f$ as
  $f(L_1, \ldots, L_n) = \{f(w_1, \ldots, w_n) \in \Sigma^* \mid w_1
  \in L_1, \ldots, w_n \in L_n \}$.
\end{definition}

Forward propagation is often useful to prune proof branches. It is
easy to see, however, that the images of regular languages under the
functions considered in this paper are not always regular; for
instance, $\replace_{\pat,\$0\$0}$ can map regular languages to
context-sensitive languages. In such cases, the side condition of
\ruleName{Fwd-Prop} cannot be satisfied by any \regexp{}~$e$, and the rule
is not applicable.

Rule~\ruleName{Fwd-Prop-Elim} handles the special case of forward
propagation producing a singleton language. In this case, the function
application is not needed for further reasoning and can be
eliminated. This rule is mainly used during the extraction of
solutions (as shown in Example~\ref{ex:calc2}).

Rule~\ruleName{Bwd-Prop} defines the dual case of backward
propagation, and derives \regexp{} constraints for function arguments
from a constraint about the function value. The argument constraints
encode the \emph{pre-image} of the propagated language:
\begin{definition}[Pre-image]
  For an $n$-ary string
  function~$f : \Sigma^* \times \cdots \times \Sigma^* \to \Sigma^*$
  and a language $L \subseteq \Sigma^*$, we define the
  \emph{pre-image} of $L$ under $f$ as the relation
  $f^{-1}(L) = \{(w_1, \ldots, w_n) \in (\Sigma^*)^n \mid 
  f(w_1, \ldots, w_n) \in L\}$.
\end{definition}
A \textbf{key result} of the paper is that pre-images of regular
languages under the functions considered in the paper can always be
represented in the
form~$\bigcup_{i=1}^k ( \lang{e_1^i} \times \cdots \times \lang{e_n^i}
)$, i.e., they are \emph{recognizable languages}~\cite{CCG06}. This implies that
\ruleName{Bwd-Prop} is applicable whenever a \regexp{} constraint for the
result of a function application exists, and prepares the ground for
the decidability result in the next section.  For concatenation,
recognizability was shown in \cite{Abdulla14,CHL+19}. This paper
contributes the corresponding result for all functions defined by
PSSTs:
\begin{lemma}[Pre-image of regular languages under PSSTs]
  \label{lem:psst_preimage}
  Given a PSST $\psst = (Q_T, \Sigma$, $X$, $\delta_T$, $\tau_T$, $E_T$,  $q_{0, T}$, $F_T$) and an \FA{} $\Aut
  = (Q_A, \Sigma, \delta_A, q_{0, A}, F_A)$, we can compute an \FA{} $\cB = (Q_B,
  \Sigma, \delta_B, q_{0, B}, F_B)$ in exponential time  such that $\Lang(\cB) = \cR^{-1}_{\cT}(\Lang(\Aut))$.
\end{lemma}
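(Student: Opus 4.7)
The plan is to construct $\cB$ by adapting the classical pre-image construction for streaming string transducers to account for the priority mechanism of PSSTs. The core idea is to track, for each string variable $x \in X$, a \emph{summary function} $\phi_x : Q_A \to Q_A$ with $\phi_x(p)$ equal to the state that $\cA$ would reach if it started in $p$ and consumed the current value of $x$. After determinizing $\cA$ (with a single-exponential blowup), the tuple $\phi = (\phi_x)_{x \in X}$ abstracts the variable contents while preserving exactly the information needed to decide whether $\cT(w) \in \Lang(\cA)$.

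I would define the states of $\cB$ as configurations $(q, \phi, S)$, where $q \in Q_T$ is the PSST state along the simulated accepting run, $\phi$ is the summary tuple, and $S$ is auxiliary information used to rule out higher-priority accepting alternatives (discussed below); initially, $q = q_{0,T}$, each $\phi_x$ is the identity on $Q_A$ (matching the initial empty value of $x$), and $S$ is empty. On reading a letter $a$, $\cB$ simulates a segment of a run of $\cT$ of the form $\varepsilon^{\ast}\!\cdot\!a\!\cdot\!\varepsilon^{\ast}$; for each traversed transition $(q, b, q')$ with update $E_T(q, b, q')$, the new $\phi'_x$ is obtained by composing, in the order they appear in $E_T(q, b, q')(x)$, the summaries of the occurring variables together with the $\delta_A$-transitions induced by the occurring concrete letters. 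A configuration is accepting when $q \in F_T$, the composition of $\phi$-summaries along the output expression at $q$ (started from $q_{0, A}$) lands in $F_A$, and the condition on $S$ described below holds. A routine induction then shows that $\phi_x(p)$ always equals the state $\cA$ would reach by reading the current value of $x$ from $p$.

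The main obstacle is ensuring that the simulated run is actually the \emph{highest-priority} accepting run of $\cT$ on $w$, since only such a run defines $\cT(w)$. I plan to handle this by letting $S$ record the set of ``shadow'' configurations $(q', \phi')$ corresponding to the higher-priority alternatives still alive after the prefix read so far: whenever the main simulation chooses a particular run segment at a choice point, every strictly higher-priority branch (according to the orderings coming from $\delta_T$ and $\tau_T$, and respecting the interleaving of $\varepsilon$- and letter-transitions) is spawned into $S$ with its own summary, and existing shadows are advanced in lockstep by considering all their possible $\varepsilon^{\ast}\!\cdot\!a\!\cdot\!\varepsilon^{\ast}$ continuations. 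A configuration is then accepting only when, in addition to the conditions above, no shadow in $S$ is itself in an accepting configuration. Correctness follows because a run of $\cB$ on $w$ corresponds to a candidate highest-priority accepting run of $\cT$ on $w$ together with a witness that no higher-priority alternative accepts. For the complexity, the number of summary tuples is $|Q_A|^{|Q_A|\cdot|X|}$, and by quotienting shadows that are indistinguishable with respect to future acceptance the content of $S$ can be represented within a single-exponential universe; multiplied by the polynomial $|Q_T|$ factor this yields the claimed single-exponential overall bound. Making this shadow encoding precise, and verifying its interaction with the lexicographic priority order on $\varepsilon$- and letter-transitions, is the principal technical hurdle.
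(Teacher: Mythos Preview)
Your approach is essentially the paper's, but with one point that needs sharpening and a couple of technical choices that differ.

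The crucial clarification concerns what it means for a shadow to be ``accepting''. A higher-priority run of $\cT$ pre-empts the current one as soon as it reaches a state in the domain of $F_T$; whether that run's \emph{output} lies in $\Lang(\cA)$ is irrelevant to which run is the accepting run of $\cT$. Consequently the summaries $\phi'$ you attach to shadows carry no information you need, and your proposed quotient collapses to the trivial projection onto $Q_T$: the shadow component is just a subset $S \subseteq Q_T$, and the final-state condition is simply that no $q' \in S$ has $F_T(q')$ defined. This is exactly what the paper does, and it removes the double-exponential risk without any further argument.

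Two further differences are cosmetic. First, the paper does not determinize $\cA$; instead it keeps $\cA$ nondeterministic and records, for each variable $x$, a \emph{relation} in $\cP(Q_A \times Q_A)$ rather than a function. Second, rather than packaging $\varepsilon^{\ast}\!\cdot\!a\!\cdot\!\varepsilon^{\ast}$ segments into single letter-transitions of $\cB$, the paper lets $\cB$ itself take $\varepsilon$-transitions mirroring those of $\cT$, and adds a fourth component $\Lambda \subseteq \cE(\tau_T)$ recording which $\varepsilon$-transitions have been used since the last letter, to enforce the non-repetition constraint. Your segment-based alternative can be made to work, but the bookkeeping for priorities when several $\varepsilon$-transitions interleave with the spawning of shadows is exactly the ``principal technical hurdle'' you flag, and the paper's explicit-$\Lambda$ formulation handles it more transparently.
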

The proof of Lemma~\ref{lem:psst_preimage} is given in
\ifproceeding the long version of the paper \cite{popl22-full}.
\else Appendix~\ref{app-pre-image}. \fi 
Moreover, we have already shown in Lemma~\ref{lem-str-fun-to-psst} that $\extract$, $\replace$, and $\replaceall$ can be reduced to PSSTs.
We can finally observe that the calculus is sound:
\begin{lemma}[Soundness]
  The sequent calculus defined by Table~\ref{tab:calculus} is sound:
  (i) the root of a closed proof is an unsatisfiable sequent; and (ii)
  if a proof has an open branch that ends with a
  solution~$\seq{x_1 \in w_1, x_2 \in w_2, \ldots, x_n \in w_n}$, then
  the
  assignment~$\{x_1 \mapsto w_1, x_2 \mapsto w_2, \ldots, x_n \mapsto
  w_n\}$ is a satisfying assignment of the root sequent.
\end{lemma}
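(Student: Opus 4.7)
The plan is a standard induction on proof structure, but split into the two directions corresponding to parts (i) and (ii). For part (i), I will prove by induction on the height of a closed subproof that its root sequent is unsatisfiable; the contrapositive obligation for each rule is that any satisfying assignment for the conclusion extends to a satisfying assignment for at least one premise. For part (ii), I will prove by induction along the open branch (traversing from the leaf downward to the root) that any satisfying assignment of the leaf induces, via the same rule-local extensions, a satisfying assignment of each sequent encountered on the way, and in particular of the root; the leaf is satisfied by the assignment $\{x_i \mapsto w_i\}$ since each $w_i$ belongs trivially to the singleton language $\lang{w_i}$.

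Thus both parts reduce to a uniform rule-by-rule verification: for each rule, if $\alpha$ satisfies the conclusion then some premise is satisfied by an extension $\alpha'$ of $\alpha$ (possibly on fresh variables), and conversely if $\alpha'$ satisfies a premise then its restriction to the variables of the conclusion satisfies the conclusion. The propositional rules ($\wedge$, $\neg\vee$, $\vee$, $\neg\wedge$, $\neg\neg$) are completely standard. Rule $\not\in$ is immediate from $\lang{e^c} = \Sigma^*\setminus \lang{e}$, and $\not=$ is sound because $y$ is fresh, so we may extend $\alpha$ by $\alpha(y) = \alpha(f(x_1,\ldots,x_n))$. \ruleName{Cut} is sound because $\lang{e}\cup\lang{e^c} = \Sigma^*$. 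The four rules handling (dis)equalities between variables (\ruleName{=-Prop}, \ruleName{=-Prop-Elim}, \ruleName{$\not$=-Subsume}, \ruleName{$\not$=-Prop-Elim}) are direct consequences of substituting equals for equals, together with the side condition $|\lang{e}|=1$ in the elimination rules which guarantees that deleting the equation loses no information. The closure rule \ruleName{Close} is sound since the intersection of the languages is empty, and \ruleName{Subsume} and \ruleName{Intersect} are justified by the corresponding set-theoretic inclusions/equalities spelled out in their side conditions.

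The interesting cases are the three propagation rules for string functions. \ruleName{Fwd-Prop} and \ruleName{Fwd-Prop-Elim} are sound by the very definition of the image: if $\alpha(x_i) \in \lang{e_i}$ for each $i$, then $\alpha(f(x_1,\ldots,x_n)) \in f(\lang{e_1},\ldots,\lang{e_n}) = \lang{e}$, which justifies adding $x \in e$ on top; in the \ruleName{Fwd-Prop-Elim} case, $|\lang{e}|=1$ means $x$ is determined by $x_1,\ldots,x_n$, so the equation $x = f(x_1,\ldots,x_n)$ becomes redundant in the presence of $x\in e$. \ruleName{Bwd-Prop} is the dual: a satisfying assignment of the conclusion gives $\alpha(x) \in \lang{e}$, so by the side condition $(\alpha(x_1),\ldots,\alpha(x_n)) \in f^{-1}(\lang{e}) = \bigcup_i \lang{e_1^i}\times\cdots\times\lang{e_n^i}$, hence $\alpha$ satisfies the $i$-th premise for some $i$; conversely, any premise's solution already satisfies the conclusion since each premise strengthens it. Applicability of \ruleName{Bwd-Prop} to the functions of this paper is guaranteed by the recognizability of pre-images, which for $\concat$ is classical and for $\extract$, $\replace$, $\replaceall$ follows from Lemma~\ref{lem-str-fun-to-psst} together with Lemma~\ref{lem:psst_preimage}.

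The main obstacle is really only administrative: carefully handling the fresh variable in rule~\ruleName{$\not=$} and the rule~\ruleName{Cut} when lifting satisfying assignments along an open branch for part (ii), so that the extensions stay consistent with the assignment exhibited at the leaf. Once the rule-local lemma is established, both parts (i) and (ii) follow by a straightforward induction on the proof tree (for (i)) respectively along the branch from the leaf to the root (for (ii)).
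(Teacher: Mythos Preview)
Your proposal is correct and follows essentially the same approach as the paper, which proves the lemma by observing that each rule in Table~\ref{tab:calculus} is an equivalence transformation (the conclusion is equivalent to the disjunction of the premises). Your write-up is considerably more detailed---in particular, you are more careful than the paper about the fresh variable in rule~\ruleName{$\not=$}, where strict logical equivalence must be replaced by equisatisfiability via extension/restriction of assignments---but the underlying idea is identical.
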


\begin{proof}
  By showing that each of the proof rules in Table~\ref{tab:calculus}
  is an equivalence transformation: the conclusion of a proof rule is
  equivalent to the disjunction of the premises.
\end{proof}



\subsection{Decision Procedure for $\strlinesl$} \label{sec:decision}

One of the main results of this paper is the decidability of the
$\strlinesl$ fragment of straightline formulas including
concatenation, extract, replace, and replaceAll:
\begin{theorem}\label{thm-main}
  Satisfiability of $\strlinesl$ formulas is decidable.
\end{theorem}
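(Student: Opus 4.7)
The plan is to exhibit a terminating proof-search strategy within the calculus of Section~\ref{sect:calculus} that, given any $\strlinesl$ input, either closes all branches (yielding unsatisfiability) or produces an open branch ending in a singleton-word assignment (yielding satisfiability). Since $\strlinesl$ rules out disjunction and negation, after flattening the top-level conjunction the initial sequent is a list of equations $x_1 = t_1, \ldots, x_n = t_n$ arranged in straight-line order, together with a finite collection of \regexp{} memberships $x \in e$. By Lemma~\ref{lem-str-fun-to-psst}, we may assume without loss of generality that each equation is either a concatenation $z = x \concat y$ or a PSST application $y = \cT(x)$.

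I would then process the equations in \emph{reverse} straight-line order, beginning with $x_n = t_n$. For the current equation $x_i = t_i$, I gather every \regexp{} constraint currently carried by $x_i$, merge them by \ruleName{Intersect} into a single constraint $x_i \in e_i$, and apply \ruleName{Bwd-Prop}. By Lemma~\ref{lem:psst_preimage} (together with the well-known fact that the pre-image of $\lang{e_i}$ under $\concat$ is a finite union of products of regular languages, cf.~\cite{Abdulla14,CHL+19}), the pre-image of $\lang{e_i}$ under $t_i$ is effectively recognizable, so \ruleName{Bwd-Prop} branches the proof finitely, attaching a new \regexp{} constraint to every variable occurring in $t_i$. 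The straight-line condition guarantees that $x_i$ does not appear in any $t_j$ with $j < i$, so the equation $x_i = t_i$ plays no further role in backward reasoning and we may descend to $x_{i-1} = t_{i-1}$.

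After $n$ such rounds, each open branch contains only \regexp{} constraints. I would, per branch, merge the constraints on each variable by \ruleName{Intersect} and test for nonemptiness; \ruleName{Close} kills the branch whenever some variable's language is empty. If every branch closes, the input is unsatisfiable. Otherwise, on any surviving branch I pick a word $w_x$ from the nonempty language of each source variable (those never appearing on a left-hand side), insert $x \in w_x$ via \ruleName{Cut}, and propagate \emph{forward} through the original equation order using \ruleName{Fwd-Prop-Elim}; at each step the current right-hand side has singleton image, so this yields a concrete witness assignment that, by soundness of the calculus, certifies satisfiability.

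The main obstacle, and the reason the straight-line restriction is essential, is guaranteeing that \ruleName{Bwd-Prop} is always applicable and produces only finitely many children. This rests on the recognizability of pre-images under concatenation and under PSST-definable functions; the latter is precisely the content of Lemma~\ref{lem:psst_preimage}, whose effective construction makes every side condition invoked above decidable. Termination of the overall strategy is then immediate: each outer round strictly decreases the number of remaining equations, branching is always finite, and the leaves reduce to finite-state intersection-emptiness problems.
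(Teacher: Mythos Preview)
Your proposal is correct and follows essentially the same three-phase strategy as the paper: eliminate Boolean structure, exhaustively backward-propagate \regexp{} constraints through the straight-line equations via \ruleName{Bwd-Prop} (relying on Lemma~\ref{lem:psst_preimage} for PSSTs and the known recognizability of pre-images under $\concat$), close branches with inconsistent constraints, and on any surviving branch instantiate the input variables via \ruleName{Cut} and evaluate forward with \ruleName{Fwd-Prop-Elim}. The only cosmetic differences are that you pre-merge constraints with \ruleName{Intersect} before each \ruleName{Bwd-Prop} step and make the reverse straight-line processing order explicit, whereas the paper simply appeals to acyclicity for termination; neither changes the substance of the argument.
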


\begin{proof}
  We define a terminating strategy to apply the rules in
  Table~\ref{tab:calculus} to formulas in the $\strlinesl$
  fragment. The resulting proofs will either be closed, proving
  unsatisfiability, or have at least one satisfiable goal containing a
  solution:
\begin{itemize}
\item \emph{Phase 1:} apply the Boolean rules (first row of
  Table~\ref{tab:calculus}) to eliminate Boolean operators.
\item \emph{Phase 2:} apply rule~\ruleName{Bwd-Prop} to all regex
  constraints and all function applications on all proof
  branches. Whenever contradictory regex constraints occur in a proof
  goal, use \ruleName{Close} to close the branch. Also apply
  \ruleName{=-Prop} to systematically propagate constraints across
  equations. This phase terminates because $\strlinesl$
  formulas are acyclic.

  If all branches are closed as a result of Phase~2, the considered
  formula is unsatisfiable; otherwise, we can conclude satisfiability,
  and Phase~3 will extract a solution.
\item \emph{Phase 3:} select an open branch of the proof. On this
  branch, determine the set~$I$ of input variables, which are the
  string variables that do not occur as left-hand side of equations or
  function applications. For every $x \in I$, use rule~\ruleName{Cut}
  to introduce an assignment~$x \in w$ that is consistent with the
  regex constraints on $x$. Then systematically apply
  \ruleName{Subsume}, \ruleName{=-Prop}, \ruleName{Fwd-Prop-Elim} to
  evaluate remaining formulas and produce a
  solution.
\end{itemize}
\end{proof}

\OMIT{
We first show that $\extract$, $\replace$, and $\replaceall$ functions can be replace with PSST transductions.

\begin{lemma}\label{lem-str-fun-to-psst}
    The path feasibility of $\strlinesl$ reduces to path feasibility of string-manipulating programs consisting of a sequence of the statements of the form $z:=x \concat y$, $y:=\cT(x)$, and $\ASSERT{x \in \cA}$, where $\cT$ is a PSST and $\cA$ is an FA.
\end{lemma}

The proof is given in Appendix~\ref{appendix:sec-extract-replace-to-psst}.
The class of programs of the above form is denoted by $\strline'_{\sf reg}$.

Next, we follow the backward reasoning approach proposed in \cite{CCH+18,CHL+19} to solve the path feasibility of $\strline'_{\sf reg}$, where the key is to show that the pre-images of regular languages  under PSSTs  are regular and can be computed effectively.

With Lemma~\ref{lem:psst_preimage}, we  solve the path feasibility of $\strline'_{\sf reg}$ by repeating the following procedure, until no more assignment statements are left. Let $S$ be the current $\strline'_{\sf reg}$ program.
\begin{itemize}
\item If the last assignment statement of $S$ is $y:=\cT(x)$, then let $\ASSERT{y \in \cA_1}$, $\cdots$, $\ASSERT{y \in \cA_n}$ be an enumeration of all the assertion statements for $y$ in $S$. Compute $\cR^{-1}_\cT(\Lang(\cA_1))$ as an FA $\cB_1$, $\cdots$, and $\cR^{-1}_\cT(\Lang(\cA_n))$ as $\cB_n$. Remove  the assignment  $y:=\cT(x)$ and add the assertion statements $\ASSERT{x \in \cB_1}$; $\cdots$; $\ASSERT{x \in \cB_n}$.
\item If the last assignment statement of $S$ is $z:=x \concat y$, then let $\ASSERT{z \in \cA_1}$, $\cdots$, $\ASSERT{z \in \cA_n}$ be an enumeration of all the assertion statements for $z$ in $S$. Compute $\concat^{-1}(\Lang(\cA_1))$, the pre-image of $\concat$ under $\Lang(\cA_1)$, as a collection of FA pairs $(\cB_{1,j}, \cC_{1,j})_{j \in [m_1]}$, $\cdots$, and $\concat^{-1}(\Lang(\cA_n))$ as $(\cB_{n, j}, \cC_{n,j})_{j \in [m_n]}$ (c.f. \cite{CHL+19}). Remove the assignment $z:=x \concat y$, nondeterministically choose the indices $j_1 \in [m_1], \cdots, j_n \in [m_n]$, and add the assertion statements $\ASSERT{x \in \cB_{1,j_1}}; \ASSERT{y \in \cC_{1, j_1}}$; $\cdots$; $\ASSERT{x \in \cB_{n,j_n}}; \ASSERT{y \in \cC_{n, j_n}}$.
\end{itemize}
Let $S'$ be the resulting $\strline'_{\sf reg}$ program containing no assignment statements. Then the path feasibility of $S'$ can be solved by checking the nonemptiness of the intersection of regular constraints for the input variables, which is known to be \pspace-complete \cite{Kozen77}.
}

\paragraph*{Complexity analysis.} Because the pre-image computation for each PSST incurs an exponential blow-up in the size of the input automaton $\Aut$, the aforementioned decision procedure has a non-elementary complexity in the worst-case. In fact, this is optimal and a matching lower-bound is given in 
\ifproceeding the long version of the paper \cite{popl22-full}. \else Appendix~\ref{sec:tower-hard}. \fi

When $\refbefore$ and $\refafter$ are not used, the PSSTs in the reduction are copyless, and the exponential blow-up in the size of the input \FA{} $\Aut$ can be avoided.
That is, the pre-image automaton $\cB$ such that $\Lang(\cB) = \cR^{-1}_{\cT}(\Lang(\Aut))$ is exponential only in the size of $\cT$ and not the size of $\Aut$.
Hence, the exponentials do not stack on top of each other during the backwards analysis and the non-elementary blow-up is not necessary.
Since the PSST $\cT$ may be exponential in the size of the underlying regular expression, we may compute automata that are up to double exponential in size. The states of these automata can be stored in exponential space and the transition relation can be computed on the fly, giving an exponential space algorithm. More details are \ifproceeding given in the long version of this paper \cite{popl22-full}.
\else relegated to Appendix~\ref{app:expspace}.
\fi

Moreover, since the number of PSSTs is usually small in the path constraints of string-manipulating programs, the performance of the decision procedure is actually good on the benchmarks we tested, with the average running time per query a few seconds (see Section~\ref{sect:impl}).

\OMIT{
\color{red}
\begin{remark}
	The full logic is undecidable. We show the SL fragment is tower-of-exponentials complete. If all PSSTs are copyless (i.e. without $<- and $->) the tower-of-exponentials is avoided: the pre-image $Pre^-1_T(A)$ becomes exponential only in the PSST T and not the input automaton A, instead of both. PSSTs may be exponential in regex size, giving doubly exponential in total. Complexity analysis will be expanded in the next version.
\end{remark}
\color{black}
}

\section{Implementation and Experiments}
\label{sect:impl}


We extend the open-source solver
OSTRICH~\cite{CHL+19} to support for $\strline$ 
based on the calculus. In particular, it can decide the satisfiability of
$\strlinesl$ formulas. 
The extension can handle most of the
other operations of the SMT-LIB theory of Unicode
strings.\footnote{\url{http://smtlib.cs.uiowa.edu/theories-UnicodeStrings.shtml}}


\subsection{Implementation}

Our solver extends classical regular expressions in SMT-LIB
with indexed {\sf re.capture} and {\sf re.reference} operators, which
denote capturing groups and references to them. We also add {\sf re.*?}, {\sf re.+?}, {\sf re.opt?} and {\sf re.loop?} as the lazy counterparts of
Kleene star, plus operator, optional operator and loop operator.

Three new string operators are introduced to make use of these extended regular
expressions: {\sf str.replace\_cg}, {\sf str.replace\_cg\_all}, and
{\sf str.extract}. The operators {\sf str.replace\_cg} and {\sf
  str.replace\_cg\_all} are the counterparts of the standard {\sf
  str.replace\_re} and {\sf replace\_re\_all} operators, and allow
capturing groups in the match pattern and references in the
replacement pattern. E.g., the following constraint swaps the
first name and the last name, as in Example~\ref{exmp-name-swap}:

{\small
\begin{minted}{lisp}
(= w (str.replace_cg_all v
      (re.++ ((_ re.capture 1)
                 (re.+ (re.union (re.range "A" "Z") (re.range "a" "z"))))
             (str.to.re " ")
             ((_ re.capture 2)
                 (re.+ (re.union (re.range "A" "Z") (re.range "a" "z")))))
      (re.++ (_ re.reference 2) (_ re.reference 1))))
\end{minted}
}

The replacement string is written as a regular expression only
containing the operators {\sf re.++}, {\sf str.to\_re}, and {\sf
  re.reference}. The use of string variables in the replacement
parameter is not allowed, since the resulting transformation could
not be mapped to a \PSST.

The indexed operator {\sf str.extract} implements $\extract_{i, e}$ in
$\strline$. For instance,

{\small
\begin{minted}{lisp}
  ((_ str.extract 1)
      (re.++ (re.*? re.allchar)((_ re.capture 1) (re.+ (re.range "a" "z")) re.all)) 
      x)
\end{minted}
}

\noindent
extracts the left-most, longest sub-string of lower-case
characters from a string~$x$.

Our implementation is able to handle \textit{anchors} as well,
although for reasons of presentation we did not introduce them as part
of our formalism. Anchors match certain
positions of a string without consuming any input characters. In most
programming languages, it is common to use \verb!^!
and
\verb!$! in regular expressions to signify the start and end of a
string, respectively. We add \textsf{re.begin-anchor} and
\textsf{re.end-anchor} for them. Our implementation correctly models
the semantics of anchors and is able to solve constraints containing
these operators.

\ostrich\ implements the procedure in Theorem~\ref{thm-main}, and
focuses on SL formulas. The three string operators mentioned above
will be converted into an equivalent PSST (see \ifproceeding the full
version of the paper \cite{popl22-full}).  \else
Appendix~\ref{appendix:sec-extract-replace-to-psst}). \fi {\ostrich}
then iteratively applies the propagation rules from
Section~\ref{sect:calculus} to derive further \regexp\ constraints,
and eventually either detect a contradiction, or converge and find a
fixed-point. For straight-line formulas, the existence of a
fixed-point implies satisfiability, and a solution can then be
constructed as described in Section~\ref{sect:calculus}.  In addition,
similar to other SMT solvers, {\ostrich} applies simplification rules
(e.g., {\sf Fwd-Prop-Elim, =-Prop, Subsume, Close}, etc in
Table~\ref{tab:calculus}) to formulas before invoking the SL
procedure.  This enables \ostrich\ to solve some formulas outside of the SL
fragment, but is not a complete procedure for non-SL formulas.

\begin{figure}[tb]
  \scriptsize

  \begin{minipage}{0.55\linewidth}
\begin{minted}{lisp}
(declare-fun x () String)
(define-fun  y () String (str.replace_cg_all x <re1> <repl>))
(push 1)
(assert (str.in.re x (re.++ re.all <re1> re.all)))
(assert (str.in.re y (re.++ re.all <re2> re.all)))
(check-sat) (get-model)
(pop 1) (push 1)
(assert (str.in.re x (re.++ re.all <re1> re.all)))
(assert (not (str.in.re y 
          (re.++ re.all <re2> re.all))))
(check-sat) (get-model)
(pop 1) (push 1)
(assert (not (str.in.re x (re.++ re.all <re1> re.all))))
(check-sat) (get-model)
(pop 1)
\end{minted}
  \end{minipage}\hfill
  \raisebox{-19.2ex}{\rule{0.4pt}{40ex}}\hfill
  \begin{minipage}{0.43\linewidth}
\begin{minted}{js}
function fun(x) {
   if(/<re1>/.test(x)) {
      var y = x.replace(/<re1>/g, <repl>);
      if(/<re2>/.test(y))
        console.log("1");
      else
        console.log("2");
   }
   else
       console.log("3");
}

var S$ = require("S$");
var x = S$.symbol("x", "");
fun(x);
\end{minted}
  \end{minipage}
  
  \caption{Harnesses with replace-all: SMT-LIB for \ostrich\ (left),
    and JavaScript for \expose{} (right).}
  \label{fig:harness}
\end{figure}


\subsection{Experimental evaluation}

Our experiments have the purpose of answering the following main questions:

\OMIT
{
\textbf{R1:} Are the  {\regexp}s defined in this paper
suitable to encode regular expressions in programming languages,
for instance ECMAScript regular expressions~\cite{ECMAScript11}?\zhilin{Should R1 be removed, since we already have semantics validation experiments ?}\philipp{yes, I think we should remove it; everybody agrees?}
}
\noindent
\textbf{R1:} How does \ostrich\ compare to other solvers that can
handle real-world regular expressions, including greedy/lazy
quantifiers and capturing groups?
\\
\textbf{R2:} How does \ostrich\ perform in the context of symbolic execution,
the primary application of string constraint solving?

%
\OMIT
{
\paragraph{For \textbf{R1}:} We implemented a translator from ECMAScript 11th Edition (ES11 for short) regular
expressions to \regexps, and integrated \ostrich\ into the symbolic
execution tool Aratha~\cite{aratha}. We then ran Aratha+\ostrich\ on
the regression test suite of \expose{}~\cite{DBLP:conf/spin/LoringMK17},
as well as some other JavaScript programs containing match or replace
functions extracted from Github. To verify the soundness of
Aratha+\ostrich, we compared the results with those produced by
\expose{}; we also checked the correctness of models computed by
\ostrich\ by concretely executing the JavaScript program under test on
the generated inputs, to confirm that the concrete execution indeed
follows the targeted path. The results are summarized in Table~\ref{tab:exp-r1};
no inconsistencies were observed in the experiments, showing that the
semantics in this paper are indeed suitable for capturing ES11
semantics.
}
\smallskip
{\em For \textbf{R1}:} There are no standard string benchmarks
involving \regexps, and we are not aware of other constraint solvers
supporting capturing groups, neither among the SMT nor the CP
solvers. 
The closest related work is the algorithm implemented in \expose{}, which
applies Z3~\cite{Z3} for solving string constraints, but augments
it with a refinement loop to approximate the {\regexp}
semantics.\footnote{We considered replacing Z3 with \ostrich\ in
  \expose{} for the experiments. However, \expose{} integrates Z3 using its
  C API, and changing to \ostrich, with native support
  for capturing groups, would have required the rewrite of substantial
  parts of \expose{}.}
For \textbf{R1}, we compared \ostrich\ with \expose{}+Z3 on 98,117
\regexps\ taken from \cite{DMC+19}.

For each regular expression, we created four harnesses: two in
SMT-LIB, as inputs for \ostrich, and two in JavaScript, as inputs for
\expose{}+Z3. The two harnesses shown in Fig.~\ref{fig:harness} use one of the
regular expressions from \cite{DMC+19} (\verb!<re1>!) in combination with
the replace-all function to simulate typical string processing;
\verb!<re2>! is the fixed pattern \verb![a-z]+!, and \verb!<repl>! the
replacement string \verb!"$1"!. The three paths of the JavaScript
function~\verb!fun! correspond to the three queries in the SMT-LIB
script, so that a direct comparison can be made between the results of
the SMT-LIB queries and the set of paths covered by \expose{}+Z3. The other
two harnesses are similar to the ones in Fig.~\ref{fig:harness}, but
use the match function instead of replace-all, and contain four
queries and four paths, respectively.

\begin{table}[t]
  \small
  \begin{center}
  \begin{tabular}{|l@{~~}|*{6}{c}|*{5}{c}@{~~}|}
    \hline
     & 
    \multicolumn{6}{c|}{\textbf{\ostrich}} &
    \multicolumn{5}{c|}{\textbf{\expose{}+Z3}}
    \\
      & \multicolumn{6}{c|}{\# queries solved within 60s}
      & \multicolumn{5}{c|}{\# paths covered within 60s}
    \\
     & 0 & 1 & 2 & 3 & 4 & \#Err
     & 0 & 1 & 2 & 3 & 4
    \\\hline
    \textbf{Match}  & 422 & 249 & 751 & 386 & 95,175 & 1,134
    & ~3,333 & 9,274 & 36,916 & 48,594 & 0
    \\
     \emph{(98,117} & \multicolumn{6}{c|}{Average time: 1.57s}
    &\multicolumn{5}{c|}{Average time: 28.0s}
    \\
    \emph{~~benchm.)} & \multicolumn{6}{c|}{Total \#sat: 250,947, \#unsat: 132,662}
    & \multicolumn{5}{c|}{Total \#paths covered: 228,888}
    \\\hline
    \textbf{Replace} & 4,170 & 2,463 & 555 & 89,794 & --- & 1,135
    & ~5,281 & 18,221 & 69,059 & 5,556 & ---
    \\
    \emph{(98,117} & \multicolumn{6}{c|}{Average time: 6.62s}
    & \multicolumn{5}{c|}{Average time: 55.0s}
    \\
    \emph{~~bench.)} & \multicolumn{6}{c|}{Total \#sat: 259,354, \#unsat: 13,601}
    & \multicolumn{5}{c|}{Total \#paths covered: 173,007}
      \\\hline
  \end{tabular}
  \end{center}
  \caption{The number of queries answered by \ostrich, and number of
    paths covered by \expose{}+Z3, in \textbf{R1}. 
    Experiments were done on an AMD Opteron 2220 SE machine, running
    64-bit Linux and Java~1.8.  Runtime per benchmark was limited to
    60s wall-clock time, 2GB memory, and the number of tests
    executed concurrently by \expose{}+Z3 to 1.  Average time is
    wall-clock time per benchmark, timeouts count as 60s.}
  \label{tab:exp-r2}

	\begin{center}
	\begin{tabular}{|l@{\quad}|*{6}{c}|*{6}{c}|}
	\hline
	   & 
	  \multicolumn{6}{c|}{\textbf{Aratha+\ostrich}} &\multicolumn{6}{c|}{\textbf{\expose{}+Z3}}
	  \\
    & 
	  \multicolumn{6}{c|}{~\# paths covered within 120s~~} &\multicolumn{6}{c@{\quad}|}{~\# paths covered within 120s}
	  \\
	   & ~~0~  & ~1~ &  ~2~ & ~3~ & ~$\geq$4~ & ~\#Err~ &
	    ~~0~  & ~1~ &  ~2~ & ~3~ & ~$\geq$4~ & ~\#T.O.~
	  \\\hline
	  \textbf{\expose{}} &  14 & 9 & 9 & 2 & 15 & 2  & 14 &  9 & 9 & 2 & 15 & 6  
	  \\
	  \emph{(49 programs)} & \multicolumn{6}{c|}{Average time: \textbf{4.66}s} & \multicolumn{6}{c@{\quad}|}{Average time: 57.44s}\\
	  & \multicolumn{6}{c|}{Total \#paths covered:124}  & \multicolumn{6}{c@{\quad}|}{Total \#paths covered:121}	  
	  \\\hline
	  \textbf{Match} & 3 & 7 & 12 & 6 & 0 & 0  & 3 & 8 & 12 & 5 & 0 & 6
	  \\
	  \emph{(28 programs)} & \multicolumn{6}{c|}{Average time: \textbf{5.19}s} & \multicolumn{6}{c@{\quad}|}{Average time: 60.26s}\\
	  & \multicolumn{6}{c|}{Total \#paths covered: 49}  & \multicolumn{6}{c@{\quad}|}{Total \#paths covered: 47}	  
	  \\\hline
	  \textbf{Replace} & 12 & 20 & 6 & 0 & 0 & 0  & 15 & 21 & 2 & 0 & 0 & 23
	  \\
	  \emph{(38 programs)} & \multicolumn{6}{c|}{Average time: \textbf{4.14}s} & \multicolumn{6}{c@{\quad}|}{Average time: 95.34s}\\
	  & \multicolumn{6}{c|}{Total \#paths covered: 32}  & \multicolumn{6}{c@{\quad}|}{Total \#paths covered: 25}	  
	  \\\hline
	\end{tabular}
	\end{center}
	\caption{Results of Expose+Z3 and Aratha+{\ostrich} on Javascript programs for \textbf{R2}. Experiments were done on an Intel(R)-Core(TM)-i5-8265U-CPU-@1.60GHz cpu, running 64-bit Linux and Java 1.8. Runtime was limited to 120s wall-clock time. Average time is
    wall-clock time needed per benchmark, and counts timeouts as 120s. \#Err is the number of non-straight-line path constraints that OSTRICH fails to deal with and \#T.O is the number of timeouts. Note that some paths may have already been covered before T.O. }
	\label{tab:exp-r1}
\end{table}

The results of this experiment are shown in
Table~\ref{tab:exp-r2}. \ostrich\ is able to answer all four queries
in 95,175 of the match benchmarks (97\%), and all three queries in
89,794 of the replace-all benchmarks (91.5\%). The errors in 1,134
cases (resp., 1,135 cases) are mainly due to back-references in
\verb!<re1>!, which are not handled by \ostrich. \expose{}+Z3 can
cover 228,888 paths of the match problems in total (91.2\% of the
number of sat results of \ostrich), although the runtime of
\expose{}+Z3 is on average 18x higher than that of \ostrich. For
replace, \expose{}+Z3 can cover 173,007 paths (66.7\%), showing that
this class of constraints is harder; the runtime of \expose{}+Z3 is on
average 8x higher than that of \ostrich.  Overall, even taking into
account that \expose{}+Z3 has to analyze JavaScript code, as opposed
to the SMT-LIB given to \ostrich, the experiments show that \ostrich\
is a highly competitive solver for \regexps.

\smallskip
{\em For \textbf{R2}:} For this experiment,
we  integrated \ostrich\ into the symbolic
execution tool Aratha~\cite{aratha}.  We compare Aratha+{\ostrich} with
\expose{}+Z3 on the regression test suite of \expose{}~\cite{DBLP:conf/spin/LoringMK17},
as well as a collection of other JavaScript programs containing match or replace
functions extracted from Github. In
Table~\ref{tab:exp-r1}, we can see that Aratha+{\ostrich} can within
120s cover slightly more paths than \expose{}+Z3. Aratha+{\ostrich} can
discover feasible paths much more quickly than \expose{}+Z3, however: on
all three families of benchmarks, Aratha+{\ostrich} terminates on
average in less than 10s, and it discovers all paths 
within 20s. \expose{}+Z3 needs the full 120s for
35 of the programs (``T.O.'' in the table),
and it finds new paths until the end of the 120s. Since \expose{}+Z3
handles the replace-all operation by unrolling, it is not able to
prove infeasibility of paths involving such operations, and will
therefore not terminate on some programs.
Overall, the experiments indicate that {\ostrich} is more
efficient than the CEGAR-augmented symbolic execution for dealing
with \regexps.


\section{Related Work}
\label{sec-related}


\noindent{\bf Modelling and Reasoning about \regexp{}.}
%
Variants and extensions of regular expressions to capture their usage in programming languages have received attention 
in both theory and practice. In formal language theory, regular expressions with capturing groups and backreferences were considered in \cite{CSY03,CN09} and also more recently in \cite{Freydenberger13,Schmid16,BM17b,FS19}, where the expressibility issues and decision problems were investigated. Nevertheless, some basic features of these regular expression, namely, the non-commutative union and the greedy/lazy semantics of Kleene star/plus, were not addressed therein. In the software engineering community, 
some empirical studies were recently reported for these regular expressions, including portability across different programing languages \cite{DMC+19} and DDos attacks \cite{SP18}, as well as how programmers write them in practice \cite{MDD+19}.

Prioritized finite-state automata and 
transducers were proposed in \cite{BM17}. Prioritized finite-state transducers add indexed brackets to the input string in order to identify the matches of capturing groups. It is hard---if not impossible---to use prioritized finite-state transducers to model replace(all) function, e.g., swapping the first and last name as in Example~\ref{exmp-name-swap}. In contrast, {\PSST}s store the matches in string variables, which can then be referred to, allowing us to conveniently model the match and replace(all) function. 
Streaming string transducers were used in \cite{ZAM19} to solve the straight-line string constraints with concatenation, finite-state transducers, and regular constraints.

\smallskip
\noindent {\bf String Constraint Solving.}
As we discussed Section \ref{sec-intro}, there has been much research
focussing on string constraint solving algorithms, especially
in the past ten years. Solvers typically use a combination
of techniques to check the satisfiability of string constraints,
including word-based methods, automata-based methods, and unfolding-based methods
like the translation to bit-vector constraints.
We mention among others the following string solvers:
Z3 \cite{Z3}, CVC4 \cite{cvc4}, Z3-str/2/3/4 \cite{Z3-str,Z3-str2,Z3-str3,BerzishMurphy2021},
 ABC \cite{ABC}, Norn
\cite{Abdulla14}, Trau \cite{Z3-trau,AbdullaACDHRR18-trau,Abdulla17}, OSTRICH
\cite{CHL+19}, S2S \cite{DBLP:conf/aplas/LeH18}, Qzy \cite{cox2017model}, Stranger \cite{Stranger}, Sloth
\cite{HJLRV18,AbdullaA+19},
Slog \cite{fang-yu-circuits}, Slent \cite{WC+18}, Gecode+S \cite{DBLP:conf/cpaior/ScottFPS17}, G-Strings \cite{DBLP:conf/cp/AmadiniGST17}, HAMPI
\cite{HAMPI}, and S3 \cite{S3}. 
Most modern string solvers provide support of concatenation and regular 
constraints. The push (e.g. see
\cite{GB16,Vijay-length,HAMPI,Berkeley-JavaScript,LB16,S3})
towards incorporating other functions---e.g. length, 
string-number conversions, replace, replaceAll---in a string theory is an
important theme in the area, owing to the desire to be able to reason 
about complex real-world string-manipulating programs.
These functions, among others, are now part of the SMT-LIB Unicode Strings
standard.\footnote{See
\url{http://smtlib.cs.uiowa.edu/theories-UnicodeStrings.shtml}}

To the best of our knowledge, there is currently no solver that
supports \regexp\ features like greedy/lazing matching or capturing
groups (apart from our own solver \ostrich). This was remarked in
\cite{LMK19}, where the authors try to amend the situation by developing 
\expose{} --- a dynamic symbolic execution engine --- that maps path 
constraints in JavaScript to Z3. The strength of \expose{} is in a thorough
modelling of \regexp{} features, some of which (including backreferences) we do 
not cover in our string constraint language and string solver \ostrich{}. However,
the features that we do not cover are also rare in practice, according to
\cite{LMK19} --- in fact, around 75\% of all the \regexp{} expressions found in
their benchmarks across 415,487 NPM packages can be covered in our fragment.
The strength of \ostrich{} against \expose{} is in a substantial improvement in
performance (by 30--50 fold) and precision. \expose{} does not terminate 
even for simple examples (e.g. for Example \ref{exmp-name-swap} and Example 
\ref{ex:normalize}), which can be solved by our solver within a few seconds.


For string constraint solving in general, we refer the readers to the recent survey \cite{Ama20}. In this work, we consider a string constraint language which is undecidable in general, and propose a propagation-based calculus to solve the constraints. However, we also identified a straight-line fragment including concatenation, extract, replace(All) which turns to be decidable. Our decision procedure extends the backward-reasoning approach in \cite{CHL+19}, where only standard one-way and two-way finite-state transducers were considered.



\section{Conclusion}\label{sec-conc}
The challenge of reasoning about string constraints with regular expressions
stems from functions like match and replace that exploit features like capturing groups, not to mention the subtle deterministic
(greedy/lazy) matching. Our results provide the first string 
solving method that natively supports and effectively handles \regexp{}, which 
is a large order of magnitude faster than the symbolic execution engine 
\expose{} \cite{LMK19} tailored to constraints with regular expressions, 
which is at the moment the only available method for reasoning about string 
constraints with regular expressions. Our solver \ostrich{} relies on two ingredients: 
(i) Prioritized Streaming String Transducers (used to capture subtle non-standard
semantics of \regexp{}, while being amenable to analysis), and
(ii) a sequent calculus that exploits nice closure and algorithmic properties of
\PSST, and performs a kind of propagation of regular constraints by means of 
taking post-images or pre-images. We have also carried out thorough empirical studies
to validate our formalization of \regexp{} as {\PSST} with respect to JavaScript
semantics, as well as to measure the performance of our solver.
Finally, although the satisfiability of the constraint language is 
undecidable, we have also shown that our solver terminates (and therefore is
complete) for the 
straight-line fragment.

Several avenues for future work are obvious. Firstly, it would be interesting to
see how \expose{} could be used in combination with our solver \ostrich{}. This
would essentially lift \ostrich{} to a symbolic execution engine (i.e. working
at the level of programs).

Secondly, we could incorporate other features of \regexp{} that
are not in our framework, e.g., lookahead and backreferences.
To handle lookahead, we may consider alternating variants of PSSTs.
Alternating automata~\cite{CKS81} are effectively able to branch and run parallel checks on the input.
We will need to model the subtle interplay between lookahead and references.
Backreferences could be handled by allowing some inspection of variable contents during transducer runs.
There is some precedent for this in higher-order automata~\cite{M76,E91}, whose stacks non-trivially store and use data.
However, the pre-image of string functions supporting \regexp{} with backreferences will not be regular in general, and emptiness of intersection of \regexp{} with backreferences is undecidable~\cite{CN09}.
Decidability can be recovered in some cases~\cite{FS19}.
We may study these cases or look for incomplete algorithms.


Finally, since strings do not live in isolation in
a real-world program, there is a real need to also extend our work with other
data types, in particular the integer data type.



\begin{acks}
We thank Johannes Kinder and anonymous referees for their helpful feedback.
T. Chen is supported by \grantsponsor{}{Birkbeck BEI School}{} under Grant No.~\grantnum{}{ARTEFACT},  \grantsponsor{}{National Natural Science Foundation of China}{} under Grant No.~\grantnum{}{62072309}, \grantsponsor{}{The State Key Laboratory of Novel Software Technology, Nanjing University}{} under Grant No.~\grantnum{}{KFKT2018A16}.
    M. Hague and A. Flores-Lamas are supported by the
    \grantsponsor{GS501100000266}
                 {Engineering and Physical Sciences Research Council}
                 {http://dx.doi.org/10.13039/501100000266}
    under Grant No.~\grantnum{GS501100000266}{EP/T00021X/1}
    S.~Kan and A.~Lin is supported by the European Research Council (ERC) 
    under the European
    Union's Horizon 2020 research and innovation programme (grant agreement no
    759969).
    P.\ R\"ummer is supported by the Swedish Research Council (VR)
    under grant~2018-04727, by the Swedish Foundation for Strategic
    Research (SSF) under the project WebSec (Ref.\ RIT17-0011), by the
    Wallenberg project UPDATE, and by grants from Microsoft and Amazon
    Web Services.
    Z. Wu is supported by the
    \grantsponsor{}{National Natural Science Foundation of China}{}
    under Grant No.~\grantnum{}{61872340}.
\end{acks}

\newpage

\bibliography{main}

\newpage
\appendix

\section{Appendix}

\subsection{Construction of {\PSST} from {\regexp}} \label{app:reg2psst}

\paragraph{Case $e =\emptyset$ (see Figure~\ref{fig-reg2pfa-0})} $\cT_\emptyset = (\{q_{\emptyset, 0}\}, \Sigma, \{x_{\emptyset}\}, \delta_\emptyset, \tau_\emptyset, E_\emptyset, q_{\emptyset, 0}, (\emptyset, \emptyset))$, where there are no transitions out of $q_{\emptyset,0}$, namely, $\delta_\emptyset(q_{\emptyset, 0}, a) = ()$ for every $a \in \Sigma$, $\tau_\emptyset(q_{\emptyset, 0}) = ((); ())$, and $E_\emptyset$ is vacuous here.

\paragraph{Case $e = \varepsilon$ (see Figure~\ref{fig-reg2pfa-0})} $\cT_\varepsilon = (\{q_{\varepsilon, 0}, f_{\varepsilon,0}\}, \Sigma, \{x_\varepsilon\}, \delta_\varepsilon, \tau_\varepsilon, E_\varepsilon, q_{\varepsilon,0}, (\{f_{\varepsilon,0}\}, \emptyset))$, 
where 
$\tau_\varepsilon(q_{\varepsilon,0}) = ((f_{\varepsilon,0}); ())$, 
and $E_\varepsilon(q_{\varepsilon,0}, \varepsilon, f_{\varepsilon,0})(x) = \varepsilon$. Note $F_2 = \emptyset$ here. 

\paragraph{Case $e = a$ (see Figure~\ref{fig-reg2pfa-0})} $\cT_a = (\{q_{a,0}, q_{a,1}, f_{a,0}\}, \Sigma, \{x_a\}, \delta_a, \tau_a, E_a, q_{a,0}, (\emptyset, \{f_{a,0}\}))$, where 
$\tau_a(q_{a,0}) = ((q_{a,1}); ())$, 
$\delta_a(q_{a,1}, a) = (f_{a,0})$, 
%
%
$E_a(q_{a,0}, \varepsilon, q_{a,1})(x_a) = \varepsilon$, and $E_a(q_{a,1}, a, f_{a,0})(x_a) =x_aa$. Note $F_1 = \emptyset$ here. 
\begin{figure}[ht]
	\centering
	\includegraphics[width = 0.4\textwidth]{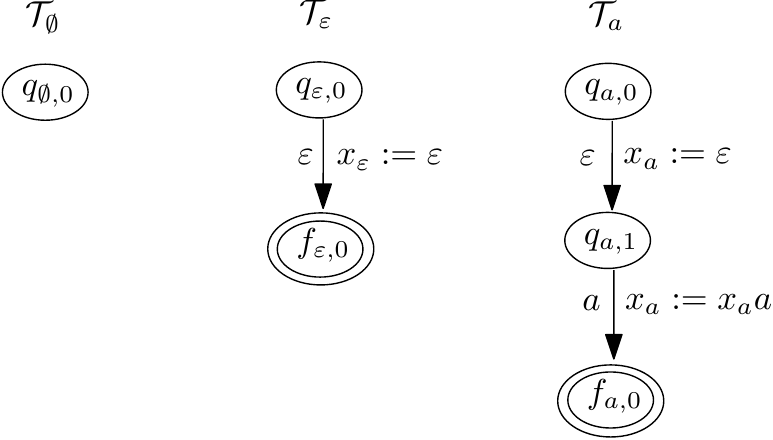}
	\caption{The PSST $\cT_{\emptyset}$, $\cT_{\varepsilon}$, and $\cT_{a}$ }
	\label{fig-reg2pfa-0}
\end{figure}

\OMIT{
\paragraph{Case $e = [e_1 \concat e_2]$} 
For $i \in \{1,2\}$, let  
$\cT_{e_i} = (Q_{e_i}, \Sigma, X_{e_i}, \delta_{e_i}, \tau_{e_i}, E_{e_i}, q_{e_i,0}, (F_{e_i,1}, F_{e_i,2}))$. Moreover, let us assume that $X_{e_1}\cap X_{e_2}=\emptyset$.
Then $\cT_e$ is obtained from $\cT_{e_1} \concat \cT_{e_2}$ (the concatenation of $\cT_{e_1}$ and $\cT_{e_2}$, see Figure~\ref{fig-psstconcat}) by adding a string variable $x_e$, a fresh state $q_{e,0}$ as the initial state, the transition $\tau_e(q_{e,0}) = (q_{e_1,0})$, and the assignments $E_e(q_{e,0}, \varepsilon, q_{e_1,0})(x_e) = \varepsilon$, $E_e(p, a, q)(x_e) = x_e a$ for every transition $(p, a, q)$ in $\cT_{e_1}$, $\cT_{e_2}$, and $\cT'_{e_2}$ (where $a \in \Sigma^\varepsilon$).
}
\OMIT{
	Suppose that
	$\cT'_{e_2} = (Q'_{e_2}, \Sigma, X_{e_2}, \delta'_{e_2}, \tau'_{e_2}, E_{e_2}', q'_{e_2,0}, (F'_{e_2, 1}, F'_{e_2,2}))$ is a fresh copy of $\cT_{e_2}$, but with the string variables of $\cT_{e_2}$ kept unchanged. 
	Then 
	\[\cT_e = ( Q_{e_1} \cup Q_{e_2} \cup Q'_{e_2} \cup \{q_{e,0}\}, \Sigma, X_e, \delta_e, \tau_e, q_{e_1,0}, (F_{e_2,1}, F_{e_2,2} \cup F'_{e_2,1} \cup F'_{e_2,2}))\] where 
	\begin{itemize}
		\item $X_e = X_{e_1} \cup X_{e_2} \cup \{x_e\}$,
		\item $\delta_e$ is defined as follows:
		\begin{itemize}
			\item for every $a \in \Sigma$, $\delta_e(q_{e,0}, a) = ()$,
			\item for every $i \in \{1,2\}$, $q \in Q_{e_i}$ and $a \in \Sigma$, $\delta_e(q, a) = \delta_{e_i}(q, a)$,
			\item for every $q' \in Q'_{e_2}$ and $a \in \Sigma$, $\delta_e(q', a) = \delta'_{e_2}(q',a)$, 
		\end{itemize}
		\item $\tau_e$ is defined as follows: 
		\begin{itemize}
			\item $\tau_e(q_{e,0}) = ((q_{e_1,0}); ())$;
			\item for every $q \in Q_{e_2}$, $\tau_e(q) = \tau_{e_2}(q)$ and $\tau_e(q') = \tau'_{e_2}(q')$, 
			\item for every $q \in Q_{e_1} \setminus (F_{e_1,1} \cup F_{e_1,2})$, $\tau_e(q) = \tau_{e_1}(q)$, 
			\item for every $f_{e_1,1} \in F_{e_1,1}$, $\tau_e(f_{e_1,1}) = ((q_{e_2,0}); ())$, 
			\item for every $f_{e_1,2} \in F_{e_1,2}$, $\tau_e(f_{e_1,2}) = ((q'_{e_2,0}), ())$,
		\end{itemize}
		\item $E_e$ is defined as follows: 
		\begin{itemize}
			\item $E_e(q_{e,0}, \varepsilon, q_{e_1,0})(x_e) = \varepsilon$, and for every $x \in X_{e_1} \cup X_{e_2}$, $E_e(q_{e,0}, \varepsilon, q_{e_1,0})(x) = x$,
			\item for each $i\in \{1,2\}$, transition $(p, a, q)$ in $\cT_{e_i}$ (where $a \in \Sigma^\varepsilon$), $E_e(p, a, q)(x_e) = x_e a$, moreover, for every $x\in X_{e_i}$, $E_e(p, a, q)(x) = E_{e_i}(p, a, q)(x)$,
			\item for each transition $(p', a, q')$ in $\cT'_{e_2}$, $E_e(p', a, q')(x_e) = x_e a$, and for every $x \in X_{e_2}$, $E_e(p', a, q')(x) = E_{e_2}(p, a, q)(x)$,
			\item for $f_{e_1,1} \in F_{e_1,1}$ and $f_{e_1,2} \in F_{e_1,2}$, $E_e(f_{e_1,1},\varepsilon,q_{e_2,0})(x_{e_2}) = E_e(f_{e_1,2},\varepsilon,q'_{e_2,0})(x_{e_2}) =\varepsilon$, and for every $x \in X_e \setminus \{x_{e_2}\}$, $E_e(f_{e_1,1},\varepsilon,q_{e_2,0})(x) = E_e(f_{e_1,2},\varepsilon,q'_{e_2,0})(x) = x$.
		\end{itemize}
	\end{itemize}
}
\OMIT
{
\begin{figure}[ht]
	\centering
	\includegraphics[width = 0.6\textwidth]{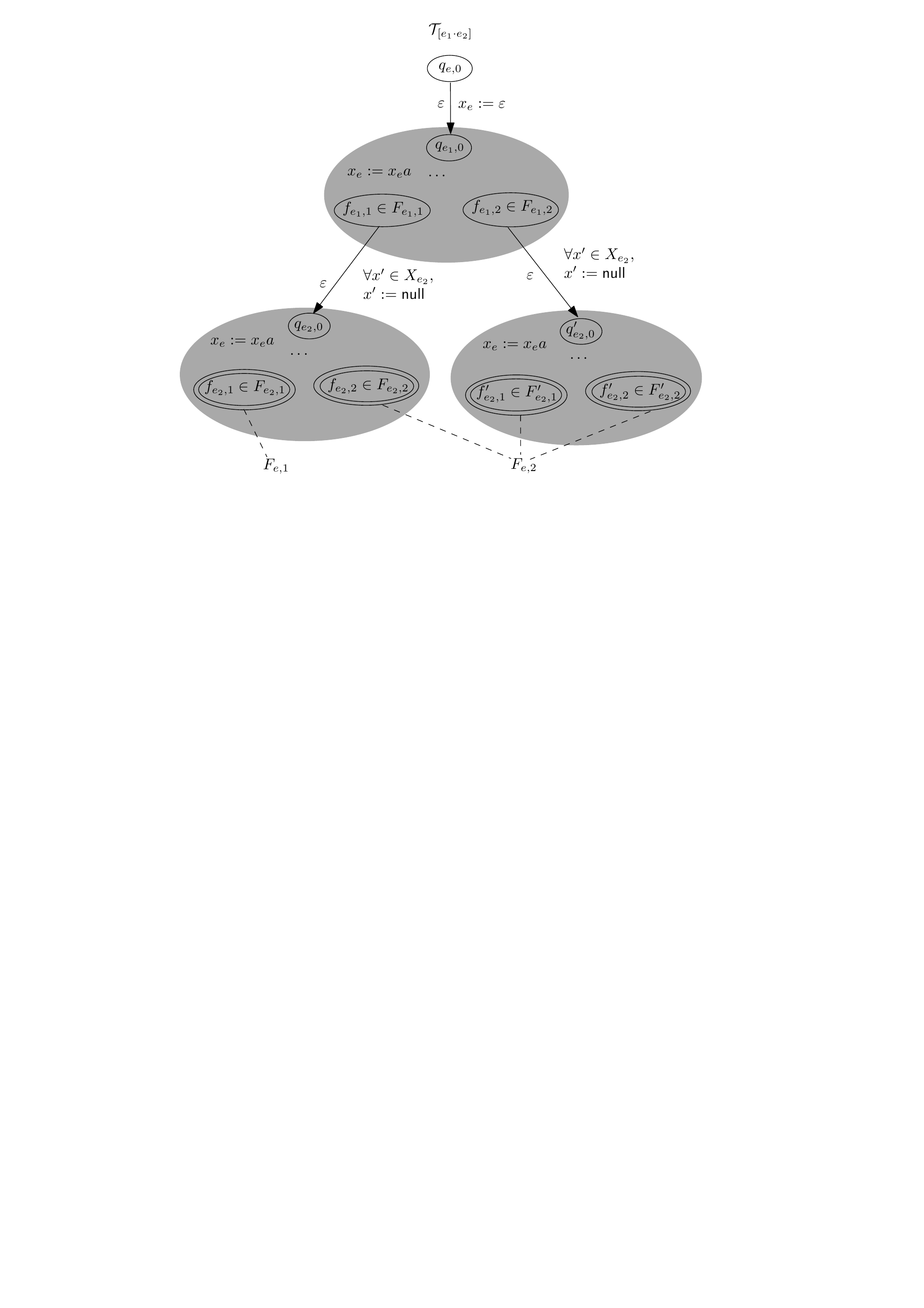}
	\caption{The PSST $\cT_{[e_1\concat e_2]}$}
	\label{fig-reg2pfa-2}
\end{figure}  
}

\OMIT{
\paragraph{Case $e = [e_1^{+}]$}  We first construct $\cT_{e_1}$ and $\cT^-_{[e^\ast_1]}$, where $\cT^-_{[e^\ast_1]}$ is obtained from $\cT_{[e^\ast_1]}$ by dropping the string variable $x_{[e^\ast_1]}$. Therefore, $\cT_{e_1}$ and $\cT^-_{[e^\ast_1]}$ have the same set of string variables, $X_{e_1}$. Then we construct $\cT_{e}$ by adding into $\cT_{e_1} \concat \cT^-_{[e^\ast_1]}$, the concatenation of $\cT_{e_1}$ and $\cT^-_{[e^\ast_1]}$, a fresh state $q_{e,0}$ as the initial state, and the transitions $\tau_e(q_{e,0}) = ((q_{e_1,0});())$, as well as the assignments $E_e(q_{e,0}, \varepsilon, q_{e_1,0})(x_e) = \varepsilon$, $E_e(p, a, q)(x_e) = x_e a$ for every transition $(p, a, q)$ in $\cT_{e_1} \concat \cT^-_{[e^\ast_1]}$. (Note that in $\cT_{e_1} \concat \cT^-_{[e^\ast_1]}$, the values of all variables in $X_{e_1}$ are reset when entering $ \cT^-_{[e^\ast_1]}$ and $(\cT^-_{[e^\ast_1]})'$.)
}


\paragraph{Case $e = [e_1^{+?}]$} Then $\cT_e$ is constructed from $\cT_{e_1}$ and $\cT^-_{[e^{\ast?}_1]}$, similarly to the aforementioned construction of $\cT_{[e_1^{+}]}$.

\paragraph{Case $e = [e_1^{\{m_1,m_2\}?}]$ for $1 \le m_1 < m_2$ (see Figure~\ref{fig-reg2pfa-5})} Then $\cT_e$ is constructed as the concatenation of $\cT^{\{m_1\}}_{e_1}$ and $\cT^{\{1,m_2-m_1\}?}_{e_1}$, where $\cT^{\{1,m_2-m_1\}?}_{e_1}$ is illustrated in Figure~\ref{fig-reg2pfa-5}, which is the same as $\cT^{\{1,m_2-m_1\}}_{e_1}$ in Figure~\ref{fig-reg2pfa-4}, except that the priorities of the $\varepsilon$-transition from $q^{(1)}_{e_1,0}$ to $f^\prime_0$ has the highest priority and  the priorities of the $\varepsilon$-transitions out of each $f^{(i)}_{e_1,2} \in F^{(i)}_{e_1,2}$ to $f^\prime_1$ are swapped.
\begin{figure}[ht]
	\centering
	\includegraphics[width = 0.8\textwidth]{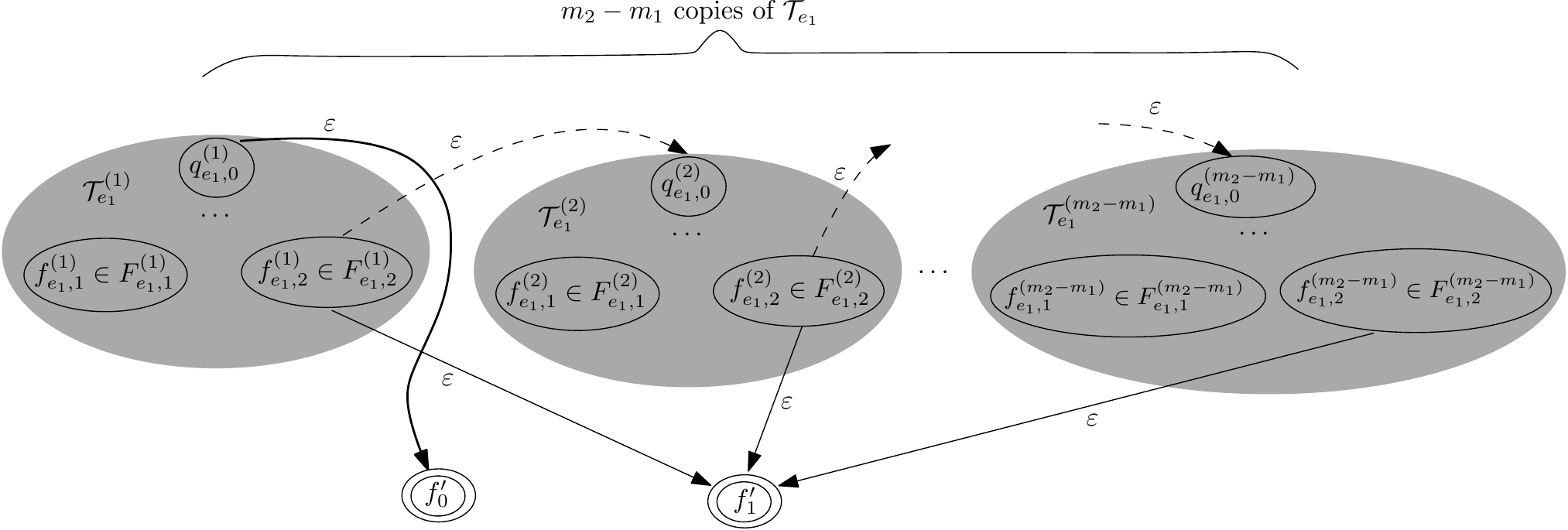}
	\caption{The PSST $\cT^{\{1,m_2-m_1\}?}_{e_1}$}
	\label{fig-reg2pfa-5}
\end{figure} 

\OMIT{
\subsection{Backward Reasoning in the Motivating Example}\label{app-br-mot-exmp}

The path feasibility problem of the program in Equation~(\ref{eqn:exmp}) is solved by ``backward'' reasoning as follows:
\begin{itemize}
    \item At first, we compute the pre-image of $(\Aut_{\scriptsize\mbox{\tt /\^{}0\textbackslash d+.*|.*{\scriptsize\textbackslash}.\textbackslash d*0\$/}})$ under the concatenation $\concat$, which is a finite union of products of regular languages, remove $\tt result2 := integer \concat ``." \concat fractional$, select one disjunct of the union, say $(\Aut'_1, \Aut'_2)$, add the assertion $\ASSERT{\tt integer \in \Aut'_1};\ASSERT{\tt fractional \in \Aut'_2}$, resulting into the following program,
        \begin{eqnarray}\label{eqn:exmp-2}
            & & \ASSERT{\tt decimal \in \Aut_{decimalReg}};\nonumber \\
            & & \tt integer  := \tt  \cT_{\tt replace(\mbox{\scriptsize \tt /\^{}0+/, ""})}(\cT_{\tt extract_{decimalReg,1}}(decimal));\nonumber \\
            & & \tt fractional  := \tt  \cT_{\scriptsize\tt replace(\mbox{\tt /0+\$/, ""})}(\cT_{\tt extract_{decimalReg,2}}(decimal));\nonumber \\
            &&  \ASSERT{\tt integer \in \Aut_{\scriptsize\mbox{\tt.+}}};
            \ASSERT{\tt fractional \in \Aut_{\scriptsize\mbox{\tt.+}}}; \nonumber\\
            && \ASSERT{\tt result2 \in \Aut_{\scriptsize\mbox{\tt /\^{}0\textbackslash d+.*|.*{\scriptsize\textbackslash}.\textbackslash d*0\$/}}}; \nonumber\\
            && \ASSERT{\tt integer \in \Aut'_1};\ASSERT{\tt fractional \in \Aut'_2};
        \end{eqnarray}
    \item Next, we compute the pre-image of $\Aut'_2$ under $\cT_{\tt extract_{decimalReg,2}}$ (see Lemma~\ref{lem:psst_preimage}), denoted by $\cB_1$, then the pre-image of $\Lang(\cB_1)$ under $\cT_{\tt replace(\mbox{\scriptsize \tt /0+\$/, ""})}$, denoted by $\cB'_1$. Similarly, we compute the pre-image of $\Aut_{\scriptsize\mbox{\tt.+}}$ under $\cT_{\tt extract_{decimalReg,2}}$ as well as $\cT_{\tt replace(\mbox{\scriptsize \tt /\^{}0+/, ""})}$,  and obtain a finite automaton $\cB'_2$. Moreover, we remove the assignment statement for  $\tt fractional$, and add the assertions $\ASSERT{\tt decimal \in \cB'_1}; \ASSERT{\tt decimal \in \cB'_2}$. Finally, we compute the pre-images of $\Aut'_1$ and $\Aut_{\scriptsize\mbox{\tt.+}}$ under $\cT_{\tt extract_{decimalReg, 1}}$ as well as $\cT_{\tt replace(\mbox{\scriptsize \tt /\^{}0+/, ""})}$, and obtain finite automata $\cC'_1$ and $\cC'_2$ respectively. Then we remove the assignment for {\tt integer}, and add $\ASSERT{\tt decimal \in \cC'_1};\ASSERT{\tt decimal \in \cC'_2}$. In the end, we get the following program containing no assignment statements,
        \begin{eqnarray}\label{eqn:exmp-3}
            & & \ASSERT{\tt decimal \in \Aut_{decimalReg}};\nonumber \\
            &&  \ASSERT{\tt integer \in \Aut_{\scriptsize\mbox{\tt.+}}};
            \ASSERT{\tt fractional \in \Aut_{\scriptsize\mbox{\tt.+}}}; \nonumber\\
            && \ASSERT{\tt result2 \in \Aut_{\scriptsize\mbox{\tt /\^{}0\textbackslash d+.*|.*{\scriptsize\textbackslash}.\textbackslash d*0\$/}}}; \nonumber\\
            && \ASSERT{\tt integer \in \Aut'_1};\ASSERT{\tt fractional \in \Aut'_2}; \nonumber\\
            && \ASSERT{\tt decimal \in \cB'_1};\ASSERT{\tt decimal \in \cB'_2}; \nonumber\\
            && \ASSERT{\tt decimal \in \cC'_1};\ASSERT{\tt decimal \in \cC'_2};
        \end{eqnarray}
    \item Finally, we check the nonemptiness of the intersection of the regular languages for the input variable $\tt decimal$, namely, $\Lang(\Aut_{\tt decimalReg})$, $\Lang(\cB'_1)$, $\Lang(\cB'_2)$, $\Lang(\cC'_1)$, and $\Lang(\cC'_2)$. If the intersection is nonempty, then the invariant property does \emph{not} hold.
\end{itemize}

\subsection{Undecidability of $\strline$}

\noindent {\bf Proposition~\ref{prop-und}}.
{\it The path feasibility problem of $\strline$ is undecidable}.

\begin{proof}
    The proof of Proposition~\ref{prop-und} is obtained by an encoding of post correspondence problem (PCP).
    Let $\Sigma$ be a finite alphabet such that $\# \not\in \Sigma$ and $[n] \cap \Sigma = \emptyset$, $(u_i, v_i)_{i \in [n]}$ be a PCP instance with $u_i, v_i \in \Sigma^\ast$. A solution of the PCP instance is a string $i_1 \cdots i_m$ with $i_j \in [n]$ for every $j \in [m]$ such that $u_{i_1} \cdots u_{i_m} = v_{i_1} \cdots v_{i_m}$. We will use $\replaceall$ to encode the generation of the strings $u_{i_1} \cdots u_{i_m}$ and $v_{i_1} \cdots v_{i_m}$ from $i_1 \cdots i_m$, then use a regular expression with  capturing groups and backreferences to verify the equality of $u_{i_1} \cdots u_{i_m}$ and $v_{i_1} \cdots v_{i_m}$. Specifically, the PCP instance is encoded by the following $\strline$ program,
    \[
        \begin{array}{l}
            \ASSERT{x_0 \in \{1, \cdots, n\}^+}; \\
            x_1 := \replaceall_{1, u_1}(x_0); \cdots; x_n:=\replaceall_{n, u_n}(x_{n-1}); \\
            y_1:=\replaceall_{1, v_1}(x_0); \cdots; y_n:=\replaceall_{n, u_n}(y_{n-1});\\
            z:= x_n \# y_n; \ASSERT{z \in (\Sigma^+)\#\$1}.
        \end{array}
    \]
    Note that the above program uses backreferences in assertion statements.
    We can achieve the same reduction by replacing $\ASSERT{z \in (\Sigma^+)\#\$1}$ in the above program with $z':= \replace(\Sigma^+\#\$1, \top); \ASSERT{z' \in \top}$, where $\top \not \in \Sigma$. Note that the program resulted from the replacement uses backreferences only in the pattern parameter of the $\replace$ function.
\end{proof}
}

\subsection{From $\extract$, $\replace$ and $\replaceall$ to PSSTs}\label{appendix:sec-extract-replace-to-psst}

\noindent{\bf Lemma~\ref{lem-str-fun-to-psst}}.
    The satisfiability of $\strline$ reduces to the satisfiability of boolean combinations of formulas of the form $z=x \concat y$, $y=\cT(x)$, and $x \in \cA$, where $\cT$ is a PSST and $\cA$ is an FA.

    The proof is in two steps: first we remove $\$0$, $\refbefore$, and $\refafter$, then we encode the remaining string functions with PSSTs.

    \subsubsection{Removing Special References}

    The first step in our proof is to remove the special references $\$0$, $\refbefore$, and $\refafter$ from the replacement strings.
    These can be replaced in a series of steps, leaving only PSST transductions and replacement strings with only simple references ($\$i$).
    We will just consider $\replaceall$ as $\replace$ is almost identical.

    First, to remove $\$0$, suppose we have a statement
    $y := \replaceall_{\pat, \rep}(x)$
    with $\$0$ in $\rep$.
    We simply substitute
    $y := \replaceall_{\pat', \rep'}(x)$
    where
        $\pat' = (\pat)$, and
        $\rep' = \rep[\$1 / \$0, \$2 / \$1, \ldots, \$(k+1) / \$k]$.
    That is, we make the complete match an explicit (first) capture, which shifts the indexes of the remaining capturing groups by 1.

    Now suppose we have a statement
    $y := \replaceall_{\pat, \rep}(x)$
    with $\refbefore$ or $\refafter$ in $\rep$.
    We replace it with the following statements, explained below, where
    $y_1, \ldots, y_5$
    are fresh variables.
    \[
        \begin{array}{l}
            y_1 := \replaceall_{(\pat), \langle\$1\rangle}(x); \\
            y_2 := \psst_{\langle}(y_1); \\
            y_3 := \psst_{\mathrm{rev}}(y_2); \\
            y_4 := \psst_{\rangle}(y_3); \\
            y_5 := \psst_{\mathrm{rev}}(y_4); \\
            y := \replaceall_{\pat', \rep'}(y_5)
        \end{array}
    \]

    The first step is to mark the matched parts of the string with $\langle$ and $\rangle$ brackets (where $\langle$ and $\rangle$ are not part of the main alphabet).
    This is achieved by the first $\replaceall$.

    Next, we use a PSST $\psst_{\langle}$ that passes over the marked word.
    This is a copyful PSST that simply stores the word read so far into a variable $X$, except for the $\langle$ and $\rangle$ characters.
    It also has an output variable $O$, to which it also copies each character directly, except $\langle$.
    When it encounters $\langle$ it appends to $O$ the string
    $\langle X \langle$.
    That is, it puts the entire string preceding each $\langle$ into the output, surrounded by $\langle$ at the start and end.
    This is copyful since $X$ will be copied to both $X$ and $O$ in this step.
    For example, suppose the input string were
    $a b \langle c \rangle d \langle e \rangle f$,
    the output of $\psst_\langle$ would be
    $a b \langle \underline{a b} \langle c \rangle d \langle \underline{a b c d} \langle e \rangle f$.
    We have underlined the strings inserted for readability.

    The next step is to do the same for $\rangle$.
    To achieve this we first reverse the string so that a PSST can read the end of the string first.
    A similar transduction to $\psst_\langle$ is performed before the string is reversed again.
    In our example the resulting string is
    $a b \langle \underline{a b} \langle c \rangle \underline{d e f} \rangle d \langle \underline{a b c d} \langle e \rangle \underline{f} \rangle f$.

    Finally, we have
    $y := \replaceall_{\pat', \rep'}(y_5)$
    where
    $\pat' =
         \langle (\Sigma^{*?}) \langle
         \pat
         \rangle (\Sigma^{*?}) \rangle$, and
    \[
        \rep' = \rep[
            \$1 / \refbefore,
            \$2 / \$1,
            \ldots
            \$(k+1) / \$k,
            \$(k+2) / \refafter
        ] \ .
    \]
    That is, by inserting the preceding and succeeding text directly next to each match, we can use simple references $\$i$ instead of $\refbefore$ and $\refafter$.

    \subsubsection{Encoding string functions as PSSTs}

    Once $\$0$, $\refbefore$, and $\refafter$ are removed from the replacement strings, then the string functions can be replaced by PSSTs.

    \begin{lemma}
        For each string function $f = \extract_{i,e}$, $\replace_{\pat, \rep}$, or $\replaceall_{\pat, \rep}$ without $\refbefore$ or $\refafter$ in the replacements strings, a PSST $\cT_f$ can be constructed such that
        $$\cR_{f} = \{(w, w') \mid w'= f(w)\}.$$
    \end{lemma}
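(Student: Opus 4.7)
The plan is to handle the three functions separately and reuse the construction of $\cT_e$ from Section~\ref{sect:regextopsst} in each case, together with the informal description of $\cT_{\replaceall_{\pat,\rep}}$ that was given with Fig.~\ref{fig-psst-replaceall}.

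For $\extract_{i,e}$, I would start from $\cT_e = (Q_e, \Sigma, X_e, \delta_e, \tau_e, E_e, q_{e,0}, (F_{e,1}, F_{e,2}))$ and identify the subexpression $e'$ of $e$ corresponding to the $i$th capturing group (with $e' = e$ when $i=0$). Let $\cT_{i,e}$ be the PSST obtained from $\cT_e$ by keeping only the variable $x_{e'}$ (and projecting all assignments in $E_e$ to this variable) and by setting the output function so that $F(f) = x_{e'}$ for every $f \in F_{e,1} \cup F_{e,2}$. Correctness follows from the already-validated semantics of $\cT_e$: its (unique) accepting run on $w$ coincides with the way JavaScript matches $e$ against $w$, so the final value of $x_{e'}$ is exactly the match of the $i$th capturing group, i.e.\ $\extract_{i,e}(w)$. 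If $w\notin \Lang(e)$, then $\cT_e$ (hence $\cT_{i,e}$) has no accepting run, and the output is undefined; if the $i$th group is not entered along the accepting run, then $x_{e'}$ retains its initial value $\nullchar$, matching the semantics described in Section~\ref{sec:logic}.

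For $\replaceall_{\pat,\rep}$, I would formalize the construction sketched around Fig.~\ref{fig-psst-replaceall}. Write $\rep = w_1 \$i_1 w_2 \cdots w_k \$i_k w_{k+1}$ and introduce fresh variables $x_0,y_1,\ldots,y_k$ together with the capturing-group variables $x_{e'_{i_1}},\ldots,x_{e'_{i_k}}$ of $\cT_\pat$. Add a fresh state $q'_0$ (the new initial state) with two $\varepsilon$-outgoing options: a higher-priority move to $q_{\pat,0}$ (try to match $\pat$ now) and a lower-priority self-loop behaviour where, on every letter $a$, $\cT_{\replaceall_{\pat,\rep}}$ stays at $q'_0$ and updates $x_0 := x_0 a$. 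Inside $\cT_\pat$ the variable updates are copied verbatim, and in addition each $y_j$ is updated by the same rule as $x_{e'_{i_j}}$. From every $f \in F_{\pat,1}\cup F_{\pat,2}$ add an $\varepsilon$-transition back to $q'_0$ that performs the assignment $x_0 := x_0 \cdot w_1 y_1 w_2\cdots w_k y_k w_{k+1}$ and resets every other variable to $\nullchar$. Finally, $q'_0$ is the unique accepting state with output $F(q'_0) = x_0$. The case $\replace_{\pat,\rep}$ is identical except that after the first replacement we disable re-entry into $\cT_\pat$; concretely, we add a second ``post-match'' copy of the $q'_0$ loop that only appends letters to $x_0$ and go there (instead of back to $q'_0$) after the first successful match.

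The correctness argument is where the main work lies. I would prove, by induction on $|w|$, the following invariant: on input $w$, the highest-priority accepting run of $\cT_{\replaceall_{\pat,\rep}}$ decomposes $w$ uniquely as $w = u_1 v_1 u_2 v_2 \cdots u_m v_m u_{m+1}$, where each $v_j$ is the \emph{leftmost} match of $\pat$ in the suffix $u_j v_j u_{j+1}\cdots$, and the final value of $x_0$ equals $u_1 r_1 u_2 r_2\cdots u_m r_m u_{m+1}$, where $r_j$ is obtained from $\rep$ by substituting the $y_\ell$'s with the matches of the $i_\ell$th capturing group during $v_j$. The key observations are: (i) the priority assignment at $q'_0$ (prefer entering $\cT_\pat$) together with the internal priorities of $\cT_\pat$ (established by the experimental validation in Section~\ref{sect:regextopsst}) forces the \emph{leftmost longest/greedy} match of $\pat$, matching JavaScript semantics; (ii) during the simulation of $\cT_\pat$ the updates to $y_j$ mirror those to $x_{e'_{i_j}}$, so at the moment the match is completed both variables hold the same string; (iii) using fresh $y_j$'s for each occurrence keeps every transition copyless even when $\$i$ repeats in $\rep$, so the $x_0 := x_0 \cdot w_1 y_1 \cdots$ step copies each variable only once. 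The main obstacle will be formalizing point~(i): we must show that the priorities defined on PSSTs, when composed with the outer priorities at $q'_0$, pick exactly the same decomposition of $w$ as the semantics of $\replaceall$. This reduces to an induction on the length of runs comparing any accepting run that does not start the leftmost match as early as possible to one that does, and showing that the latter has strictly higher priority according to Definition of the priority order.
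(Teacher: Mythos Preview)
Your proposal is correct and follows essentially the same route as the paper: you build $\cT_{i,e}$ by projecting $\cT_e$ onto the single variable $x_{e'}$ with output $x_{e'}$, and you construct $\cT_{\replaceall_{\pat,\rep}}$ by wrapping $\cT_\pat$ with a fresh state $q'_0$ that scans unmatched letters into $x_0$, enters $\cT_\pat$ via a higher-priority $\varepsilon$-move, and on reaching a final state of $\cT_\pat$ appends the instantiated replacement to $x_0$ and loops back; the $\replace$ variant diverts to a non-reentrant copy after the first match. This is exactly the construction the paper gives (both informally in Section~4.3 and formally in the appendix), including the use of fresh $y_j$'s mirroring $x_{e'_{i_j}}$ to keep the transducer copyless when references repeat.

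The one place where you go beyond the paper is the correctness argument: the paper's proof is purely constructive and does not spell out why the highest-priority accepting run realizes the leftmost match semantics, whereas you sketch an inductive invariant and a priority-comparison argument. That extra detail is sound and a genuine addition, but be aware that the paper treats ``correctness'' as already discharged by the semantics validation of $\cT_e$ in Section~\ref{sect:regextopsst}; it does not carry out the run-priority induction you outline.
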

    \begin{proof}
    The $\extract_{i,e}$ can function be defined by a PSST $\cT_{i,e}$ obtained from the PSST $\cT_e$ (see Section~\ref{sect:regextopsst}) by removing all the string variables, except the string variable $x_{e'}$, where $e'$ is the subexpression of $e$ corresponding to the $i$th capturing group,  and setting the output expression of the final states as $x_{e'}$.

\OMIT{
        Let $e'$ be the subexpression corresponding to the $i$-th capturing group of $e$. In particular, if $i=0$, then $e' = e$.
        Suppose $\cA_e = (Q, \Sigma, \delta, \tau, q_0, f_0)$, and ${\sf Sub}_{e'}[\cA_e]$ is any isomorphic copy of $\cA_{e'}$ in $\cA_e$.

        Intuitively, $\cT_{\extract_{i,e}}$ (see Fig.~\ref{fig-psst-extract})
        \begin{itemize}
            \item uses a string variable $x$ to store the value of the $i$th capturing group,
            \item initially assigns $\nullchar$ to $x$ to denote the fact that the capturing group is not matched yet,
            \item then simulates $\cA_e$ and stores letters into $x$ when applying the transitions in ${\sf Sub}_{e'}[\cA_e]$,
            \item finally outputs the value of $x$ when $\cA_e$ accepts.
        \end{itemize}

        \begin{figure}[ht]
            \centering
            \includegraphics[width = 0.6\textwidth]{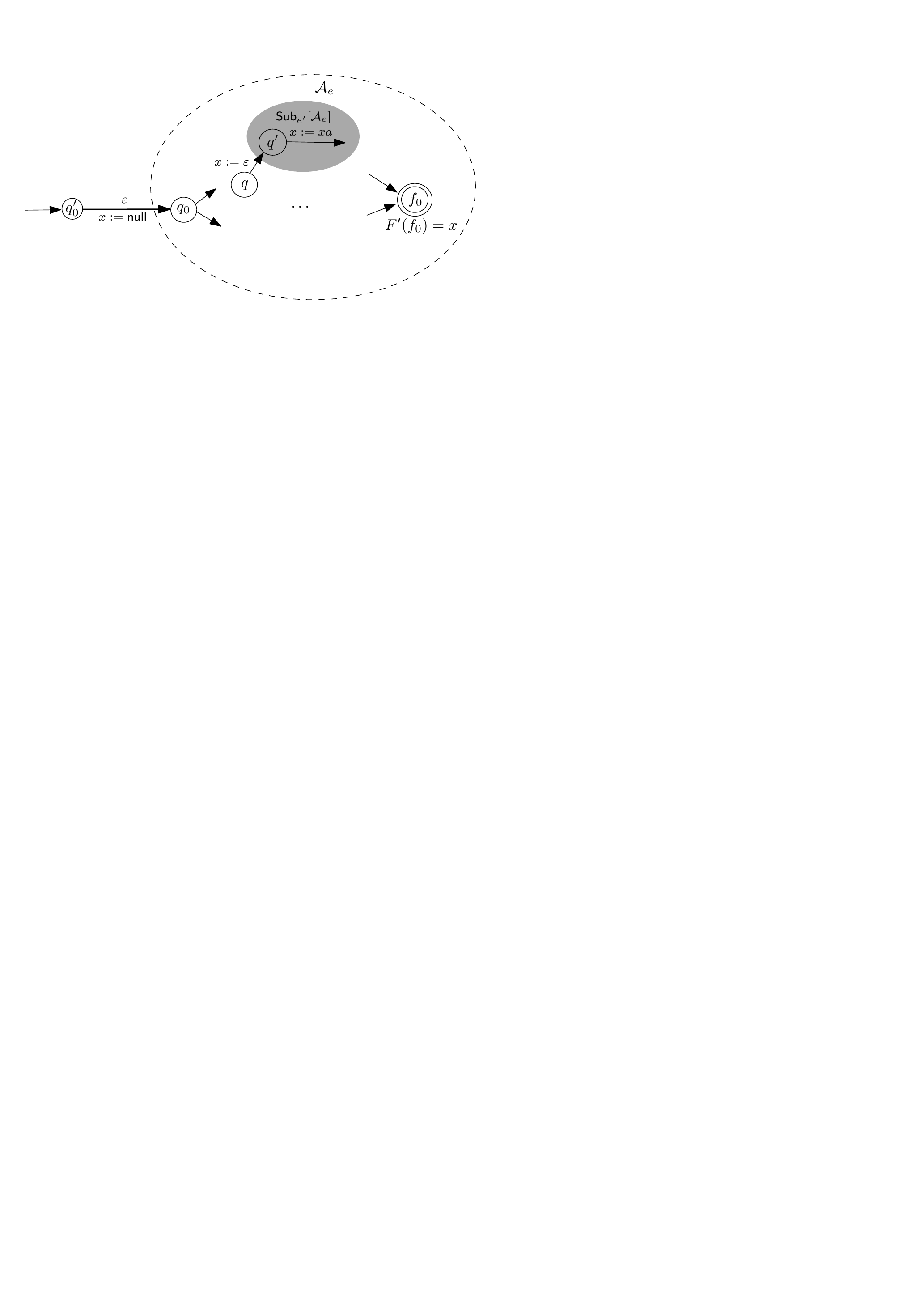}
            \caption{The PSST $\cT_{\extract_{i,e}}$}
            \label{fig-psst-extract}
        \end{figure}

        Formally, $\cT_{\extract_{i,e}} = (Q \cup \{q'_0\}, \Sigma, X, \delta', \tau', E', q'_0, F')$, where
        \begin{itemize}
            \item $q'_0 \not \in Q$,
            \item $X = \{x\}$,
            \item $F'(f_0)= x$ and $F'(p)$ is undefined for all the other states $p \in Q  \cup \{q'_0\}$,
            \item $\delta'$ and $\tau'$ are defined as follows,
                \begin{itemize}
                    \item $\tau'(q'_0) = ((q_0); ())$,
                    \item $\delta'$ includes all the transitions in $\delta$,
                    \item $\tau'$ includes all the transitions in $\tau$,
                \end{itemize}
            \item $E'$ is defined as follows,
                \begin{itemize}
                    \item $E'(q'_0, \varepsilon, q_0)(x) = \nullchar$,
                    \item for each transition $(q, a, q')$ in ${\sf Sub}_{e'}[\cA_e]$ such that $q$ is the inital state of ${\sf Sub}_{e'}[\cA_e]$(Note in this case, the construction in Proposition \ref{prop-rwre-to-pfa} ensures $a$ equals $\varepsilon$), $E'(q, a, q')(x) = \varepsilon$,
                    \item for each other transition $(q, a, q')$ in ${\sf Sub}_{e'}[\cA_e]$, $E'(q, a, q')(x) = x a$,
                    \item for all the other transitions $t$ of $\cA_e$, $E'(t)(x) = x$.
                \end{itemize}
        \end{itemize}
}

        Next, we give the construction of the PSST for $\replaceall_{\pat, \rep}$ where all the references in $\rep$ are of the form $\$i$.
        
        Recall $\rep = w_1 \$i_1 w_2 \cdots w_k \$i_k w_{k+1}$. Let $e'_{i_1},\ldots, e'_{i_k}$ denote the subexpressions of $\pat$ corresponding to the $i_1$th, $\ldots$, $i_k$th capturing groups of $\pat$.
        Then $\cT_{\replaceall_{\pat, \rep}} = (Q_\pat \cup \{q'_0\}$, $\Sigma$, $X'$, $\delta'$, $\tau', E', q'_0, F')$ where
        \begin{itemize}
            \item $q'_0 \not \in Q_\pat$,

            \item  $X' = \{x_0\} \cup X_\pat$,
            \item $F'(q'_0) = x_0$, and $F'(q')$ is undefined for every $q' \in Q_\pat$,
            \item $\delta'$ comprises the transitions in $\delta_\pat$, and the transition $\delta'(q'_0, a) = (q'_0)$ for $a \in \Sigma$,

            \item  $\tau'$ comprises the transitions in $\tau_\pat$, the transitions $\tau'(q'_0) = ((q_{\pat, 0}); ())$, $\tau'(f_{\pat, 1}) = ((q'_0); ())$ and $\tau'(f_{\pat, 2}) = ((q'_{0}); ())$ for $f_{\pat, 1} \in F_{\pat, 1}$ and $f_{\pat, 2} \in F_{\pat, 2}$,

\OMIT{
                \begin{itemize}
                    \item $\delta'(q'_0, a) = (q'_0)$ for every $a \in \Sigma$, and $\tau'(q'_0) = ((q_0); ())$,
                    \item for every $q \in Q \setminus \{f_0\}$ and $a \in \Sigma$, $\delta'(q, a) = \delta(q, a)$ and $\tau'(q) = \tau(q)$,
                    \item $\delta'(f_0, a) = ()$ for every $a \in \Sigma$ and $\tau'(f_0) = ((q'_0); ())$,
                \end{itemize}
}
                %
            \item $E'$ inherits $E_\pat$, and includes the assignments $E'(q'_0, a, q'_0)(x_0)  = x_0 a$ for $a \in \Sigma$, $E'(f, \varepsilon, q'_0) (x_0) = x_0\rep[x_{e'_{i_1}}/\$i_1, \ldots, x_{e'_{i_k}}/i_k]$ and $E'(f, \varepsilon, q'_0) (x) = \nullchar$ for every  $f \in F_{\pat,1} \cup F_{\pat, 2}$ and $x \in X_\pat$.
\OMIT{
                \begin{itemize}
                    \item for every transition $(q, a, q')$ with $a \in \Sigma^\varepsilon$ in $\cA_\pat$, $E(q, a, q')(x_0) = x_0$,
                    \item for every transition $(q, a, q')$ with $a \in \Sigma^\varepsilon$ and every $j \in [k]$,  if $(q, a, q')$ occurs in ${\sf Sub}_{e'_{i_j}}[\cA_\pat]$, then $E(q, a, q')(x_j) = x_ja$, otherwise, $E(q, a, q')(x_j) = x_j$,
                    \item  for every $a \in \Sigma$ and $j \in [k]$, $E(q'_0, a, q'_0)(x_0) = x_0a$ and $E(q'_0, a, q'_0)$$(x_j) = x_j$,
                    \item $E(q'_0, \varepsilon, q_0)(x_j) = x_j$ for every $j \in [k] \cup \{0\}$,
                    \item $E(f_0, \varepsilon, q'_0)(x_0) = x_0 \rep[x_1/\$i_1,\ldots, x_k/\$i_k]$, moreover, for every $j \in [k]$, $E(f_0, \varepsilon, q'_0)(x_j) = \varepsilon$, where $ \rep[x_1/\$i_1,\ldots, x_k/\$i_k]$ denotes the string term obtained from $\rep$ by replacing every occurrence of $\$i_1,\cdots, \$i_k$ with $x_1,\cdots,x_k$ respectively.
                \end{itemize}
}
                %
        \end{itemize}
        \end{proof}

        The construction of the PSST for $\replace_{\pat, \rep}$ is similar and illustrated in Fig.~\ref{fig-psst-replace}. The details are omitted.
        \begin{figure}[ht]
            \centering
            \includegraphics[width=\textwidth]{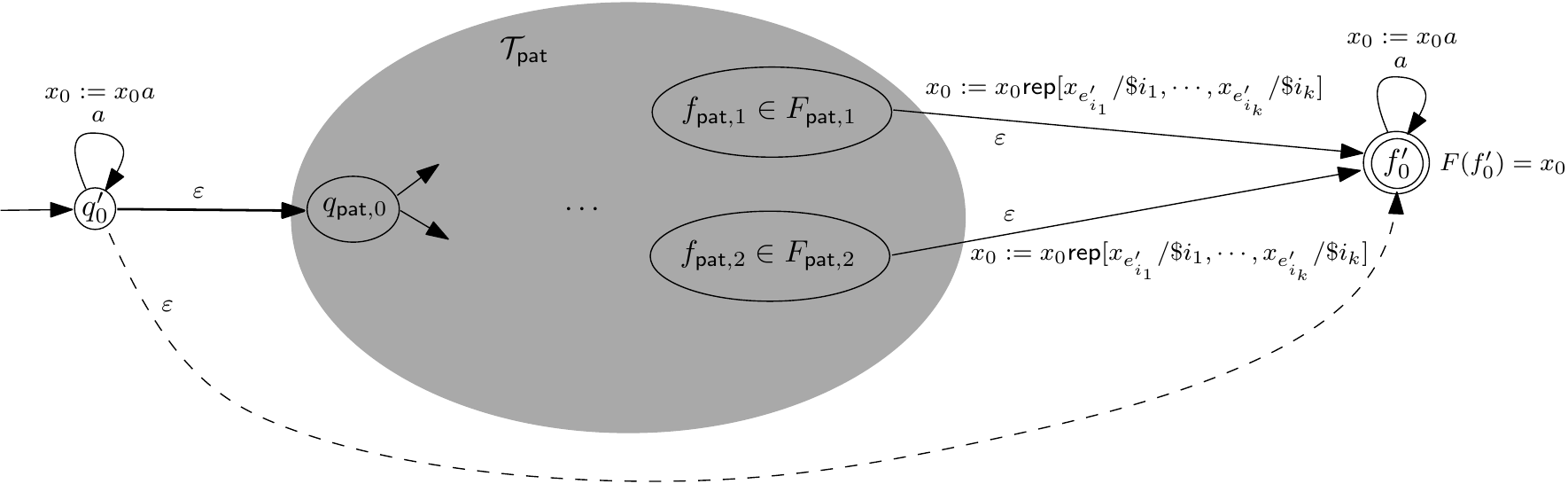}
            \caption{The PSST $\cT_{\replace_{\pat,\rep}}$}
            \label{fig-psst-replace}
        \end{figure}

\subsection{Proof of Lemma~\ref{lem:psst_preimage}}\label{app-pre-image}


\medskip

\noindent {\bf Lemma~\ref{lem:psst_preimage}}.
\emph{  Given a PSST $\psst = (Q_T, \Sigma$, $X$, $\delta_T$, $\tau_T$, $E_T$,  $q_{0, T}$, $F_T$) and an \FA{} $\Aut
  = (Q_A, \Sigma, \delta_A, q_{0, A}, F_A)$, we can compute an \FA{} $\cB = (Q_B,
  \Sigma, \delta_B, q_{0, B}, F_B)$ in exponential time  such that $\Lang(\cB) = \cR^{-1}_{\cT}(\Lang(\Aut))$.
}

\medskip

We prove Lemma~\ref{lem:psst_preimage} in the sequel.

        Let $\psst = (Q_T, \Sigma$, $X, \delta_T, \tau_T, E_T,  q_{0, T}, F_T)$ be a PSST  and $\Aut
        = (Q_A, \Sigma$, $\delta_A$, $q_{0, A}$, $F_A)$ be an \FA{}. Without loss of generality, we assume that $\Aut$ contains no $\varepsilon$-transitions. For convenience, we use $\cE(\tau_T)$ to denote $\{(q, q') \mid q' \in \tau_T(q)\}$. For convenience, for $a \in \Sigma$, we use $\delta^{(a)}_A$ to denote the  relation $\{(q, q') \mid (q, a, q') \in \delta_A\}$.

        To illustrate the intuition of the proof of Lemma~\ref{lem:psst_preimage}, let us start with the following natural idea of firstly constructing a PFA $\cB$ for the pre-image: $\cB$ simulates a run of $\psst$ on $w$, and, for each $x \in X$, records an $\Aut$-abstraction of the string stored in $x$, that is, the set of state pairs $(p, q) \in Q_A \times Q_A$ such that starting from $p$, $\Aut$ can reach $q$ after reading the string stored in $x$. Specifically, the states of $\cB$ are of the form $(q, \rho)$ with $q \in Q$ and $\rho \in (\cP(Q_A \times Q_A ))^{X}$. Moreover, the priorities of $\cB$ inherit those of $\psst$. The PFA $\cB$ is then transformed to an equivalent FA by simply dropping all priorities. We refer to this FA as $\cB'$.

        Nevertheless, it turns out that this construction is flawed: A string $w$ is in $\cR^{-1}_{\cT}(\Lang(\Aut))$ iff the (unique) accepting run of $\cT$ on $w$ produces an output $w'$ that is accepted by $\Aut$. However, a string $w$ is accepted by $\cB'$ iff \emph{there is a run of $\cT$ on $w$, not necessarily of the highest priority}, producing an output $w'$ that is accepted by $\Aut$.

\OMIT{        
The following example illustrates the flaw of the construction above.

        \begin{example}
            \label{pre-image-count-examp}
            Let $\cT_{\tt extract_{decimalReg,1}}$ be the PSST in Fig.~\ref{fig-psst-exmp} and $\cA$ be the FA corresponding to the regular expression $\{1,\cdots,9\}^*$, specifically, $\cA= (\{p_0\}$, $\{0,\cdots,9\}$, $\delta_A$, $p_0, \{p_0\})$, where $\delta_A = \{(q_0, \ell, q_0) \mid \ell = 1, \cdots, 9\}$.
            %

            Let us consider $w = 10$. The accepting run of $\cT_{\tt extract_{decimalReg,1}}$ on $w$ is $q_0 \xrightarrow[x_1:=x_11]{1} q_1 \xrightarrow[x_1:=x_10]{0} q_1 \xrightarrow{\varepsilon} q_2 \xrightarrow{\varepsilon} q_3 \xrightarrow{\varepsilon} q_4 \xrightarrow{\varepsilon} q_5 \xrightarrow{\varepsilon} q_6$, producing an output $10 \not \in \Lang(\cA)$. Therefore, $10 \not \in \cR_\cT^{-1}(\Lang(\cA))$. Nevertheless, if we consider the FA $\cB'$ constructed from $\cT$ and $\cA$,  it turns out that $\cB'$ does accept $w$, witnessed by the run $(q_0, \{(p_0,p_0)\}) \xrightarrow{1} (q_1, \{(p_0, p_0)\}) \xrightarrow{\varepsilon} (q_2, \{(p_0, p_0)\}) \xrightarrow{\varepsilon}  (q_3, \{(p_0, p_0)\}) \xrightarrow{\varepsilon}  (q_4, \{(p_0, p_0)\}) \xrightarrow{\varepsilon}  (q_5, \{(p_0, p_0)\}) \xrightarrow{0}  (q_5, \{(p_0, p_0)\}) \xrightarrow{\varepsilon}  (q_6, \{(p_0, p_0)\})$, where $\{(p_0, p_0)\}$ is the $\cA$-abstraction of the strings $\varepsilon$ and $1$. On the other hand, the run of $\cB'$ corresponding to the accepting run of $\cT$ on $w$, i.e. $(q_0, \{(p_0, p_0)\}) \xrightarrow{1} (q_1, \{(p_0, p_0)\}) \xrightarrow{0} (q_1, \emptyset) \xrightarrow{\varepsilon}  (q_2, \emptyset) \xrightarrow{\varepsilon} (q_3, \emptyset) \xrightarrow{\varepsilon} (q_4, \emptyset) \xrightarrow{\varepsilon} (q_5, \emptyset) \xrightarrow{\varepsilon} (q_6, \emptyset)$, is not accepting, where $\{(p_0,p_0)\}$ is the $\cA$-abstraction of $\varepsilon$ as well as $1$, and $\emptyset$ is the $\cA$-abstraction of $10$.
        \end{example}
}

        \hide{
            \begin{example}
                \label{pre-image-count-examp}
                Let $\cT$ be the PSST and $\cA$ be the FA in Figure~\ref{fig-pre-image-count-exmp}, that is,
                \begin{itemize}
                    \item $\cT=(\{q_0, q_1, q_2\}, \{a,b,c\}, \{x_0\}, \delta_T, \tau_T, E_T, q_0, F_T)$, where $\delta_T(q_0, \sigma) = (q_0)$, $\delta_T(q_1, a) = (q_1)$, $\delta_T(q_2, \sigma) = (q_2)$, $\tau_T(q_0) = ((q_1); ())$, $\tau_T(q_1)=((q_0, q_2);())$, and $\tau_T(q_2)= ((); ())$, $E_T(q_0, \sigma, q_0) (x_0) = x_0 \sigma$, $E_T(q_1, \varepsilon, q_0) (x_0) = x_0 c$, $E_T(q_1, \varepsilon, q_2) (x_0) = x_0 c$, $E_T(q_2, \sigma, q_2) (x_0) = x_0 \sigma$, for $\sigma \in\{ a, b\}$. Moreover, $F_T(q_2)= x_0$;
                    \item $\cA = (\{p_0\}, \{a,b,c\}, \delta_A, p_0, \{p_0\})$, where $\delta_A$ = $\{(p_0, \sigma, p_0)$ $\mid \sigma = b, c\}$.
                \end{itemize}

                Let us consider $w = a$. The accepting run of $\cT$ on $w$ is $q_0 \xrightarrow{\varepsilon} q_1 \xrightarrow[x_0:=x_0c]{\varepsilon} q_0 \xrightarrow[x_0:=x_0a]{a} q_0 \xrightarrow{\varepsilon} q_1 \xrightarrow[x_0:=x_0c]{\varepsilon} q_2$, producing an output $cac \not \in \Lang(\cA)$. Therefore, $a \not \in \cR_\cT^{-1}(\Lang(\cA))$. Nevertheless, if we consider the FA $\cB'$ constructed from $\cT$ and $\cA$,  it turns out that $\cB'$ does accept $w$, witnessed by the run $(q_0, \{(p_0,p_0)\}) \xrightarrow{\varepsilon} (q_1, \{(p_0,p_0)\}) \xrightarrow{a} (q_1, \{(p_0, p_0)\}) \xrightarrow{\varepsilon}  (q_2, \{(p_0, p_0)\})$. On the other hand, the run of $\cB'$ corresponding to the accepting run of $\cT$ on $w$, i.e. $(q_0, \{(p_0,p_0)\}) \xrightarrow{\varepsilon} (q_1, \{(p_0,p_0)\}) \xrightarrow{\varepsilon} (q_0, \{(p_0, p_0)\}) \xrightarrow{a}  (q_0, \emptyset) \xrightarrow{\varepsilon} (q_1, \emptyset) \xrightarrow{\varepsilon} (q_2, \emptyset)$, is not accepting, where $\{(p_0,p_0)\}$ and $\emptyset$ are the $\cA$-abstractions of $x_0$.
            \end{example}

            \begin{figure}[ht]
                \centering
                \includegraphics[scale=0.8]{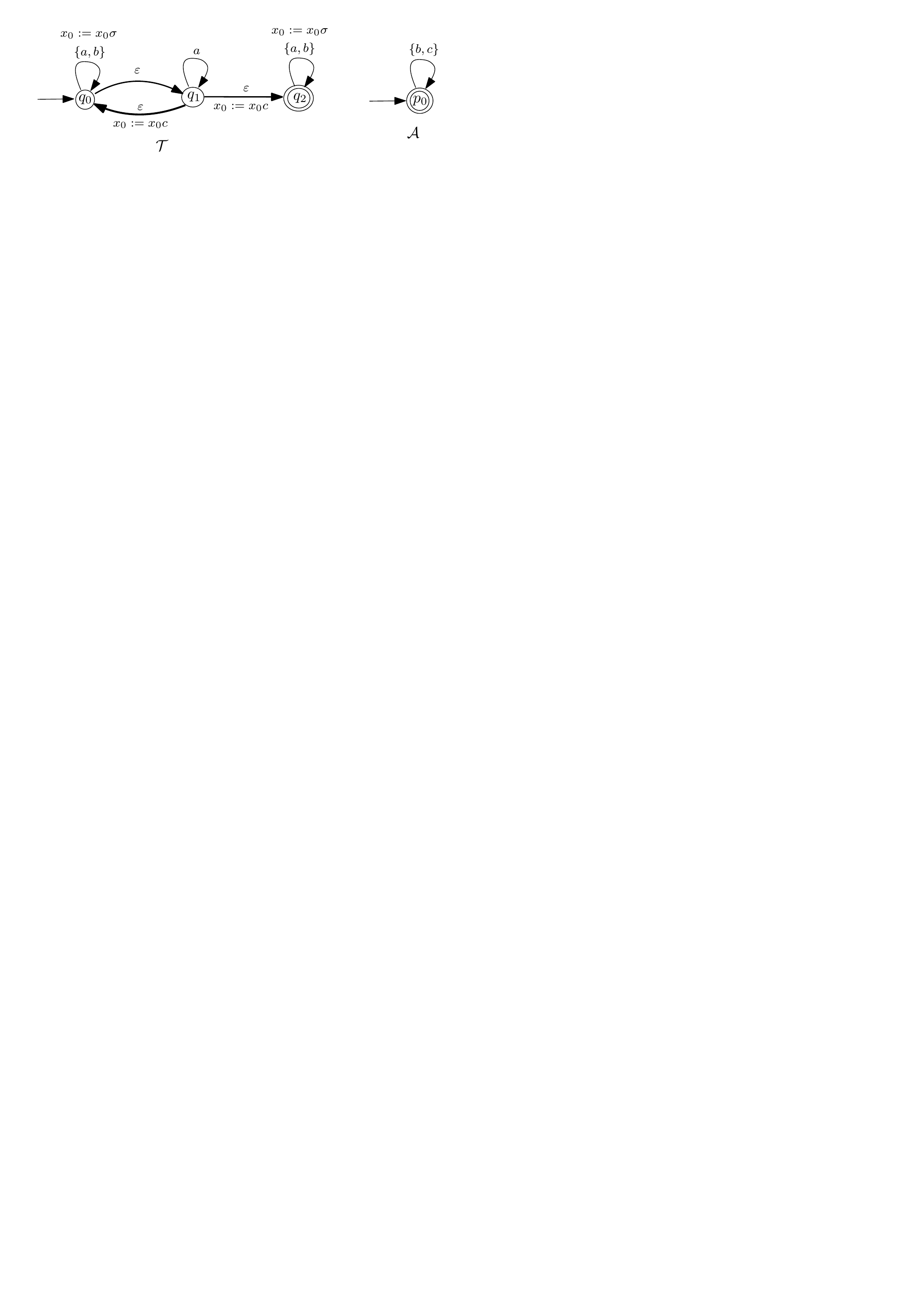}
                \caption{A counterexample to disprove the flawed pre-image construction method}
                \label{fig-pre-image-count-exmp}
            \end{figure}
            }



            While the aforementioned natural idea does not work,  we choose to construct an FA $\cB$ that simulates the \emph{accepting} run of $\psst$ on $w$, and, for each $x \in X$, records an $\Aut$-abstraction of the string stored in $x$, that is, the set of state pairs $(p, q) \in Q_A \times Q_A$ such that starting from $p$, $\Aut$ can reach $q$ after reading the string stored in $x$.
            To simulate the accepting run of $\psst$, it is necessary to record all the states accessible through the runs of higher priorities to ensure the current run is indeed the accepting run of $\psst$ (of highest priority). Moreover, $\cB$ also remembers the set of $\varepsilon$-transitions of $\cT$ after the latest non-$\varepsilon$-transition to ensure that no transition occurs twice in a sequence of $\varepsilon$-transitions of $\cT$.

            Specifically, each state of $\cB$ is of the form $(q, \rho, \Lambda, S)$, where $q \in Q_T$, $\rho \in (\cP(Q_A \times Q_A ))^{X}$, $\Lambda \subseteq \cE(\tau_T)$, and $S \subseteq Q_T$.
            For a state $(q, \rho, \Lambda, S)$, our intention for $S$ is that the states in it are those that can be reached in the runs of higher priorities than the current run, by reading the same sequence of letters and applying the $\varepsilon$-transitions as many as possible. Note that when recording in $S$ all the states accessible through the runs of higher priorities, we do not take the non-repetition of $\varepsilon$-transitions into consideration since if a state is reachable by a sequence of $\varepsilon$-transitions where some $\varepsilon$-transitions are repeated, then there exists also a sequence of non-repeated $\varepsilon$-transitions reaching the state.
            Moreover, when simulating an $a$-transition of $\cT$ (where $a \in \Sigma$) at a state $(q, \rho, \Lambda, S)$, suppose $\delta_T(q, a) = (q_1, \cdots, q_m)$ and $\tau_T(q) = (P_1, P_2)$, then $\cB$ nondeterministically chooses $q_i$ and goes to the state $(q_i, \rho', \emptyset, S')$, where
            \begin{itemize}
                \item $\rho'$ is obtained from $\rho$ and $E_T(q, \sigma, q_i)$,
                \item $\Lambda$ is reset to $\emptyset$,
                \item all the states obtained from $S$ by applying  an $a$ transition should be \emph{saturated by $\varepsilon$-transitions} and put into $S'$, more precisely, all the states reachable from $S$ by first applying an $a$-transition, then a sequence of $\varepsilon$-transitions, should be put into $S'$,
                \item moreover, all the states obtained from $q_1,\cdots, q_{i-1}$ (which are of higher priorities than $q_i$) by saturating with $\varepsilon$-transitions should be put into $S'$,
                \item finally, all the states obtained from those in $P'_1 = \{q' \in P_1 \mid (q, q') \not \in \Lambda\}$ (which are of higher priorities than $q_i$) by saturating with non-$\Lambda$ $\varepsilon$-transitions first (i.e. the $\varepsilon$-transitions that do not belong to $\Lambda$), and applying an $a$-transition next, finally saturating with $\varepsilon$-transitions again, should be put into $S'$, (note that according to the semantics of PSST, the $\varepsilon$-transitions in $\Lambda$ should be avoided when defining $P'_1$ and saturating the states in $P'_1$ with $\varepsilon$-transitions).
            \end{itemize}

            The above construction  does not utilize the so-called \tmtextit{copyless} property (i.e. for each transition $t$ and each variable $x$, $x$ appears at most once on the right-hand side of the assignment for $t$) \cite{AC10,AD11},
            thus it works for general, or \textit{copyful}, PSSTs \cite{FR17}.
            It can be noted that the powerset in
            $\rho \in (\cP(Q_A \times Q_A ))^{X}$
            is required to handle copyful transductions as the contents of a variable may be used in many different situations, each requiring a different abstraction.
            If the PSST is copyless, we can instead use
            $\rho \in (Q_A \times Q_A)^{X}$.
            That is, each variable is used only once, and hence only one abstraction pair is needed.
            The powerset construction in the transitions can be replaced by a non-deterministic choice of the particular pair of states from $Q_A$ that should be kept.
            This avoids the construction being exponential in the size of $A$, which in turn avoids the tower of exponential blow-up in the backwards reasoning.


            \hide{
                \begin{table}[t]
                    \centering
                    \caption{the actual $\cB$ state in Figure
                    \label{table:psst-preimage}
                    \ref{fig-psst-preimage-exmp}}
                    \begin{tabular}{|c|c|}
                        \hline
                        Symbol & State of $\cB$\\
                        \hline
                        $r_0$ & $(q_0, \rho_1, \emptyset, \emptyset)$\\
                        \hline
                        $r_1$ & $(q_1, \rho_1, \{ (q_0, q_1) \}, \emptyset)$\\
                        \hline
                        $r_2$ & $(q_2, \rho_1, \{ (q_0, q_1), (q_1, q_2) \}, \{ q_0 \})$\\
                        \hline
                        $r_3$ & $(q_2, \rho_2, \emptyset, \{ q_0, q_1, q_2 \})$\\
                        \hline
                        $r_4$ & $(q_2, \rho_1, \emptyset, \{ q_0, q_1, q_2 \})$\\
                        \hline
                        $r_5$ & $(q_0, \rho_1, \{ (q_0, q_1) (q_1, q_0) \}, \emptyset)$\\
                        \hline
                        $r_6$ & $(q_0, \rho_2, \emptyset, \emptyset)$\\
                        \hline
                        $r_7$ & $(q_0, \rho_2, \emptyset, \{ q_0, q_1, q_2 \})$\\
                        \hline
                        $r_8$ & $(q_1, \rho_2, \{ (q_0, q_1) \}, \{ q_0, q_1, q_2 \})$\\
                        \hline
                        $r_9$ & $(q_0, \rho_2, \{ (q_0, q_1) (q_1, q_0) \}, \{ q_0, q_1, q_2 \})$\\
                        \hline
                        $r_{10}$ & $(q_2, \rho_2, \{ (q_0, q_1) (q_1, q_2) \}, \{ q_0, q_1, q_2 \})$\\
                        \hline
                        $r_{11}$ & $(q_0, \rho_1, \emptyset, \{ q_0, q_1, q_2 \})$\\
                        \hline
                        $r_{12}$ & $(q_1, \rho_1, \{ (q_0, q_1) \}, \{ q_0, q_1, q_2 \})$\\
                        \hline
                        $r_{13}$ & $(q_0, \rho_1, \{ (q_0, q_1) (q_1, q_0) \}, \{ q_0, q_1, q_2 \})$\\
                        \hline
                        $r_{14}$ & $(q_2, \rho_1, \{ (q_0, q_1) (q_1, q_2) \}, \{ q_0, q_1, q_2 \})$\\
                        \hline
                        $r_{15}$ & $(q_2, \rho_2, \{ (q_0, q_1) (q_1, q_2) \}, \{ q_0 \})$\\
                        \hline
                        $r_{16}$ & $(q_1, \rho_2, \emptyset, \{ q_0, q_1, q_2 \})$\\
                        \hline
                        $r_{17}$ & $(q_2, \rho_2, \{ (q_1, q_2) \}, \{ q_0, q_1, q_2 \})$\\
                        \hline
                        $r_{18}$ & $(q_0, \rho_2, \{ (q_1, q_0) \}, \{ q_0, q_1, q_2 \})$\\
                        \hline
                        $r_{19}$ & $(q_1, \rho_2, \{ (q_1, q_0) (q_0, q_1) \}, \{ q_0, q_1, q_2 \})$\\
                        \hline
                        $r_{20}$ & $(q_2, \rho_2, \{ (q_1, q_0) (q_0, q_1) (q_1, q_2) \}, \{ q_0, q_1, q_2 \})$\\
                        \hline
                        $r_{21}$ & $(q_1, \rho_2, \{ (q_0, q_1) \}, \emptyset)$\\
                        \hline
                        $r_{22}$ & $(q_0, \rho_2, \{ (q_0, q_1) (q_1, q_0) \}, \emptyset)$\\
                        \hline
                    \end{tabular}
                \end{table}
                }

                %

                \hide{
                    \subsection{Complexity}

                    \begin{proposition}[POPL'19]
                        The path feasibility problem of the following two fragments is non-elementary: SL with 2FTs, and SL with FTs+replaceAll.

                        SL[conc, replaceAll, reverse, FFT] is expspace-complete (note that 2FTs in SL are restricted to be one-way and functional)
                    \end{proposition}


                    The main open question is the complexity of the SL fragment with replaceall function and prioritized streaming transducers. Note that PSST can simulate 2FT (adapting Matt's proof?), so we could obtain nonelementary lower bound for SL with PSST.

                    However, this variant of replaceall is quite different from the replaceall we had before ...

                    \begin{enumerate}
                        \item  does  copyless help?
                        \item how about SL with only this version of replaceall?
                    \end{enumerate}
                    }


                    For the formal construction of $\cB$, we need some additional notations.
                    \begin{itemize}
                        \item For $S \subseteq Q_T$, $\delta^{(ip)}_T(S, a) = \{q'_1 \mid \exists q_1 \in S, q'_1 \in \delta_T(q_1, a)\}$.
                        \item For $q \in Q_T$,  if $\tau_T(q) = (P_1, P_2)$, then $\tau^{(ip)}_T(\{q\})=S$ such that $S = P_1 \cup P_2$.
                            Moreover, for $S \subseteq Q_T$, we define $\tau^{(ip)}_T(S) = \bigcup \limits_{q \in S} \tau^{(ip)}_T(\{q\})$. We also use $\big(\tau^{(ip)}_T\big)^\ast$ to denote the $\varepsilon$-closure of $\cT$, namely, $\big(\tau^{(ip)}_T\big)^\ast(S) = \bigcup \limits_{n \in \Nat} \big(\tau^{(ip)}_T\big)^{n}(S)$, where $\big(\tau^{(ip)}_T\big)^{0}(S) = S$, and for $n \in \Nat$, $\big(\tau^{(ip)}_T\big)^{n+1}(S) = \tau^{(ip)}_T\big(\big(\tau^{(ip)}_T\big)^{n}(S)\big)$.
                        \item For $S \subseteq Q_T$ and $\Lambda \subseteq  \cE(\tau_T)$, we use $\big(\tau^{(ip)}_T \backslash \Lambda\big)^\ast(S)$ to denote the set of states reachable from $S$ by sequences of $\varepsilon$-transitions where {\it no} transitions $(q, \varepsilon, q')$ such that $(q, q') \in \Lambda$ are used.
                            %
                            %
                        \item For $\rho \in (\cP(Q_A \times Q_A ))^{X}$ and $s \in X \rightarrow (X \cup \Sigma)^{\ast}$, we use $s(\rho)$ to denote $\rho'$ that is obtained from $\rho$ as follows: For each $x \in X$, if $s(x) = \varepsilon$, then $\rho'(x) = \{(p, p) \mid p \in Q_A\}$, otherwise, let $s(x) = b_1 \cdots b_\ell$ with $b_i \in \Sigma \cup X$ for each $i \in [\ell]$, then $\rho'(x) = \theta_1 \circ \cdots \circ \theta_\ell$, where $\theta_i = \delta^{(b_i)}_A$ if $b_i \in \Sigma$, and $\theta_i = \rho(b_i)$ otherwise, and $\circ$ represents the composition of binary relations.
                    \end{itemize}

                    We are ready to present the formal construction of $\cB =  (Q_B$, $\Sigma$, $\delta_B$, $q_{0, B}, F_B)$.
                    \begin{itemize}
                        \item $Q_B = Q_T \times (\cP(Q_A \times Q_A ))^{X} \times \cP(\cE(\tau_T)) \times \cP(Q_T)$,

                        \item $q_{0, B} = (q_{0,T}, \rho_{\varepsilon}, \emptyset, \emptyset)$ where $\rho_{\varepsilon} (x) = \{(q, q) \mid q \in Q\}$ for each $x \in X$,

                        \item $\delta_{B}$ comprises
                            \begin{itemize}
                                    %
                                \item the tuples $((q, \rho, \Lambda, S), a, (q_i, \rho', \Lambda', S'))$ such that
                                    \begin{itemize}
                                        \item $a \in \Sigma$,
                                            %
                                        \item $\delta_T (q, a) = (q_1, \ldots, q_i, \ldots, q_m)$,
                                        \item $s = E((q, a, q_i))$,
                                        \item $\rho' = s(\rho)$,
                                        \item $\Lambda' = \emptyset$, (Intuitively, $\Lambda$ is reset.)
                                        \item let $\tau_T(q) = (P_1, P_2)$, then $S' = \big(\tau^{(ip)}_T\big)^\ast\big(\{ q_1$, $\ldots$, $q_{i - 1} \} \cup \delta^{(ip)}_T\big(S \cup \big(\tau^{(ip)}_T \setminus \Lambda\big)^\ast(P'_1), a\big)\big)$, where $P'_1 = \{q' \in P_1 \mid (q, q') \not \in \Lambda\}$;
                                            %
                                    \end{itemize}
                                \item the tuples $((q, \rho, \Lambda, S), \varepsilon, (q_i, \rho', \Lambda', S'))$ such that
                                    \begin{itemize}
                                            %
                                        \item $\tau_T(q) = ((q_1, \ldots, q_i, \ldots, q_m); \cdots)$,
                                        \item $(q, q_i) \not \in \Lambda$,

                                        \item $s = E(q, \varepsilon, q_i)$,
                                        \item $\rho' = s(\rho)$,
                                        \item $\Lambda' = \Lambda \cup \{(q, q_i)\}$,
                                        \item $S' =  S \cup \big(\tau^{(ip)}_T \backslash \Lambda \big)^\ast(\{ q_j \mid j \in [i-1], (q, q_j) \not \in \Lambda \})$;
                                            %
                                    \end{itemize}

                                \item the tuples $((q, \rho, \Lambda, S), \varepsilon, (q_i, \rho', \Lambda', S'))$ such that
                                    \begin{itemize}
                                            %
                                        \item $\tau_T (q) = ((q'_1, \ldots, q'_n); (q_1, \ldots, q_i, \ldots, q_m))$,
                                        \item $(q, q_i) \not \in \Lambda$,
                                        \item $s = E(q, \varepsilon, q_i)$,
                                        \item $\rho' = s(\rho)$,
                                        \item $\Lambda' = \Lambda \cup \{(q, q_i)\}$,
                                        \item $S' = S \cup \{q\} \cup \big(\tau^{(ip)}_T \backslash \Lambda \big)^\ast\big(\big\{q'_j \mid j \in [n], (q, q'_j) \not \in \Lambda \big\} \cup \big\{q_j \mid j \in [i-1], (q, q_j) \not \in \Lambda \big\} \big)$. (Note that here we include $q$ into $S'$, since the non-$\varepsilon$-transitions out of $q$ have higher priorities than the transition $(q, \varepsilon, q_i)$.)
                                            %
                                    \end{itemize}
                            \end{itemize}
                        \item
                            Moreover, $F_B$ is the set of states $(q, \rho, \Lambda, S) \in Q_B$ such that
                            \begin{enumerate}
                                \item $F_T (q)$ is defined,
                                \item for every $q' \in S$, $F_T (q')$ is not defined,
                                \item if $F_T(q) = \varepsilon$, then $q_{0, A}  \in F_A$, otherwise,
                                    let $F_T(q) = b_1 \cdots b_\ell$ with $b_i \in \Sigma \cup X$ for each $i \in [\ell]$, then $(\theta_1 \circ \cdots \circ \theta_\ell) \cap (\{q_{0,A}\} \times F_A) \neq \emptyset$, where for each $i \in [\ell]$, if $b_i \in \Sigma$, then $\theta_i = \delta^{(b_i)}_A$, otherwise, $\theta_i = \rho(b_i)$.
                            \end{enumerate}
                    \end{itemize}

\OMIT{
                    \begin{example}
                        Let us continue Example~\ref{pre-image-count-examp}. Suppose $\cT_{\tt extract_{decimalReg,1}} = (Q_T, \Sigma$, $X, \delta_T$, $\tau_T, E_T,  q_{0, T}, F_T)$. Then the FA defining $\cR^{-1}_{\cT_{\tt extract_{decimalReg,1}}}(\Lang(\Aut))$ constructed by using the aforementioned procedure is illustrated in Fig. \ref{fig-psst-preimage-exmp}, where the final states are those doubly boxed states, moreover, the states reachable from the state $(q_2, \emptyset, \{(q_1,q_2)\}, \{q_1\})$ are omitted because no final states are unreachable from those states, which are therefore redundant. Let us exemplify the construction by considering the state $(q_5, \{(p_0,p_0)\}, \{(q_1,q_2), (q_2,q_3), (q_3,q_4), (q_4, q_5)\}, \{q_1,q_3\})$. For each letter $\ell \in \{0,\cdots, 9\}$, the state $(q_5, \{(p_0,p_0)\}, \emptyset, \{q_1,q_2, q_3, q_4, q_5, q_6\})$ is reached from it, since $\delta_T(q_5,\ell)  = (q_5)$, $\delta_T(q_1,\ell) = (q_1)$ and $(\tau^{(ip)}_T)^*(\{q_1\}) = \{q_1, q_2, q_3, q_4, q_5, q_6\}$. The state $(q_6, \{(p_0,p_0)\}, \{(q_5,q_6)\}, \{q_1, q_2, q_3, q_4, q_5, q_6\})$ is not a final state since $q_6$ is in $\{q_1, q_2, q_3, q_4, q_5, q_6\}$ and $F_T(q_6)$ is defined.
                        %
                        \begin{figure}[ht]
                            \centering

                            \includegraphics[width = \textwidth]{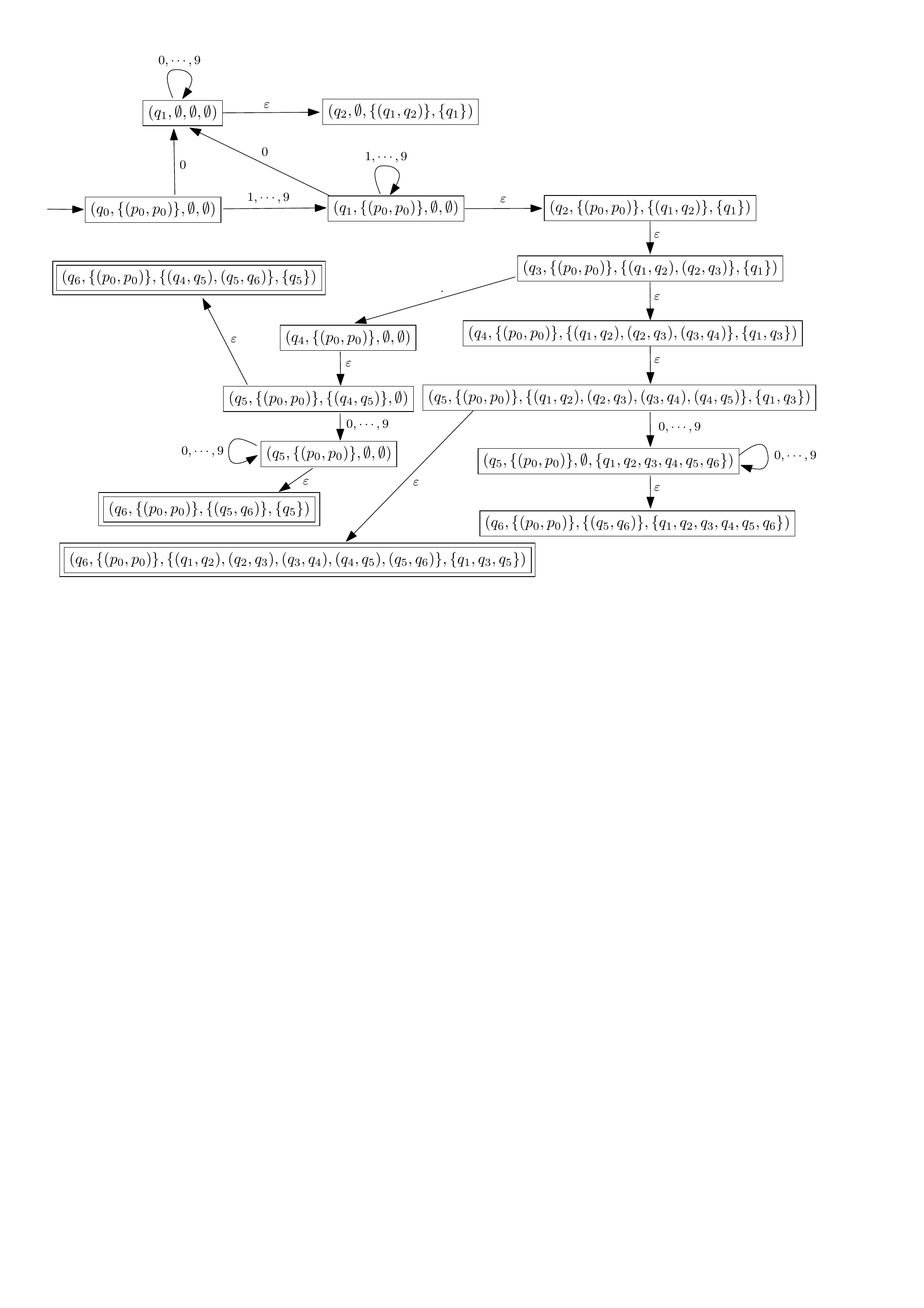}
                            \caption{The FA defining $\cR^{-1}_{\cT_{\tt extract_{decimalReg,1}}}(\Lang(\Aut))$}
                            \label{fig-psst-preimage-exmp}
                        \end{figure}
                    \end{example}

                    \begin{remark}
                        The computation of the pre-image of an FA $\cA$ under a PSST $\cT$ can be understood from a different angle: At first, $\cT$ can be turned into an equivalent a streaming string transducer $\cT'$, then the pre-image of $\cA$ under $\cT'$ is computed. Therefore, from the viewpoint of expressibility, PSSTs and SSTs are equivalent. Nevertheless, PSSTs are exponentially more succinct than SSTs, moreover, the priorities in PSSTs are very convenient for modeling the greedy/lazy semantics of Kleene star.
                    \end{remark}
}

                    \hide{
                        \subsection{Experiment result of Aratha and Expose}

                        \begin{table}[H]
                            \begin{center}
                                \begin{tabular}{l@{\quad}c@{\quad}|*{2}{c}|@{\quad}*{2}{c}}
                                    & &
                                    \multicolumn{2}{c|@{\quad}}{\textbf{ExpoSE}} &\multicolumn{2}{c@{\quad}}{\textbf{Aratha}}
                                    \\
                                    & \#Benchm. &  ~~\# Excuted ~~ & ~~\#Timeout~~ &  ~~\# Excuted ~~ & ~~\#Timeout
                                    \\\hline
                                    \textbf{Regex-ExpoSE} & 94 & 93 & 1 & \textbf{94} & 0
                                    \\
                                    && \multicolumn{2}{c|@{\quad}}{Average time: 17.65s} & \multicolumn{2}{c@{\quad}}{Average time: \textbf{5.53} s}
                                    \\\hline
                                    \textbf{Replace-JS} & 40 & 16 & 24 & \textbf{40} & 0
                                    \\
                                    && \multicolumn{2}{c|@{\quad}}{Average time: 17.65s} & \multicolumn{2}{c@{\quad}}{Average time: \textbf{2.69}s}
                                    \\\hline
                                    \textbf{Match-JS} & 38 &  30&   8& \textbf{38} & 0
                                    \\
                                    && \multicolumn{2}{c|@{\quad}}{Average time: 13.56s} & \multicolumn{2}{c@{\quad}}{Average time: \textbf{2.78}s}
                                    \\
                                \end{tabular}
                            \end{center}
                            \caption{Results of running javascript program on Aratha and Expose. The core smt solver of Aratha is \ostrich. Average time does not count timeout files and the time limit is 60s. All epxeriments were done on an Intel-Xeon-E5-2690-@2.90GHz machine, running 64-bit Linux and Java 1.8.}
                            \label{tab:exp}
                        \end{table}
                        }

\newcommand\scon{S}
\newcommand\svar{x}
\newcommand\alphabet{\Sigma}

\newcommand\rebrac[1]{\left[#1\right]}
\newcommand\fbrac[1]{\left\langle#1\right\rangle}

\newcommand\idxi{i}
\newcommand\idxj{j}
\newcommand\idxk{k}
\newcommand\numof{n}

\newcommand\tiles{\Theta}
\newcommand\nontiles{\overline{\Theta}}
\newcommand\hrel{H}
\newcommand\vrel{V}
\newcommand\inittile{\tile^0}
\newcommand\fintile{f}
\newcommand\tile{t}
\newcommand\vartile{d}
\newcommand\linlen{\numof}
\newcommand\tileheight{h}
\newcommand\expheight{m}
\newcommand\spacer{\#}
\newcommand\isnum[2]{R^{#2}_{#1}}
\newcommand\lmark{\langle}
\newcommand\rmark{\rangle}
\newcommand\sftrue[1]{\top_{#1}}
\newcommand\sffalse[1]{\bot_{#1}}
\newcommand\sfvalue{v}

\newcommand\tilesnum[1]{\tiles_{#1}}
\newcommand\hrelnum[1]{\hrel_{#1}}
\newcommand\vrelnum[1]{\vrel_{#1}}
\newcommand\inittilenum[1]{\inittile_{#1}}
\newcommand\fintilenum[1]{\fintile_{#1}}
\newcommand\nmax[1]{N_{#1}}
\newcommand\tenc[2]{[#2]_{#1}}

\newcommand\fullrow[1]{
    \tenc{\expheight}{1} \tile^{#1}_1
        \ldots
        \tenc{\expheight}{\nmax{\expheight}}
            \tile^{#1}_{\nmax{\expheight}}
}

\subsection{Tower-Hardness of String Constraints with Streaming String Transductions}
\label{sec:tower-hard}

We show that the satisfiability problem for $\strlinesl$ is Tower-hard.

\begin{theorem}
The satisfiability problem for $\strlinesl$ is Tower-hard.
\end{theorem}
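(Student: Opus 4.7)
The plan is to establish Tower-hardness via a reduction from a known Tower-hard problem on straight-line string constraints involving two-way finite-state transducers (2FTs). More precisely, path-feasibility for straight-line programs with concatenation, deterministic 2FTs, and regular membership constraints is known to be Tower-complete (see the POPL'19 work on string-manipulating programs, where a reduction from tilings of grids whose widths are towers of exponentials gives the lower bound).

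The key technical step is to exhibit, for every deterministic 2FT $T$, a copyful PSST $T'$ such that $\cR_{T'} = \cR_T$, with only polynomial blowup (or at most a single-exponential blowup, which is still harmless for a Tower-hardness statement). This step invokes the classical equivalence between deterministic 2FTs and copyful streaming string transducers: the SST uses its string variables to accumulate partial outputs produced by $T$ along each traversal of the input (indexed by crossing sequences), and assembles the final output once $T$ accepts. The resulting SST can be viewed as a PSST by copying its transition structure verbatim and assigning trivial priorities; because the SST is deterministic, exactly one accepting run exists on each input, so priorities play no discriminating role and the input-output behaviour is preserved. Since concatenation and regular constraints appear natively in $\strlinesl$, a straight-line program over 2FTs translates verbatim into a $\strlinesl$ formula of comparable size, preserving satisfiability and straight-lineness.

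The main obstacle is controlling the blowup of the 2FT-to-PSST simulation: the classical constructions can incur a single-exponential overhead, and we must check that this overhead is still compatible with Tower-hardness (which it is, since Tower allows arbitrarily high finite towers). An alternative, potentially cleaner route is a direct reduction from the $k$-fold exponential corridor-tiling problem: each level of the tower is built by one application of $\replaceall$ that expands addresses of level $i$ into addresses of level $i{+}1$, while regular constraints enforce correctness of the expansion together with horizontal and vertical tile-compatibility. Since each $\replaceall$ in the backward-reasoning calculus of Section~\ref{sect:calculus} triggers an exponential blowup in the size of the pre-image automaton, chaining $k$ such calls precisely matches the $k$-fold exponential dimension of the tiling, yielding the Tower lower bound and matching the non-elementary upper bound discussed after Theorem~\ref{thm-main}.
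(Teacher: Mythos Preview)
Your first route has a genuine gap. You propose to simulate each deterministic 2FT by a copyful PSST and then ``translate verbatim'' into $\strlinesl$. But $\strlinesl$ does \emph{not} have PSSTs as primitives; it only offers concatenation, $\extract$, $\replace$, $\replaceall$, and membership. Lemma~\ref{lem-str-fun-to-psst} goes in the \emph{opposite} direction: it shows these string functions can be compiled \emph{into} PSSTs, not that an arbitrary (copyful) PSST can be expressed \emph{using} them. So after you have built a PSST $T'$ simulating your 2FT, you still owe an argument that $T'$ (or at least the specific 2FTs used in the POPL'19 lower bound) can be realised as a composition of $\replaceall$/$\replace$/$\extract$ calls with suitable patterns and replacement strings. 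That is precisely the hard part, and it is where the real work lies.

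Your second route is closer in spirit to the paper's proof, but as written it is too vague to count, and the final sentence is a category error: the fact that the backward calculus incurs an exponential blow-up per $\replaceall$ is an \emph{upper-bound} observation about one particular algorithm; it says nothing about hardness of the problem. For the lower bound you must exhibit, for each $k$, a polynomial-size $\strlinesl$ formula whose satisfiability encodes a $k$-fold-exponential-width tiling instance. The crux you do not address is checking the \emph{vertical} matching relation, which requires comparing tiles separated by a tower-sized distance. The paper's proof solves this by exploiting the special replacement reference $\refbefore$: a single $\replaceall$ with $\refbefore$ in its replacement string can copy an unboundedly long prefix next to each match, and chaining $O(m)$ such calls (three per level, plus an evaluation phase) rewrites the encoding into a large Boolean formula whose truth is then checked by further $\replaceall$ calls. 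It is exactly the copyful behaviour induced by $\refbefore$ that drives the hardness; without $\refbefore$/$\refafter$ the resulting PSSTs are copyless and the problem drops to \textsc{ExpSpace} (see the complexity discussion after Theorem~\ref{thm-main}). Any correct proof must identify and exploit a mechanism of this kind.
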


Our proof will use tiling problems over extremely wide corridors.
We first introduce these tiling problems, then how we will encode potential solutions as words.
Finally, we will show how $\strlinesl$ can verify solutions.

\subsubsection{Tiling Problems}

A *tiling problem* is a tuple
$\tup{\tiles, \hrel, \vrel, \inittile, \fintile}$
where
    $\tiles$ is a finite set of tiles,
    $\hrel \subseteq \tiles \times \tiles$ is a horizontal matching relation,
    $\vrel \subseteq \tiles \times \tiles$ is a vertical matching relation, and
    $\inittile, \fintile \in \tiles$ are initial and final tiles respectively.

A solution to a tiling problem over a $\linlen$-width corridor is a sequence
\[
    \begin{array}{c}
        \tile^1_1 \ldots \tile^1_\linlen \\
        \tile^2_1 \ldots \tile^2_\linlen \\
        \ldots \\
        \tile^\tileheight_1 \ldots \tile^\tileheight_\linlen
    \end{array}
\]
where
$\tile^1_1 = \inittile$,
$\tile^\tileheight_\linlen = \fintile$,
and for all
$1 \leq \idxi < \linlen$
and
$1 \leq \idxj \leq \tileheight$
we have
$\tup{\tile^\idxj_\idxi, \tile^\idxj_{\idxi+1}} \in \hrel$
and for all
$1 \leq \idxi \leq \linlen$
and
$1 \leq \idxj < \tileheight$
we have
$\tup{\tile^\idxj_\idxi, \tile^{\idxj+1}_\idxi} \in \vrel$.
Note, we will assume that $\inittile$ and $\fintile$ can only appear at
the beginning and end of the tiling respectively.

Tiling problems characterise many complexity classes~\cite{BGG97}. In
particular, we will use the following facts.

\begin{itemize}
\item
    For any $\linlen$-space Turing machine, there exists a tiling problem
    of size polynomial in the size of the Turing machine, over a corridor
    of width $\linlen$, that has a solution iff the $\linlen$-space Turing
    machine has a terminating computation.

\item
    There is a fixed
    $\tup{\tiles, \hrel, \vrel, \inittile, \fintile}$
    such that for any width $\linlen$ there is a unique solution
    \[
      \begin{array}{c}
          \tile^1_1 \ldots \tile^1_\linlen \\
          \tile^2_1 \ldots \tile^2_\linlen \\
          \ldots \\
          \tile^\tileheight_1 \ldots \tile^\tileheight_\linlen
      \end{array}
    \]
    and moreover $\tileheight$ is exponential in $\linlen$. One such
    example is a Turing machine where the tape contents represent a binary
    number. The Turing machine starts from a tape containing only $0$s
    and finishes with a tape containing only $1$s by repeatedly
    incrementing the binary encoding on the tape. This Turing machine can
    be encoded as the required tiling problem.
\end{itemize}

\subsubsection{Large Numbers}

The crux of the proof is encoding large numbers that can take values
between $1$ and $\expheight$-fold exponential.

A linear-length binary number could be encoded simply as a sequence of bits
\[
    b_0 \ldots b_\linlen \in \set{0,1}^\linlen \ .
\]
To aid with later constructions we will take a more oblique approach.
Let
$\tup{\tilesnum{1}, \hrelnum{1}, \vrelnum{1}, \inittilenum{1}, \fintilenum{1}}$
be a copy of the fixed tiling problem from the previous section for
which there is a unique solution, whose length must be exponential in
the width. In the future, we will need several copies of this problem,
hence the indexing here. Note, we assume each copy has disjoint tile
sets. Fix a width $\linlen$ and let $\nmax{1}$ be the corresponding
corridor length. A \emph{level-1} number can encode values from $1$ to
$\nmax{1}$. In particular, for $1 \leq \idxi \leq \nmax{1}$ we define
\[
    \tenc{1}{\idxi} = \tile^\idxi_1 \ldots \tile^\idxi_\linlen
\]
where
$\tile^\idxi_1 \ldots \tile^\idxi_\linlen$
is the tiling of the $\idxi$th row of the unique solution to the tiling problem.

A \emph{level-2} number will be derived from tiling a corridor of width
$\nmax{1}$, and thus the number of rows will be doubly-exponential. For
this, we require another copy $\tup{\tilesnum{2}, \hrelnum{2},
\vrelnum{2}, \inittilenum{2}, \fintilenum{2}}$ of the above tiling
problem. Moreover, let $\nmax{2}$ be the length of the solution for a
corridor of width $\nmax{1}$. Then for any $1 \leq \idxi \leq \nmax{2}$ we
define
\[
    \tenc{2}{\idxi} =
        \tenc{1}{1} \tile^\idxi_1
        \tenc{1}{2} \tile^\idxi_2
        \ldots
        \tenc{1}{\nmax{1}} \tile^\idxi_{\nmax{1}}
\]
where
$\tile^\idxi_1 \ldots \tile^\idxi_{\nmax{1}}$
is the tiling of the $\idxi$th row of the unique solution to the tiling
problem. That is, the encoding indexes each tile with it's column
number, where the column number is represented as a level-1 number.

In general, a *level-$\expheight$* number is of length
$(\expheight-1)$-fold exponential and can encode numbers
$\expheight$-fold exponential in size. We use a copy
$\tup{\tilesnum{\expheight},
      \hrelnum{\expheight},
      \vrelnum{\expheight},
      \inittilenum{\expheight},
      \fintilenum{\expheight}}$
of the above tiling problem and use a corridor of width
$\nmax{\expheight-1}$. We define $\nmax{\expheight}$ as the length of
the unique solution to this problem. Then, for any $1 \leq \idxi \leq
\nmax{\expheight}$ we have
\[
    \tenc{\expheight}{\idxi} =
        \tenc{\expheight-1}{1} \tile^\idxi_1
        \tenc{\expheight-1}{2} \tile^\idxi_2
        \ldots
        \tenc{\expheight-1}{\nmax{\expheight-1}}
            \tile^\idxi_{\nmax{\expheight-1}}
\]
where
$\tile^\idxi_1 \ldots \tile^\idxi_{\nmax{\expheight-1}}$
is the tiling of the $\idxi$th row of the unique solution to the tiling problem.

Note that we can define regular languages to check that a string is a
large number. In particular
\[
    \isnum{\expheight}{\linlen} =
    \begin{cases}
        \rebrac{\tilesnum{1}}^\linlen & \expheight = 1 \\
        \rebrac{
            \isnum{\expheight-1}{\linlen}
            \tilesnum{\expheight}
        }^\ast  & \expheight > 1 \ .
    \end{cases}
\]

\subsubsection{Hardness Proof}

We show that the satisfiability problem for $\strlinesl$ is
Tower-hard. We first introduce the basic framework of solving a hard
tiling problem. Then we discuss the two phases of transductions required
by the reduction. These are constructing a large boolean formula, and
then evaluating the formula. This two phases are described in separate
sections.

\paragraph{The Framework}

The proof is by reduction from a tiling problem over an
$\expheight$-fold exponential width corridor. In general, solving such
problems is hard for $\expheight$-ExpSpace.

Let $\nmax{\expheight}$ be the width of the corridor. Fix a tiling
problem
\[
    \tup{\tiles, \hrel, \vrel, \inittile, \fintile} \ .
\]

We will compose an $\strlinesl$ formula $\scon$ with a free
variable $\svar$. If $\scon$ is satisfiable, $\svar$ will contain a
string encoding a solution to the tiling problem. In particular, the
value of $\svar$ will be of the form
\[
    \begin{array}{c}
        \fullrow{1} \spacer \\
        \fullrow{2} \spacer \\
        \ldots \\
        \fullrow{\tileheight} \spacer \ .
    \end{array}
\]
That is, each row of the solution is separated by the $\spacer$ symbol.
Between each tile of a row is it's index, encoded using the large number
encoding described in the previous section.

The formula $\scon$ will use a series of replacements and assertions to verify
that the tiling encoded by $\svar$ is a valid solution to the tiling problem.
We will give the formula in three steps.

We will define the alphabet to be
\[
    \alphabet = \tiles \cup \nontiles
\]
where $\tiles$ is the set of tiles, and $\nontiles$ is the set of characters required to encode large numbers, plus $\spacer$.

The first part is
\[
    \begin{array}{l}
        \ASSERT{x \in
            \rebrac{
                \rebrac{
                    \isnum{\expheight}{\linlen} \tiles
                }^\ast
                \spacer
            }^\ast
        }; \\
        \ASSERT{x \in \isnum{\expheight}{\linlen} \inittile}; \\
        \ASSERT{x \in \alphabet^\ast \fintile \spacer}; \\
        \ASSERT{x \in
            \rebrac{
                \rebrac{
                    \sum\limits_{\tup{\tile_1, \tile_2} \in \hrel}
                        \isnum{\expheight}{\linlen} \tile_1
                        \isnum{\expheight}{\linlen} \tile_2
                }^\ast
                \rebrac{\isnum{\expheight}{\linlen} \tiles}^?
                \spacer
            }^\ast
        }; \\
        \ASSERT{x \in
            \rebrac{
                \rebrac{\isnum{\expheight}{\linlen} \tiles}
                \rebrac{
                    \sum\limits_{\tup{\tile_1, \tile_2} \in \hrel}
                        \isnum{\expheight}{\linlen} \tile_1
                        \isnum{\expheight}{\linlen} \tile_2
                }^\ast
                \rebrac{\isnum{\expheight}{\linlen} \tiles}^?
                \spacer
            }^\ast
        }; \\
    \end{array}
\]

The first asserts simply verify the format of the value of $\svar$ is as
expected and moreover, the first appearing element of $\tiles$ in the string is
$\inittile$, and the last element is $\fintile$.

The final two assertions check the horizontal tiling relation.  In particular,
the first checks that even pairs of tiles are in $\hrel$, while the second
checks odd pairs are in $\hrel$.

The main challenge is checking the vertical tiling relation. This is
done by a series of transductions operating in two main phases. The
first phase rewrites the encoding into a kind of large Boolean formula,
which is then evaluated in the second phase.

\paragraph{Constructing the Large Boolean Formula}

The next phase of the formula is shown below and explained afterwards.
For convenience, we will describe the construction using transductions.
After the explanation, we will describe how to achieve these transductions
using $\replaceall$.
\[
    \begin{array}{l}
        \svar^1_\expheight = \ap{\psst^1_\expheight}{\svar}; \\
        \svar^2_\expheight = \ap{\psst^2_\expheight}{\svar^1_\expheight}; \\
        \svar^3_\expheight = \ap{\psst^3_\expheight}{\svar^2_\expheight}; \\
        \svar^1_{\expheight-1}
            = \ap{\psst^1_{\expheight-1}}{\svar^3_\expheight}; \\
        \svar^2_{\expheight-1}
            = \ap{\psst^2_{\expheight-1}}{\svar^1_{\expheight-1}}; \\
        \svar^3_{\expheight-1}
            = \ap{\psst^3_{\expheight-1}}{\svar^2_{\expheight-1}}; \\
        \ldots \\
        \svar^1_1 = \ap{\psst^1_1}{\svar^3_2}; \\
        \svar^2_1 = \ap{\psst^2_1}{\svar^1_1}; \\
        \svar^3_1 = \ap{\psst^3_1}{\svar^2_1}; \\
        \svar_0 = \ap{\psst_0}{\svar^3_1} \ .
    \end{array}
\]

The Boolean formula is constructed by rewriting the encoding stored in
$\svar$. We need to check the vertical tiling relation by comparing
$\tile^\idxi_\idxj$ with $\tile^{\idxi+1}_\idxj$. However, these are
separated by a huge number of other tiles, which also need to be checked
against their counterpart in the next row.

The goal of the transductions is to "rotate" the encoding so that
instead of each tile being directly next to its horizontal counterpart,
it is directly next to its vertical counterpart. Our transductions do
not quite achieve this goal, but instead place the tiles in each row next to
potential vertical counterparts. The Boolean formula contains large
disjunctions over these possibilities and use the indexing by large
numbers to pick out the correct pairs.

The idea is best illustrated by showing the first three transductions,
$\psst^1_\expheight$, $\psst^2_\expheight$, and $\psst^3_\expheight$.
We start with
\[
    \begin{array}{c}
        \fullrow{1} \spacer \\
        \fullrow{2} \spacer \\
        \ldots \\
        \fullrow{\tileheight} \spacer \ .
    \end{array}
\]
The transducer $\psst^1_\expheight$ saves the row it is currently reading.
Then, when reading the next row, it outputs each index and tile of the current
row followed by a copy of the last row. The output is shown below. We use a
disjunction symbol to indicate that, after the transduction, the tile should
match one of the tiles copied after it. Between each pair of a tile and a
copied row, we use the conjunction symbol to indicate that every disjunction
should have one match. The result is shown below. To aid readability, we
underline the copied rows. The parentheses $\fbrac{}$ are also inserted to aid future parsing.
\[
    \begin{array}{c}
        \fbrac{
            \tenc{\expheight}{1} \tile^2_1
                \lor
                \underline{\fullrow{1}}
        }
        \land
        \ldots
        \land
        \fbrac{
            \tenc{\expheight}{\nmax{\expheight}} \tile^2_{\nmax{\expheight}}
                \lor
                \underline{\fullrow{1}}
        } \\
        \land \ldots \land \\
        \fbrac{
            \tenc{\expheight}{1} \tile^\tileheight_1
                \lor
                \underline{\fullrow{\tileheight-1}}
        }
        \land
        \ldots
        \land
        \fbrac{
            \tenc{\expheight}{\nmax{\expheight}}
                \tile^\tileheight_{\nmax{\expheight}}
                \lor
                \underline{\fullrow{\tileheight-1}}
        } \ .\\
    \end{array}
\]

After this transduction, we apply $\psst^2_\expheight$. This
transduction forms pairs of a tile, with all tiles following it from the
previous row (up to the next $\land$ symbol). This leaves us with a
conjunction of disjunctions of pairs. Inside each disjunct, we need to
verify that one pair has matching indices and tiles that satisfy the
vertical tiling relation $\vrel$. The result of the second transduction
is shown below.
\[
    \begin{array}{c}
        \fbrac{
            \tenc{\expheight}{1} \tile^2_1
            \tenc{\expheight}{1} \tile^1_1
            \lor
            \ldots
            \lor
            \tenc{\expheight}{1} \tile^2_1
            \tenc{\expheight}{\nmax{\expheight}} \tile^1_{\nmax{\expheight}}
        }
            \land
            \ldots
            \land
            \fbrac{
                \tenc{\expheight}{\nmax{\expheight}} \tile^2_{\nmax{\expheight}}
                \tenc{\expheight}{1} \tile^1_1
                \lor
                \ldots
                \lor
                \tenc{\expheight}{\nmax{\expheight}} \tile^2_{\nmax{\expheight}}
                \tenc{\expheight}{\nmax{\expheight}} \tile^1_{\nmax{\expheight}}
            } \\
        \land \ldots \land \\
        \fbrac{
            \tenc{\expheight}{1} \tile^\tileheight_1
            \tenc{\expheight}{1} \tile^{\tileheight-1}_1
            \lor
            \ldots
            \lor
            \tenc{\expheight}{1} \tile^\tileheight_1
            \tenc{\expheight}{\nmax{\expheight}}
                \tile^{\tileheight-1}_{\nmax{\expheight}}
        }
            \land
            \ldots
            \land
            \fbrac{
                \tenc{\expheight}{\nmax{\expheight}}
                    \tile^\tileheight_{\nmax{\expheight}}
                \tenc{\expheight}{1} \tile^{\tileheight-1}_1
                \lor
                \ldots
                \lor
                \tenc{\expheight}{\nmax{\expheight}}
                    \tile^\tileheight_{\nmax{\expheight}}
                \tenc{\expheight}{\nmax{\expheight}}
                    \tile^{\tileheight-1}_{\nmax{\expheight}}
            } \ .\\
    \end{array}
\]
Notice that we now have each tile in a pair with its vertical neighbour,
but also in a pair with every other tile in the row beneath. The indices
can be used to pick out the right pairs, but we will need further
transductions to analyse the encoding of large numbers.

To simplify matters, we apply $\psst^3_\expheight$. This transduction removes
the tiles from the string, retaining each pair of indices where the tiles
satisfy the vertical tiling relation. When the tiling relation is not
satisfied, we insert $\sffalse{\expheight}$. We use $\lmark$, $\spacer$, and
$\rmark$ to delimit the indices. We are left with a string of the form
\[
    \bigwedge \bigvee \lmark
        \tenc{\expheight}{\idxi} \spacer \tenc{\expheight}{\idxj}
    \rmark
    \lor \sffalse{\expheight} \lor \cdots \lor \sffalse{\expheight} \ .
\]
We will often elide the $\sffalse{\expheight}$ disjuncts for clarity. They
will remain untouched until the formula is evaluated in the next section.

We consider a pair $\lmark \tenc{\expheight}{\idxi} \spacer
\tenc{\expheight}{\idxj} \rmark$ to evaluate to true whenever $\idxi = \idxj$.
The truth of the formula can be computed accordingly. However, it's not
straightforward to check whether $\idxi = \idxj$ as they are large numbers. The
key observation is that they are encoded as solutions to indexed tiling
problems, which means we can go through a similar process to the transductions
above.

First, recall that $\tenc{\expheight}{\idxi}$ is of the form
\[
    \tenc{\expheight-1}{1} \vartile^\idxi_1
    \tenc{\expheight-1}{2} \vartile^\idxi_2
    \ldots
    \tenc{\expheight-1}{\nmax{\expheight-1}}
        \vartile^\idxi_{\nmax{\expheight-1}}
\]
where we use $\vartile$ to indicate tiles instead of $\tile$.

We apply three transductions $\psst^1_{\expheight-1}$,
$\psst^2_{\expheight-1}$, and $\psst^3_{\expheight-1}$. The first copies
the first index of each pair directly after the tiles of the second
index. That is, each pair
\[
    \lmark \tenc{\expheight}{\idxi} \spacer \tenc{\expheight}{\idxj} \rmark
\]
is rewritten to
\[
    \fbrac{
        \tenc{\expheight-1}{1} \vartile^\idxj_1
        \lor
        \tenc{\expheight}{\idxi}
    }
    \land
    \ldots
    \land
    \fbrac{
        \tenc{\expheight-1}{\nmax{\expheight-1}}
            \vartile^\idxj_{\nmax{\expheight-1}}
        \lor
        \tenc{\expheight}{\idxi}
    } \ .
\]
Then we apply a similar second transduction: each disjunction is
expanded into pairs of indices and tiles. The result is
\[
    \begin{array}{c}
        \fbrac{
            \tenc{\expheight-1}{1} \vartile^\idxj_1
            \tenc{\expheight-1}{1} \vartile^\idxi_1
            \lor
            \ldots
            \lor
            \tenc{\expheight-1}{1} \vartile^\idxj_1
            \tenc{\expheight-1}{\nmax{\expheight-1}}
                \vartile^\idxi_{\nmax{\expheight-1}}
        } \\
        \land
        \ldots
        \land \\
        \fbrac{
            \tenc{\expheight-1}{\nmax{\expheight-1}}
                \vartile^\idxj_{\nmax{\expheight-1}}
            \tenc{\expheight-1}{1} \vartile^\idxi_1
            \lor
            \ldots
            \lor
            \tenc{\expheight-1}{\nmax{\expheight-1}}
                \vartile^\idxj_{\nmax{\expheight-1}}
            \tenc{\expheight-1}{\nmax{\expheight-1}}
                \vartile^\idxi_{\nmax{\expheight-1}}
        } \ .
    \end{array}
\]
The third transduction replaces with $\sffalse{\expheight-1}$ all pairs where
we don't have $\vartile^\idxj_\idxk = \vartile^\idxi_{\idxk'}$ (recall, we need
to check that $\idxi = \idxj$ so the tiles at each position should be the
same). As before, for a single pair, this leaves us with a string formula of
the form
\[
    \bigwedge \bigvee \lmark
        \tenc{\expheight-1}{\idxi'} \spacer \tenc{\expheight-1}{\idxj'}
    \rmark
    \lor \sffalse{\expheight-1} \lor \cdots \lor \sffalse{\expheight-1} \ .
\]
Again, we will elide the $\sffalse{\expheight-1}$ disjuncts for clarity as they
will be untouched until the formula is evaluated.  Recalling that there are
many pairs in the input string, the output of this series of transductions is a
string formula of the form
\[
    \bigwedge \bigvee \bigwedge \bigvee \lmark
        \tenc{\expheight-1}{\idxi'} \spacer \tenc{\expheight-1}{\idxj'}
    \rmark \ .
\]

We repeat these steps using $\psst^1_{\expheight-2}$,
$\psst^2_{\expheight-2}$, $\psst^3_{\expheight-2}$ all the way down to
$\psst^1_1$, $\psst^2_1$, $\psst^3_1$. We are left with a string formula of the
form
\[
    \bigwedge \bigvee \cdots \bigwedge \bigvee \lmark
        \tenc{1}{\idxi'} \spacer \tenc{1}{\idxj'}
    \rmark \ .
\]
Recall each $\tenc{1}{\idxi'}$ is of the form
\[
    \vartile^{\idxi'}_1 \ldots \vartile^{\idxi'}_\linlen \ .
\]
The final step interleaves the tiles of the two numbers. The result is a string
formula of the form
\[
    \bigwedge \bigvee \cdots \bigwedge \bigvee \bigwedge \vartile \vartile' \ .
\]
This is the formula that is evaluated in the next phase.

To complete this section we need to implement the above transductions using
$\replaceall$.

First, consider $\psst^1_\expheight$.  We start with
\[
    \begin{array}{c}
        \fullrow{1} \spacer \\
        \fullrow{2} \spacer \\
        \ldots \\
        \fullrow{\tileheight} \spacer \ .
    \end{array}
\]
We are aiming for
\[
    \begin{array}{c}
        \fbrac{
            \tenc{\expheight}{1} \tile^2_1
                \lor
                \fullrow{1}
        }
        \land
        \ldots
        \land
        \fbrac{
            \tenc{\expheight}{\nmax{\expheight}} \tile^2_{\nmax{\expheight}}
                \lor
                \fullrow{1}
        } \\
        \land \ldots \land \\
        \fbrac{
            \tenc{\expheight}{1} \tile^\tileheight_1
                \lor
                \fullrow{\tileheight-1}
        }
        \land
        \ldots
        \land
        \fbrac{
            \tenc{\expheight}{\nmax{\expheight}}
                \tile^\tileheight_{\nmax{\expheight}}
                \lor
                \fullrow{\tileheight-1}
        } \ .\\
    \end{array}
\]
We use two $\replaceall$s. The first uses $\refbefore$ to do the main work of
copying the previous row into the current row a huge number of times. In fact,
$\refbefore$ will copy too much, as it will copy everything that came before,
not just the last row. The second $\replaceall$ will cut down the contents of
$\refbefore$ to only the last row. That is, we first apply
$\replaceall_{\pat_1, \rep_1}$ and then $\replaceall_{\pat_2, \rep_2}$ where
\[
    \begin{array}{rcl}
        \pat_1 &=& (\tile) \\
        \rep_1 &=& \$1 \triangleleft \refbefore \triangleright
    \end{array}
\]
and $\triangleleft$ and $\triangleright$ are two characters not in $\alphabet$,
and, letting $\alphabet_\spacer = \alphabet \setminus \set{\spacer}$,
\[
    \begin{array}{rcl}
        \pat_2 &=& \triangleleft
                \alphabet_\spacer^\ast \spacer (
                    \alphabet_\spacer^\ast
                ) \spacer \alphabet_\spacer^\ast
            \triangleright \\
        \rep_2 &=& \lor \$1 \ .
    \end{array}
\]
That is, the first replace adds after each tile the entire preceding string,
delimited by $\triangleleft$ and $\triangleright$. The second replace picks out
the final row of each string between $\triangleleft$ and $\triangleright$ and
adds the $\lor$. Notice that the second replace does not match anything between
$\triangleleft$ and $\triangleright$ on the first row. In fact, we need another
$\replaceall$ to delete the first row. That is $\replaceall_{\pat_3, \rep_3}$
where
\[
    \begin{array}{rcl}
        \pat_3 &=& \rebrac{
            \alphabet \cup \set{\triangleleft, \triangleright}
            }^\ast \triangleright \alphabet_\spacer^\ast \spacer \\
        \rep_3 &=& \varepsilon \ .
    \end{array}
\]
Notice, the pattern above matches any row containing at least one
$\triangleright$. This means only the first row will be deleted as delimiters
have already been removed from the other rows.  To complete the step, we
replace all $\spacer$ with $\land$ and insert the parenthesis $\fbrac{}$ using another $\replaceall$ (and a concatenation at the beginning and the end of the string).

The transduction $\psst^2_\expheight$ uses similar techniques to the above and
we leave the details to the reader. The same is true of the other similar
transductions $\psst^1_{\idxi}$ and $\psst^2_{\idxi}$.

Transduction $\psst^3_\expheight$ (and similarly the other $\psst^2_\idxi$)
replaces all pairs
\[
    \tenc{\expheight}{\idxi} \tile^2_\idxi
    \tenc{\expheight}{\idxi + 1} \tile^1_{\idxi+1}
\]
that do not satisfy the vertical tiling relation with $\sffalse{\expheight}$,
and rewrites them to
\[
    \lmark
        \tenc{1}{\idxi'} \spacer \tenc{1}{\idxj'}
    \rmark
\]
if the vertical tiling relation is matched. This can be done in two steps:
first replace the non-matches, then replace the matches. To replace the
non-matches we use $\replaceall_{\pat_1, \rep_1}$ where
\[
    \begin{array}{rcl}
        \pat_1 &=& \sum\limits_{\tup{\tile_1, \tile_2} \notin \vrel}
            \isnum{\expheight}{\linlen} \tile_1
            \isnum{\expheight}{\linlen} \tile_2 \\
        \rep_1 &=& \sffalse{\expheight} \ .
    \end{array}
\]
For the matches we use $\replaceall_{\pat_2, \rep_2}$ where
\[
    \begin{array}{rcl}
        \pat_2 &=& \sum\limits_{\tup{\tile_1, \tile_2} \in \vrel}
            (\isnum{\expheight}{\linlen}) \tile_1
            (\isnum{\expheight}{\linlen}) \tile_2 \\
        \rep_1 &=& \langle \$1 \spacer \$2 \rangle \ .
    \end{array}
\]

The final transduction takes a string of the form
\[
    \bigwedge \bigvee \cdots \bigwedge \bigvee \lmark
        \tenc{1}{\idxi'} \spacer \tenc{1}{\idxj'}
    \rmark
\]
where each $\tenc{1}{\idxi'}$ is of the form
\[
    \vartile^{\idxi'}_1 \ldots \vartile^{\idxi'}_\linlen \ .
\]
We need to interleave the tiles of the two numbers, giving a string of the form
\[
    \bigwedge \bigvee \cdots \bigwedge \bigvee \bigwedge \vartile \vartile' \ .
\]
This can be done with a single $\replaceall_{\pat, \rep}$ where
\[
    \begin{array}{rcl}
        \pat &=&
            \langle
                (\tilesnum{1}) \ldots (\tilesnum{1})
            \spacer
                (\tilesnum{1}) \ldots (\tilesnum{1})
            \rangle \\
        \rep &=&
            \langle
                \$1 \$(\linlen + 1)
                \land \cdots \land
                \$\linlen \$(2\linlen)
            \rangle \ . \\
    \end{array}
\]

\paragraph{Evaluating the Large Boolean Formula}

The final phase of $\scon$ evaluates the Boolean formula and is shown below.
Again we write the formula using transductions and explain how they can be done
with $\replaceall$.
\[
    \begin{array}{c}
        \svar^\land_0 = \ap{\psst_0}{\svar_0}; \\
        \svar^\lor_1 = \ap{\psst^\land_0}{\svar^\land_0}; \\
        \svar^\land_1 = \ap{\psst^\lor_1}{\svar^\lor_1}; \\
        \svar^\lor_2 = \ap{\psst^\land_1}{\svar^\land_1}; \\
        \svar^\land_2 = \ap{\psst^\lor_2}{\svar^\lor_2}; \\
        \svar^\land_3 = \ap{\psst^\land_2}{\svar^\land_2}; \\
        \ldots \\
        \svar^\lor_\expheight
            = \ap{\psst^\land_{\expheight-1}}{\svar^\land_{\expheight-1}}; \\
        \svar^\land_\expheight
            = \ap{\psst^\lor_\expheight}{\svar^\lor_{\expheight}}; \\
        \svar_f
            = \ap{\psst^\land_\expheight}{\svar^\lor_{\expheight}}; \\
        \ASSERT{\svar_f \in \pat_f}
    \end{array}
\]

The first transducer $\psst_1$ reads the string formula
\[
    \bigwedge \bigvee \cdots \bigwedge \bigvee \bigwedge \vartile \vartile' \ .
\]
copies it to its output, except replacing each pair
$\vartile \vartile'$
with $\sftrue{1}$ if $\vartile = \vartile'$ and with $\sffalse{1}$ otherwise.
This is requires two simple $\replaceall$ calls.

The remaining transductions evaluate the innermost disjunction or conjunction as
appropriate (the parenthesis $\fbrac{}$ are helpful here). For example
$\psst^\lor_1$
replaces the innermost
$\bigvee \sfvalue$
with $\sftrue{1}$ if $\sftrue{1}$ appears somewhere in the disjunction and
$\sffalse{1}$ otherwise.
This can be done by greedily matching any sequence of characters from
$\set{\sftrue{1}, \sffalse{1}, \lor}$
that contains at least one $\sftrue{1}$ and replacing the sequence with $\sftrue{1}$,
then greedily matching any remaining sequence of
$\set{\sffalse{1}, \lor}$
and replacing it with $\sffalse{1}$.
The evaluation of conjunctions works similarly, but inserts $\sftrue{2}$ and $\sffalse{2}$ in the move to the next level of evaluation.

The final assert checks that $\svar_f$ contains only the character $\sftrue{\expheight+1}$
and fails otherwise.

This completes the reduction.

\subsection{Exponential Space Copyless Algorithm}
\label{sec:appendix}

We argue that, when the PSSTs are copyless, satisfiability for $\strlinesl$ can be decided in exponential space.

The algorithm in the proof of Theorem~\ref{thm-main} for $\strlinesl$ applies a subset of the proof rules in Table~\ref{tab:calculus}. These proof rules branch on disjunctions and backward propagation through transductions and concatenations. We can explore all branches through the proof tree, storing the sequence of branches chosen in polynomial space. Thus, we can consider each branch independently.

A branch consists of a backwards propagation through PSSTs and concatenations. We will argue that each state of the automata constructed can be stored in exponential space.
Since these states are combinations of states of PSSTs and finite automata constructed by earlier stages of the algorithm, it is possible to calculate the next states from the current state on the fly.
Thus, if the states can be stored in exponential space, the full algorithm will only require exponential space.

We first consider the case where we have PSSTs and FAs rather than string functions using \regexp{}.
Let $n$ be the size of the largest PSST or FA.
Let $x$ be the maximum number of variables in the PSST.
Finally, let $\ell$ be the length of the longest branch of transductions and concatenations in the proof tree (which is linear in the size of the constraint).

The FA in the pre-image of a concatenation all have the same size as the output automaton $\cA$.
The FA $\cB$ in the pre-image of a PSST $\cT$ with output automaton $\Aut$ is an automaton such that
$\Lang(\cB) = \cR^{-1}_{\cT}(\Lang(\Aut))$.
It has states of the form
$(q, \rho, \Lambda, S)$,
where
    $q$ is a state of $\cT$,
    $\rho$ is a function from variables of $\cT$ to pairs of states of $\Aut$,
    $\Lambda \subseteq \cE(\tau_T)$, and
    $S \subseteq Q_T$.
Note, because we assume $\cT$ is copyless, $\rho$ is a function to pairs of states, not to sets of pairs of states.
The space needed to store a state of of $\cB$ is hence
$\bigO(s + 2 x s + 2 n)$
where $s$ is the space required to store a state of $\Aut$.
Consequently, after $\ell$ backwards propagations, we can store the states of the automaton in space
$\bigO(2^\ell x^\ell n)$.
That is, exponential space.
This remains true when the PSST and FA may be exponential in the size of the $\regexp$.

\end{document}